\documentclass[a4paper]{llncsreport}
\pdfoutput=1
\usepackage[utf8]{inputenc}
\usepackage[T1]{fontenc}
\usepackage[ngerman,english]{babel}
\usepackage{llncsstuffthm}
\usepackage{llncsstuffset}
\usepackage{cwdefs}
\usepackage{amssymb}
\usepackage{amsfonts}
\usepackage{multirow}
\usepackage{longtable}
\usepackage{booktabs}
\usepackage{enumitem}
\usepackage{amsmath}
\usepackage{bigdelim}
\usepackage{comment}
\usepackage{xspace}
\usepackage{booktabs}
\usepackage{graphicx}
\usepackage{setspace}
\usepackage[style=alphabetic,maxnames=5,backend=biber]{biblatex}
\usepackage{bookmark}
\usepackage[titles]{tocloft}
\usepackage{hyperref}
\hypersetup{
hyperfootnotes=false,
breaklinks=true,
hidelinks,
linktoc=all,
colorlinks=false
}
\usepackage{pdfpages}

\makeatletter
\def\overparen#1{\mathop{\vbox{\ialign{##\crcr\noalign{\kern3\p@}
      \downparenthfill\vspace{-3.5pt}\crcr\noalign{\kern3\p@\nointerlineskip}
      $\hfil\displaystyle{#1}\hfil$\crcr}}}\limits}
\def\underparen#1{\mathop{\vtop{\ialign{##\crcr
      $\hfil\displaystyle{#1}\hfil$\crcr\noalign{\kern3\p@\nointerlineskip}
      \vspace{-3.5pt}\upparenthfill\crcr\noalign{\kern3\p@}}}}\limits}
\def\downparenthfill{$\m@th\braceld\leaders\vrule\hfill\bracerd$}
\def\upparenthfill{$\m@th\bracelu\leaders\vrule\hfill\braceru$}
\makeatother

\title{Heinrich Behmann's Contributions to\\ Second-Order Quantifier 
Elimination\\from the View of Computational Logic}

\author{Christoph Wernhard\\[-8pt]}
\institute{Technische Universit\"{a}t Dresden}

\AtEveryBibitem{
  \clearfield{doi}
}

\NewBibliographyString{bestandsbildner}
\DefineBibliographyStrings{english}{%
                 bestandsbildner  = {Bestandsbildner},
               }

\defbibheading{mybibheading}[\refname]{%
\section*{#1}%
\markboth{\small \hfill}{\small References \hfill}}

\setlength{\bibitemsep}{3pt}
\bibliography{bibelim01}

\newcounter{equivcounter}
\newcounter{entailcounter}

\newcommand{\eqlab}[1]
{\refstepcounter{entailcounter}%
\refstepcounter{equivcounter}%
\label{#1}%
\textnormal{EQ~\arabic{equivcounter}}}

\newcommand{\enlab}[1]
{\refstepcounter{equivcounter}%
\refstepcounter{entailcounter}%
\label{#1}%
\textnormal{(EN~\arabic{entailcounter})}}

\newcommand{\de}[1]{\emph{\foreignlanguage{ngerman}{#1}}}
\newcommand{\dename}[1]{\emph{\foreignlanguage{ngerman}{#1}}}
\newcommand{\deq}[1]{\emph{``\foreignlanguage{ngerman}{#1}''}}

\newcommand{\enq}[1]{``#1''}

\hyphenation{Eli-mi-na-tions-haupt-form}
\hyphenation{ma-nu-script}

\newcommand{\ex}{\exists}
\newcommand{\all}{\forall}

\newcommand{\binop}{\otimes}

\newcommand{\dual}{\f{dual}}

\newcommand{\cca}[1]{\overline{< #1}}

\newcommand{\behfalse}{{\curlywedge}}
\newcommand{\behtrue}{{\curlyvee}}

\newcommand{\lskip}{\vspace{0.8ex} \noindent}
\newcommand{\void}[1]{}

\newcommand{\QMONE}{$\f{QMON}_=$\xspace}
\newcommand{\MONE}{$\f{MON}_=$\xspace}
\newcommand{\MON}{$\f{MON}$\xspace}
\newcommand{\QMON}{$\f{QMON}$\xspace}

\newcommand{\mref}[1]{[Ma\-nu\-script \ref{#1}]}
\newcommand{\cref}[1]{[Letter \ref{#1}]}

\newcommand{\geqzero}{$\geq 0$\xspace}
\newcommand{\geqone}{$\geq 1$\xspace}

\DeclareRobustCommand*{\entailedby}{\Relbar\joinrel\mathrel{|}}

\newcommand{\hparen}{\hspace{0.9ex}}
\newcommand{\indebe}{\hphantom{\ex \varphi \all x \all y\,}}
\newcommand{\indebex}{\hphantom{\ex \varphi \all x \all y\, (}}

\begin{document}

\includepdf[pages={1-2}]{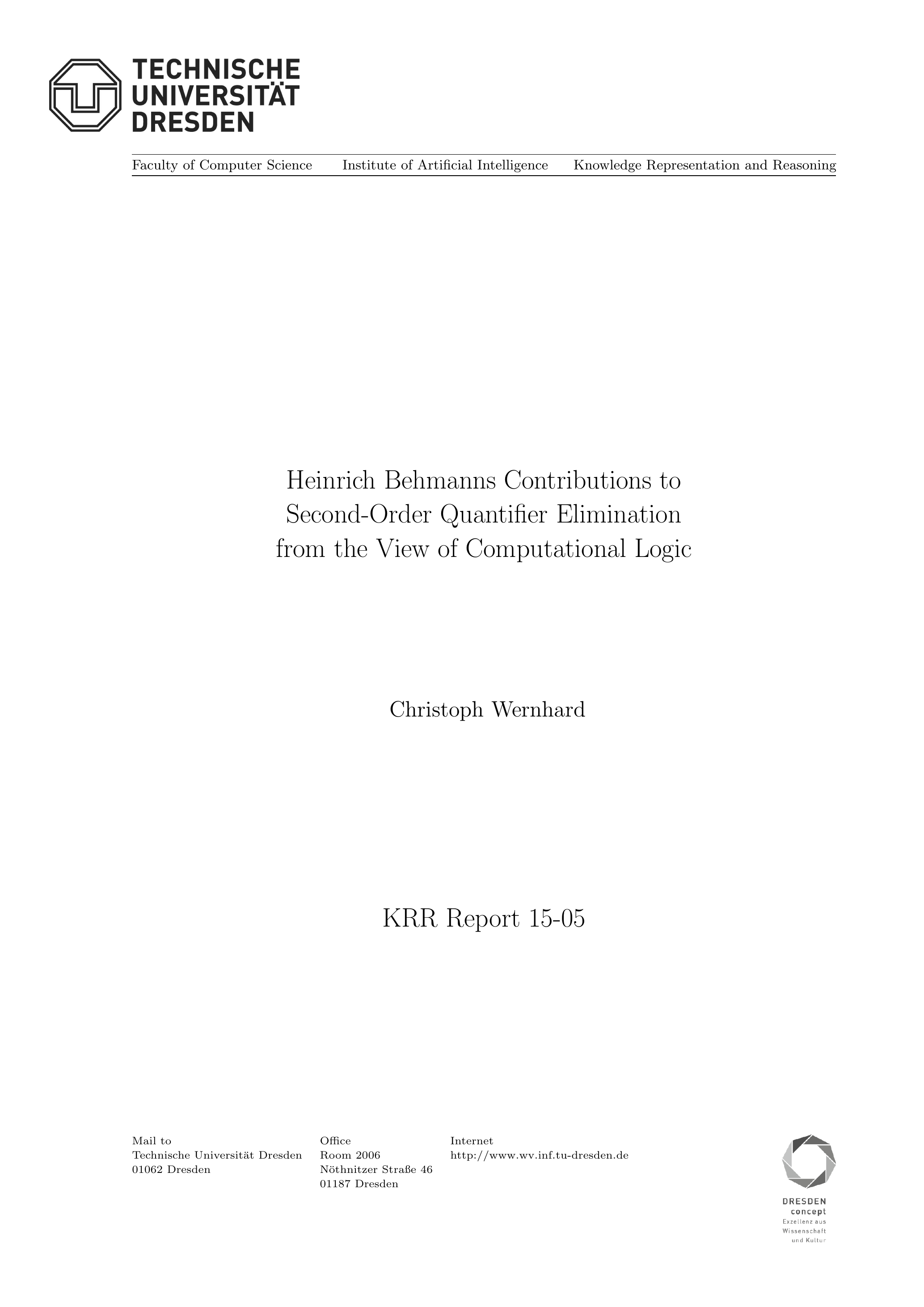}
\setcounter{page}{1}

\maketitle

\begin{abstract}
For relational monadic formulas (the Löwenheim class) sec\-ond-order
quantifier elimination, which is closely related to computation of uniform
interpolants, projection and forgetting -- operations that currently receive
much attention in knowledge processing -- always succeeds. The decidability
proof for this class by Heinrich Behmann from 1922 explicitly proceeds by
elimination with equivalence preserving formula rewriting.  Here we
reconstruct the results from Behmann's publication in detail and discuss
related issues that are relevant in the context of modern approaches to
second-order quantifier elimination in computational logic.
In addition, an extensive documentation of the letters and manuscripts in
Behmann's bequest that concern second-order quantifier elimination is given,
including a commented register and English abstracts of the German sources
with focus on technical material.  In the late 1920s Behmann attempted to
develop an elimination-based decision method for formulas with predicates
whose arity is larger than one.  His manuscripts and the correspondence with
Wilhelm Ackermann show technical aspects that are still of interest today and
give insight into the genesis of Ackermann's landmark paper
\name{Untersuchungen über das Eliminationsproblem der mathematischen Logik}
from 1935, which laid the foundation of the two prevailing modern approaches
to second-order quantifier elimination.
\end{abstract}

\setcounter{tocdepth}{2}

\noindent
Revision: December 19, 2017

\newpage

\newcommand{\marktocheader}
{\markboth{\small \hfill Table of Contents}%
{\small Table of Contents \hfill}}

\cftsetindents{section}{0em}{1.5em}
\cftsetindents{subsection}{1.5em}{2.3em}
\renewcommand{\cftpartfont}{\bf\normalsize}
\renewcommand{\cftpartpagefont}{\bf\normalsize}
\renewcommand{\cftpartpresnum}{Part \marktocheader}
\renewcommand{\cftpartafterpnum}{\vspace{1.5ex}}
\makeatletter
\renewcommand{\@tocrmarg}{2.55em plus1fil}
\makeatother

\tableofcontents

\newpage
\setcounter{chapter}{0}

\part{Introduction -- Contributions of Behmann's Habilitation Thesis
 from the View of Computational Logic}
\label{part-introduction}

\section{Introduction to Part~\ref{part-introduction}}
The Habilitation thesis of Heinrich Behmann (1891--1970), published in 1922 in
\de{Mathematische Annalen} \cite{beh:22}, belongs, along with works by
Löwenheim \cite{loewenheim:15} and Skolem \cite{skolem:19,skolem:20}, to the
standard references on the decision problem for relational monadic first-order
formulas (the Löwenheim class),\footnotemark\ and also for the extension of
this class by second-order quantification upon predicates.  Early such
references of \cite{beh:22} include \cite[p.~77]{hilbert:ackermann:28},
\cite[p.~95]{hilbert:ackermann:second:38}, and
\cite[p.~200]{hilbert:bernays:34}, where also the methods by Behmann are
reproduced \cite[p.~193ff and p.~200ff]{hilbert:bernays:34}.  A detailed
historic account is provided in Church's \name{Introduction to Mathematical
  Logic} \cite[\S49, in particular p.~293]{church:book}. The book by Church
also presents variants of methods from \cite{beh:22}.

\footnotetext{Relational monadic formulas are first-order formulas with only
  unary predicates and no functions other than constants.}

Behmann's early work up to 1921 is presented in historic context by Mancosu
\cite{mancosu:behmann:99}. The focus there is the dissertation from 1918, but
various issues concerning \cite{beh:22}, in particular its embedding into the
context of the Hilbert school, are also documented. The historic analysis of
the development of logics in the period 1917--23 by Zach
\cite{zach:99:completeness} describes Behmann's contributions.  In particular,
it is observed there that Behmann's talk on 10 May 1921 
at the \de{Mathematische Gesellschaft} in Göttingen on the topic of his
Habilitation thesis seems the first documented use of the term
\de{Entscheidungsproblem} (\name{decision problem})
\cite[p.~363]{zach:99:completeness}. A transcript and English translation of
this talk, along with a comprehensive introduction, has recently been
published by Mancosu and Zach \cite{mancosu:zach:2015}.  The first published
explicit statement of the decision problem seems to be in
\cite[p.~166]{beh:22} (see \cite{zach:99:completeness} for an English
translation of the relevant passages).

Behmann reduces the decision problem for relational monadic formulas to the
second-order quantifier elimination problem, that is, the problem to compute
for a given second-order formula an equivalent first-order formula.  In
computational logic, second-order quantifier elimination \cite{soqe}, with
variants called \name{uniform interpolation}, \name{forgetting} and
\name{projection}, is today an area with a wide variety of applications and
techniques.\footnote{As for example
reflected in the \name{SOQE~2017} workshop \cite{soqe2017}.}
Some of today's advanced methods for second-order quantifier
elimination are explicitly based on the so-called \name{Ackermann's Lemma},
due to Wilhelm Ackermann \cite{ackermann:35}, and involve equivalence
preserving rewriting of formulas as key technique
\cite{dls}.
Although Behmann actually uses such rewriting techniques and gave with
\cite{beh:22} at that time Ackermann the impetus to investigate the
elimination problem -- as Ackermann courteously remarked in a letter to
Behmann dated 29~Oct 1934 \cref{corr:ab:1934:10:29}\footnote{Letters and
  manuscripts in Behmann's bequest are listed in Part~\ref{part-app-sources}.}
(see also Sect.~\ref{sec-corr-elim-1934}) -- it appears that \cite{beh:22} so
far has been largely overlooked in the context of second-order quantifier
elimination in computational logic, such as the monograph \cite{soqe}, with
the exception of historic references in
\cite{craig:2008,schmidt:2012:ackermann} and a recent paper by the present
author \cite{cw-relmon}.

In this report we provide a detailed technical reconstruction of the methods
and results from \cite{beh:22} (Part~\ref{part-main-method}) and discuss
various related issues, of which many are still today of relevance in
computational logic (Part~\ref{part-further}).  We summarize follow-up works
by Behmann himself in unpublished manuscripts and in the correspondence with
Wilhelm Ackermann, which mainly concerns elimination in presence of predicates
with arity larger than one (Part~\ref{part-polyadic}). This is supplemented by
commented listings of publications by Behmann and documents in his bequest
that are related to second-order quantifier elimination
(Part~\ref{part-app-sources}). The correspondence with Wilhelm Ackermann, as
far as archived in Behmann's bequest in the \dename{Staatsbibliothek zu
  Berlin}, is registered there completely. Part~\ref{part-conclusion}
concludes the report.

\label{page-paradoxes}
We do not address another major concern of Behmann that is related to
computational logic: his approach to resolve paradoxes, based on the idea that
these emerge from unjustified elimination of shorthands
(\dename{Kurzzeichen}), leading to a variant of lambda conversion and
restricted quantifiers \cite{beh:31:widersprueche,beh:59:limitierte}. He
discussed his approach, which is briefly mentioned by Curry and Feys
in \cite[p.~4, 9, 260f]{curry:combinatoric:1}, in correspondence with, among
others, Ackermann, Bernays, Church, Gödel and Ramsey.

As already indicated, Behmann's Habilitation thesis \cite{beh:22} has so far
mainly been considered in the context of the history of the decision problem.
However, from the point of view of computational logic it is relevant also in
various further respects, not merely for historical reasons, but there are
also technical aspects that are still of significance today, for example, the
successful termination of second-order quantifier elimination methods on
relational monadic formulas \cite{cw-relmon}, as well as methodical aspects,
such as the roles of normal forms.  The remaining sections of this part
discuss these contributions.

\newcounter{contribcounter}

\newcommand{\contrib}[1]{\section{#1}}

\contrib{Specification of the Decision Problem} As already mentioned, the
first explicit statement of the decision problem seems to be in \cite{beh:22}.
For a translation of the relevant passages and discussions see
\cite{zach:99:completeness,mancosu:zach:2015}.

\sectionmark{Contributions of Behmann's
  Habilitation Thesis}

\contrib{Solution of the Decision Problem for Relational Monadic First- and
  Second-Order Formulas with Equality} 
\label{sec-contrib-decision}
As indicated above, this result was
first obtained by Löwenheim, whereas Skolem and Behmann provided further
proofs.  Like Behmann's method, the techniques of Löwenheim and Skolem also
apply if predicate quantification is considered \cite[p.~293]{church:book}.
As further noted in \cite[p.~293]{church:book}, Behmann's method to handle
equality is similar to that of Skolem in some important respects, but seems to
have been found independently.  Behmann himself describes this in a letter
dated 27 December 1927 to Heinrich Scholz \cite[Kasten 3, I~63]{beh:nl},
brought to attention in \cite{mancosu:behmann:99} with excerpts published in
\cite{mancosu:zach:2015} and below -- see p.~\pageref{page-scholz-letter} and
\pageref{page-scholz-lengthy}.  Skolem's proof is outlined from the
perspective of elimination in \cite{craig:2008}.  A methodical aspect of
\cite{beh:22} seems worth mentioning: The decision problem is attacked there
by investigating decidability explicitly for specific \emph{syntactically
  characterized formula classes} (\de{Aussagenbereiche}).

\contrib{Specification of the Problem of Second-Order Quantifier Elimination}
\label{sec-contrib-problem-of-soqe}
As described in \cite{craig:2008}, elimination problems play an important role
in the works of Boole \cite{boole:laws} and Schröder \cite{schroeder}.  It
seems, however, that the problem of second-order quantifier elimination has not
been fully understood and explicitly stated accordingly before \cite{beh:22}.
The second-order quantifier elimination problem is called there a \name{new
  ``elimination problem''} (\de{neue[s] \glqq Eliminationsproblem\grqq}) and
is explicated in the context of the instance that occurs first in that paper,
the elimination of a unary predicate with respect to a formula of relational
monadic first-order logic without equality.\footnotemark\ This specification
can be paraphrased as follows: Given is a formula $\all p\, F$ or $\ex
p\, F$, where $p$ is a unary predicate and $F$ is of monadic first-order logic
without equality. The objective is now to find a, or -- as can be said more
determined -- the (first-order) formula that is equivalent to the given
formula -- with respect to the predicates with exception of $p$, the constants
and the free variables in $F$ -- but does not contain $p$ any more.  Behmann
also gives a second more semantic view on the elimination problem: The
(first-order) relationship among the predicates (with exception of $p$),
constants and free variables in $F$ should be determined that is a necessary
and sufficient condition for $F$ being true for arbitrary predicates~$p$ or
for at least one predicate~$p$, respectively.

Following Schröder \cite{schroeder}, Behmann calls the formula sought after
\name{resultant} (\de{Resultante}). In the context of his elimination method,
Behmann speaks in early manuscripts from 1921 of \name{separation}
(\de{Aussonderung}) instead of \de{Elimination}. For instance, on p.~13 in
\mref{man:beh:21:ms:k9:37}, a method description is headed
\de{Eliminationsverfahren. (Aussonderung?)}.  In \mref{man:beh:21:carbon}, the
manuscript for \cite{beh:22}, on p.~40, the specification of the elimination
problem quoted in footnote~\ref{foot-elimprob}
 uses \de{\glqq Aussonderungsproblem\grqq} in place of
\de{\glqq Eliminationsproblem\grqq}, on p.~45 the originally typed term
\name{Aussonderungshauptform} is altered by a handwritten annotation to
\name{Eliminationshauptform} (German for \name{main form for elimination}).

\footnotetext{\label{foot-elimprob}\selectlanguage{ngerman}\cite[p.~196f]{beh:22}:
  Ich möchte dieses neue \glqq Eliminationsproblem\grqq\ in der folgenden
  Weise bestimmter fassen: Gegeben ist eine Aussage
\[\varphi F_{\varphi f g a b} \text{ oder } 
  \overline{\varphi} F_{\varphi f g a b},\] wo der Operand eine Aussage
  unseres früheren Bereiches $A$ ist und, abgesehen von $\varphi^x$, nur
  konstante Eigenschaften, natürlich in beliebiger endlicher Anzahl, enthält
  -- da sie nämlich innerhalb des obigen Ausdrucks nicht durch Operatoren
  vertreten sind, haben wir sie eben, solange wir unser Augenmerk nur auf
  diesen richten, als konstant anzusehen --, außerdem möglicherweise noch
  konstante und veränderliche Individuen; diese letzten natürlich zugleich als
  Operatoren, daher oben nicht angedeutet. (Das Auftreten von Aussagen $p, q,
  \ldots$ als Grundbestandteile [im Original: Grundbestandteilen] hat im
  gegenwärtigen Zusammenhang kaum praktische Bedeutung.) \emph{Es soll nun
    eine oder } -- wie wir bestimmter sagen dürfen -- \emph{diejenige Aussage
    (erster Ordnung) gefunden werden, die der gegebenen äquivalent ist} --
  natürlich für irgendwelche Werte von $f, g, a, b$ --\emph{, den Begriff
    $\varphi^x$ jedoch nicht mehr enthält.}  Mit anderen Worten: \emph{Es soll
    diejenige Beziehung (erster Ordnung) zwischen den Konstanten $f, g, a, b$
    ermittelt werden, deren Bestehen die notwendige und hinreichende Bedingung
    dafür ist, daß $F_{\varphi f g a b}$ je nachdem für beliebige
    Begriffe~$\varphi^x$ oder für mindestens einen Begriff~$\varphi^x$ eine
    richtige Aussage darstellt.}\selectlanguage{english}}

Behmann \cite[p.~218ff]{beh:22} remarks that Schröder distinguishes between
``elimination problem'' and ``summation problem'', which are in Behmann's view
actually identical.  Schröder's elimination problem is, in Behmann's words, to
find a condition for the satisfiability of $F$ that is free from
$p$.\footnote{Another interpretation of Schröder's concept of the elimination
  problem is to find a consequence of $F$ that has exactly those formulas as
  consequences which are consequences of $F$ and do not contain $p$.  See, for
  example, \cite[p.~200]{schroeder:2.1}.}  Schröder's ``summation problem''
is, according to Behmann, to find a formula that is equivalent to $\ex p\,
F$ and is free from~$p$.  As Behmann describes, Schröder observed the
equivalence of both problems in a note inserted during printing of the third
volume of his \de{Vorlesungen über die Algebra der Logik}
\cite[p.~489--490]{schroeder:3}, whereas the actual identity of both problems
escaped him through his concern for analogy with numerical algebra. Behmann
concludes with commenting that this is a strange evidence for the extent in
which for Schröder content receded in favor of form.\footnotemark%

\footnotetext{In modern view, a divergence between syntactic and semantic
  conceptions of elimination actually arises: For example, the quantified
  Boolean formula $\ex p\, (q \land p)$ is equivalent to the propositional
  formula $q \land (p \lor \lnot p)$, where the Boolean quantifier has been
  ``eliminated''. However, the formula still contains syntactically the
  formerly quantified atom $p$, although, from a semantic point of view,
  redundantly.  The characterization of such redundancy is not always evident,
  for example, for modern variants of second-order quantifier elimination,
  where it its possible to ``quantify upon'' just a particular ground atom.
  On the other hand, for first-order logic, both views coincide in a sense: As
  noted in \cite[Introduction]{otto:interpolation:2000}, the construction of
  interpolants according to Craig's interpolation theorem can be applied to
  compute for a given first-order formula that is known to be
  \emph{equivalent} to a formula expressed with a certain signature
  (predicates, functions and constants), an equivalent formula that is
  \emph{syntactically} in that signature. The existence of an equivalent
  formula in a given signature can be expressed as validity.}

Although Behmann explicitly formulates the problem of eliminating second-order
quantifiers and reduces the decision problem for relational monadic formulas
to that elimination problem, he remains skeptical on whether the
generalization to predicates with arbitrary arities and higher-order concepts
can still be based on the elimination problem.  His argument in
\cite[p.~226f]{beh:22} is summarized in Part~\ref{part-polyadic},
p.~\pageref{page-cite-limitations-of-elimination}. In his letter to Heinrich
Scholz \label{page-scholz-letter} dated 27 December 1927 \cite[Kasten 3,
  I~63]{beh:nl}, answering Scholz's question about who has written before him
and, in particular, who has written at first, about the decision problem
(Postcard from Scholz to Behmann, dated 19 December 1927, \cite[Kasten 3,
  I~63]{beh:nl}), he relates the decision problem to the elimination problem
and remarks that the latter has been treated \enq{first by the Americans, in
  particular Peirce, and later with particular love and persistence by
  Schröder, and eventually found a specialist in Löwenheim, who wrote several
  treatises about it in the \dename{Math.\ Annalen}}. Behmann continues that
the most important of these works by Löwenheim is \name{Über Möglichkeiten im
  Relativkalkül} \cite{loewenheim:15} and mentions that Löwenheim came there
already to important partial results of his work \cite{beh:22}, \enq{however
  -- in a presentation that is neither mathematically strict nor sufficiently
  comprehensible, such that I have construed these properly only after
  publication of my own paper.}\footnotemark

\footnotetext{An excerpt of the German letter is quoted in
  Sect.~\ref{sec-corr-further}, p.~\pageref{page-scholz-lengthy}.  Further parts
  of that letter are summarized in \cite{mancosu:zach:2015}.}

\sectionmark{Solution of the Second-Order Quantifier Elimination Problem}
\contrib{Solution of the Second-Order Quantifier Elimination Problem for
  Relational Monadic Formulas with Equality}

Behmann \cite{beh:22} presents an effective method for eliminating the
second-order quantifiers in a given relational monadic formula with equality
and with predicate quantification.  (The earlier decidability proofs in
\cite{loewenheim:15} and \cite{skolem:19} mentioned above are similarly based
on second-order quantifier elimination in monadic logics.)  Behmann's method
terminates after a finite number of steps with an equivalent relational
monadic \emph{first-order} formula.  The result formula might be with equality
also in cases where the given formula is without equality.  For a given
formula in which all predicates are quantified, the result formula just
expresses constraints on the domain cardinality: The formula is either true
for all domain cardinalities with exception of a finite number or false for
all domain cardinalities with exception of a finite number.  Obviously, valid
and unsatisfiable formulas are special cases of such formulas.

The key technique of Behmann's procedure is to propagate quantifiers inward,
also for the price of expensive operations such as distribution of conjunction
over disjunction.  As suggested by Behmann, we call the resulting form
\name{innex}.\footnote{\de{Die \label{foot-innex}
Endform meines Reduktionsverfahrens wird
    gelegentlich (sprachlich wenig glücklich) als "{}kontrapränex"{}
    bezeichnet; ich ziehe die Benennung "{}innex"{} vor.} Letter from Heinrich
  Behmann to Alonzo Church, 30 January 1959 \cite[Kasten~1, I~11]{beh:nl}.}
This inward propagation is is applied to quantifiers upon individual variables
as well as to quantifiers upon predicates.  A detailed presentation of
Behmann's method is the topic of Part~\ref{part-main-method}, further aspects
of the method will be discussed in Part~\ref{part-further}.

\contrib{Clarification of Schröder's Early Results on Elimination} Most issues
solved in \cite{beh:22} have been raised by Schröder in his \name{Vorlesungen
  über die Algebra der Logik} \cite{schroeder}. Their solutions are developed
by Behmann in a more modern representational framework, with a dedicated
notation for logics, not obfuscated by the aim for correspondence to numeric
algebra.  The work by Behmann seems thus also useful as a guide to Schröder's
results, complementing the outline in \cite{craig:2008}.  We already sketched
Behmann's discussion of Schröder's notion of elimination in
Sect.~\ref{sec-contrib-problem-of-soqe}.  The precise relationship of
Behmann's core result to Schröder's earlier partial result, in particular, his
\name{,,crude resultant''} (\de{\glqq Resultante aus dem Rohen\grqq}), as
described Behmann will be shown in Sect.~\ref{sec-elim-noeq}.

\contrib{Methodology: Computation by Equivalence Preserving Rewriting}

With the requirement to decide a statement after a finite number of steps, the
\de{Entscheidungsproblem} inherently involves some notion of
computation. Computation steps are expressed in \cite{beh:22} as equivalence
preserving rewriting steps of logical formulas, justified by a collection of
formula equivalences. A method starts with a given formula. At each step, a
subformula occurrence is replaced with an equivalent formula, according to
some computation rule (\de{Rechenregel}), that is, an equivalence from the
collection, oriented either from left to right or from right to left.
Computation terminates if the formula has reached a specific syntactic form.
As Behmann states \cite[p.~167]{beh:22}, the particular collection of
equivalences he gives is motivated not by finding a small set of orthogonal
axioms, but by satisfying the needs of practical computation (\de{Bedürfnisse
  des praktischen Rechnens}).\footnote{See
  \cite[p.~351f]{zach:99:completeness} for an English translation of the
  relevant section from \cite{beh:22}.}

The foundation on equivalence, a semantic property, ensures that equivalence
to the originally given formula is maintained as an invariant throughout the
computation.  Today, the representation by rewriting rules or transition
systems that preserve semantic properties is a well established elegant way to
represent computational methods such that they can be analyzed.
Second-order quantification allows to represent also notions like
equi-satisfiability (two formulas are either both satisfiable or
unsatisfiable) as equivalence of formulas, making the preservation of
equivalence a particularly useful invariant.

The modern so-called \name{direct methods} or \name{methods of the Ackermann
  approach}, for second-order quantifier elimination \cite{soqe}, initiated by
\cite{dls:early,dls}, typically founded on Ackermann's Lemma
\cite{ackermann:35}, quite literally follow Behmann's template of applying
equivalence preserving formula rewritings that include various formula
conversions and elimination steps.

\contrib{Methodology: Normal Forms}

The methods introduced in \cite{beh:22} essentially operate by converting
given logical formulas to equivalent formulas in specific normal forms, that
is, formulas with specific syntactic properties.  Conjunctive and disjunctive
normal forms are used dually there, the innex normal forms for quantifiers
upon instance variables as well as upon predicates are developed and counting
(or cardinality) quantifiers are applied (preceded in
\cite{loewenheim:15,skolem:19} -- see \cite{craig:2008}). Behmann's method
rewrites formulas to a certain intermediate form that allows predicate
elimination according to a simple scheme.

The syntax for formulas used in \cite{beh:22} is based on disjunction,
conjunction and negation, exposing the symmetry and duality inherent in these
operations, which is, as criticized by Behmann, obscured in notations based on
implication used by Frege and in the \name{Principia Mathematica}.  For
monadic formulas in a certain normal form, Behmann introduces a special
notation (\de{Klassensymbolik}), where individual variables are suppressed.

In modern computational logic, normal forms play various roles. Systems
typically operate on inputs in conjunctive normal form, obtained from
preprocessors.  Normal forms that allow to perform certain operations in an
inexpensive way are investigated as target formats for knowledge compilation.
The preservation of a certain normal form by calculi is applied to ensure that
outputs are in a certain fragment of first-order logic or can be mapped to
some other logic such as a modal or description logic (the method of
\cite{ks:2013:frocos} for second-order quantifier elimination in description
logics can be considered as an example).  Normal forms can provide
representations of formulas that facilitate to understand their meaning, which
is useful in the development of techniques as well as to present results to
end users.  This aspect of getting an overview on a solution in its totality
has been a continuous concern for Behmann, for example in his comments to
Ackermann's resolution-based elimination technique, summarized below in
Part~\ref{part-polyadic}, or in his later work on the solution problem
(\name{Auflösungsproblem})
\cite{beh:50:aufloesungs:phil:1,beh:51:aufloesungs:phil:2}.

The paradigm of ``model computation'' or ``answer set programming'' in
automated reasoning and logic programming can be considered as a variant of
normal form computation: Such systems enumerate data structures that represent
models. Their solutions can be regarded as a normalized representation of the
input.  A particular special case is enumerating all conjunctive clauses of a
disjunctive normal norm.

The so-called quantifier elimination approach in early model theory of the
1920s is another area where variants of normal form computation play an
essential role, as will be discussed below in Sect.~\ref{sec-nfqe-mfe}.  The
integration of such quantifier elimination methods for decidable theories into
reasoning systems is currently an area of extensive research in automated
deduction, motivated in particular by applications in software and hardware
verification.  Also the evaluation of relational database queries can be
considered as elimination of first-order quantifiers
\cite{kanellakis:95,revesz}.

\part{Behmann's Decision and Elimination Method for
Relational Monadic Formulas}
\label{part-main-method}

\section{Introduction to Part~\ref{part-main-method}}

This part focuses on the main technical material in \cite{beh:22} from the
point of view of second-order quantifier elimination as considered in
computational logic. Modern notation is used throughout, on occasion a
concordance to Behmann's original labeling and various notations, which have
merits on their own, are given. We aim here at a general formalization, where
Behmann sometimes introduces techniques only with exemplary cases that provide
intuition and indicate the general case. Also the structuring of the
presentation deviates from the original paper, aiming at the modern reader.

The rest of this part is structured as follows: In Sect.~\ref{sec-notation}
notation and terminology are introduced and general remarks on the
presentation are given.  Section~\ref{sec-overview} provides an overview on
Behmann's results, proceeding in a ``top-down'' fashion where the more
involved methods and proofs are only sketched.  With a collection of
equivalences and entailments for use in formula rewriting and considerations
on deciding and normalizing propositional logic, Sect.~\ref{sec-starting}
paves the way for the more thorough presentation of Behmann's techniques in
the subsequent sections. First, the general case of monadic formulas
\emph{with equality} is considered. Sect.~\ref{sec-ccnf} describes a
normalization method for such formulas.  The second-order quantifier
elimination method, which applies to the normalized formulas, is then shown in
detail in Sect.~\ref{sec-elim-eq}.  A simplified variant of the general method
is then considered in Sect.~\ref{sec-elim-noeq}.  It applies just to the case
without equality, but facilitates discussion of other issues, in particular
the correspondence to earlier works by Schröder, as shown by Behmann.

\section{Notational Conventions and Preliminary Remarks}
\label{sec-notation}

\subsection{Syntax}

We briefly write \defname{predicate}, \defname{function} and
\defname{constant} for \defname{predicate symbol}, \defname{function symbol}
and \defname{constant symbol}, respectively.  For second-order logic with
equality, that is first-order logic with equality and extended by
quantification upon predicates and functions, we use the following syntactic
notation: An \defname{atomic formula}, or briefly \defname{atom}, is either of
the form $pt_1\ldots t_n$, where $n \geq 0$ and where $p$ is a predicate of
arity $n$, or of the form $t_1 = t_2$.  In both cases, each subscripted $t$ is
a \defname{term}, that is an \defname{individual variable} or of the form
$ft_1\ldots t_n$, where $n \geq 0$, the $t_i$ are terms and $f$ is a
\defname{function} of arity $n$.\footnotemark\ A nullary function is also
called \defname{constant}.  An atom of the form $t_1 = t_2$ is called
\defname{equality atom}. Formulas and classes of formulas which may
contain/may not contain equality atoms are called briefly \defname{with
  equality} or \defname{without equality}, respectively.  If we speak of
\defname{first-order} formulas, unless explicitly indicated otherwise, we
assume formulas with equality.

\footnotetext{This parenthesis-free notation for terms and atoms is used in
  the modern textbook \cite{eft:german}.}

\nocite{eft:english}

A \defname{formula} is constructed from atoms, the constant operators $\true$
(\name{true}), $\false$ (\name{false}), and a finite number of applications of
the unary connectives $\lnot$ (\name{negation}), the binary connectives
$\land$ (\name{conjunction}) and $\lor$ (\name{disjunction}), as well as
quantifications with $\all$ (\name{universal quantification}) and $\ex$
(\name{existential quantification}).  Negated equality $\neq$, further binary
operators~$\imp, \revimp, \equi$, as well as $n$-ary versions of $\land$ and
$\lor$ can be understood as meta-level notation.  For these $n$-ary versions,
the cases $n=0$, which are $\true$ and $\false$, respectively, are included.
The scope of $\lnot$, the quantifiers, and the $n$-ary versions of $\land$ and
$\lor$ in prefix notation is the immediate subformula to the right.

A subformula occurrence in a given formula is \defname{positive}
(\defname{negative}) if it is in the scope of an even (odd) number of
negations.  A \defname{literal} is an atom or a negated atom.  For a
formula~$F$ of the form $\lnot G$, the \defname{complement}~$\du{F}$ of~$F$ is
$G$, if $F$ has a form different from $\lnot G$, then the complement is $\lnot
F$.

A quantifier occurrence may be upon an \name{individual variable}
(``first-order'') or upon a \name{predicate} or \name{function}
(``second-order''). We call the former also an \defname{individual quantifier}
and the latter \defname{predicate quantifier}.
An occurrence of an individual variable, predicate or function that is not
bound by a quantifier occurrence in a formula is called \defname{free} in that
formula.
As common in discussions of first-order logic, we distinguish between
constants and free occurrences of individual variables.  However, we do not
make an analogous distinction for predicates and functions, since it would not
be of relevance in the considered contexts. Thus, an occurrence of a predicate
or function in a formula is just free or bound by a quantifier.  In a
first-order formula, all predicate and function occurrences are free.

\subsection{Boolean Combination of Basic Formulas}  
Following patterns suggested by early model theory (see e.g.
\cite[Sect.~1.5]{chang:keisler}), in this presentation, several normal forms
are characterized as \name{Boolean combination of basic formulas}, that is, as
the formulas that are obtained from certain basic formulas, the constant
operators $\true$, $\false$ and repeated application of the operators $\lnot$,
$\land$ and $\lor$.%
\footnote{This choice of operators has been made for convenience.  Of course,
  technically it would be sufficient to just permit e.g. $\lnot$ and
  $\land$, and express $\true$ and $\false$ as disjunction and conjunction,
  respectively, of an arbitrarily picked basic formula and its negation.}

\subsection{Considered Formula Classes}

We use the following symbols for particularly considered formula classes: \MON
is the class of relational monadic formulas (also called Löwenheim class),
that is, the class of first-order formulas with nullary and unary predicates,
with constants but no other functions, and without equality. \MONE is \MON
with equality. \QMON and \QMONE are \MON and \MONE, resp., extended by
second-order quantification upon predicates.

All of these classes are decidable. \QMONE admits second-order quantifier
elimination, that is, there is an effective method to compute for a given
\QMONE formula~$F$ an equivalent \MONE formula~$F^{\prime}$ in which all
predicates are unquantified predicates in $F$, as well as all constants and
free variables are also in $F$.  In this sense \MONE is closed under
second-order quantifier elimination, which does not hold for \MON, since
elimination applied to a \QMON formula might introduce equality.

A \name{quantified Boolean formula} is a formula where all predicates are
nullary (called then also \defname{Boolean variables} in the literature) and
quantification is allowed just upon predicates.

\subsection{Remarks on the Presentation of Behmann's Results}
\label{sec-vars-constants}

\paragraph{Theorems that Assert the Existence of Effective Methods.}
\label{sec-meta-methods}
We formulate results often as theorem statements that assert the \emph{existence
  of an effective method} to compute for a given formula an equivalent formula
with certain properties. The proof is then typically a description of such a
method. An alternative would be to just state the \emph{existence of an
  equivalent formula with certain properties} as theorem.  This course has not
been followed here, since in contexts where the existence of a method is
relevant, reference to the theorem alone (playing the role of a ``module
interface'') would not be sufficient, but the underlying proof (the ``module
implementation'') would have to be referenced.

\paragraph{Free Individual Variables and Constants.}
Many of the results of \cite{beh:22} apply to formulas -- which may contain
free variables -- such that free variables and constants are handled in
exactly the same way.  While these results are presented here explicitly as
properties of formulas, they are presented in \cite{beh:22} as properties of
\emph{sentences} (\de{Aussagen}), that is, formulas without free individual
variables, considering free individual variables just as constants (see
\cite[footnote~25, p.~196]{beh:22}\footnotemark).

\footnotetext{Behmann's footnote translates as: I thus call the basic
  components of an expression ``variable'' (\de{\glqq veränderlich\grqq}) or
  ``constant'' (\de{\glqq konstant\grqq}), depending on whether it is
  represented within the expression by quantifiers (\de{Operatoren}) or not.
  As far as I can see, this is exactly the sense in which this distinction is
  actually made in mathematics.}

\paragraph{Consideration of Duality.}
\label{sec-duality}
Methods based on equivalence preserving transformations of classical logic
formulas typically come in two dual variants, where the roles of conjunction
and disjunction as well as the roles of existential and universal
quantification are switched (as made more precise with
Prop.~\ref{prop-equi-dual} in Sect.~\ref{sec-prop-equi-dual}).  In
\cite{beh:22}, such methods are in general explicitly developed for one of the
variants and the dual variant is then indicated.  In the presentation here,
the discussion of dual variants is completely neglected, with exception of a
few specific cases.  Actually, from a technical point of view, the dual
variants can be completely disregarded, since for inputs where they would seem
adequate, their behavior would be just simulated by the original variant with
the only difference that instead of atomic formulas (or basic formulas of
other forms) their complements are used.  As an example, consider the
equivalence $\ex x\, (px \lor qx) \equiv \ex x\, px \lor \ex x\,
qx$ and its dual $\all x\, (px \land qx) \equiv \all x\, px \land
\all x\, qx$. The dual can be derived from the first variant with negated
atoms in the following steps, which additionally only involve inward
propagation of negation and expansion/contraction of universal quantifiers:
$\all x\, (px \land qx)$ $\equiv$ $\lnot \ex x\, \lnot (px \land qx)$
$\equiv$ $\lnot \ex x\, (\lnot px \lor \lnot qx)$ $\equiv$ $\lnot (\ex
x\, \lnot px \lor \ex x\, \lnot qx)$ $\equiv$ $\lnot \ex x\, \lnot
px \land \lnot \ex x\, \lnot qx$ $\equiv$ $\all x\, px \land
\all x\, qx$.

\section{Overview on Behmann's Results and Methods}
\label{sec-overview}

The final result of \cite{beh:22} can be stated as a theorem about \MONE
formulas of a certain syntactic form. It allows to derive various results on
second-order quantifier elimination and decidability of monadic formulas,
including the decidability of \MON.  In this overview section we start with
presenting this theorem and sketch in a ``top-down'' manner the techniques
used in \cite{beh:22} to prove it. We then present the derived results,
proven as corollaries of the theorem. More in-depth proofs of the theorem
itself and discussions of the involved techniques will then be provided in a
``bottom-up'' manner in subsequent sections.

\subsection{Elimination Method}
\label{sec-overview-elimination-method}

The core result of \cite{beh:22} can be stated as follows:
\begin{thm}[Predicate Elimination for \MONE]
\label{thm-beh:22-final}
There is an effective me\-thod to compute from a given predicate~$p$ and
\MONE formula~$F$ a formula~$F^{\prime}$
such that
\begin{enumerate}
\item $F^{\prime}$ is a \MONE formula,
\item $F^{\prime} \equiv \ex p\, F$,
\item $p$ does not occur in $F^{\prime}$,
\item All free individual variables, constants and predicates in
$F^{\prime}$ do occur in~$F$.
\end{enumerate}
\end{thm}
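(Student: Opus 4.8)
The plan is to follow the route already indicated in the introductory remarks: reduce the second-order existential quantifier to cardinality arithmetic over the Venn cells determined by the remaining predicates. Since free individual variables are to be handled exactly like constants, I would first regard $F$ as a sentence in which every free variable is treated as a constant, so that the conditions on free variables in item~4 follow from the corresponding conditions on constants. Write $q_1,\dots,q_k$ for the predicates of $F$ other than $p$ and $c_1,\dots,c_m$ for its constants. The whole method hinges on the fact that a monadic formula only \emph{sees} an element through the combination of unary predicates it satisfies, so that $F$ is determined by how many elements fall into each cell of the partition induced by $p,q_1,\dots,q_k$, together with the placement of the constants.

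First I would normalize $F$. Propagating the individual quantifiers inward into innex form and using the dual conjunctive and disjunctive normal forms, I would rewrite $F$ as a Boolean combination of \emph{counting sentences} and \emph{constant-placement sentences}: a counting sentence asserts that at least $n$ domain elements satisfy a fixed conjunction of predicate-literals (a cell description), while a constant-placement sentence fixes, for each constant, the cell it lies in and which constants coincide. In monadic logic with equality every such formula reduces to a Boolean combination of this shape; this is the content of the normalization developed in the dedicated sections and is the technical backbone I would invoke here.

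With $F$ in this intermediate form, the elimination of $\ex p$ becomes the \enq{simple scheme} Behmann alludes to. Using that $\ex p$ distributes over disjunction, I would first bring $F$ to a disjunctive normal form over the counting and placement sentences and treat each disjunct separately. Within a single conjunction the constraints decouple cell by cell: each cell $C$ over $q_1,\dots,q_k$ is split by $p$ into $C\cap p$ and $C\setminus p$, and since $p$ ranges over arbitrary subsets the elements of $C$ may be distributed freely between the two halves. Hence a pair of demands \enq{at least $a$ elements in $C\cap p$} and \enq{at least $b$ elements in $C\setminus p$} is jointly realizable precisely when $|C|\geq a+b$, with each constant of $C$ assigned to exactly one half. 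Replacing, cell by cell, the $p$-constraints by the equivalent $p$-free cardinality and placement conditions removes $p$ entirely and leaves a Boolean combination of counting and placement sentences over $q_1,\dots,q_k$ and $c_1,\dots,c_m$ only; such a combination is again a \MONE formula, which gives items~1, 3 and~4. Correctness (item~2) holds because the construction enumerates exactly the interpretations of $p$ under which $F$ can be satisfied.

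The main obstacle is the bookkeeping forced by equality and constants. The normal form must track, for each cell, a threshold count \emph{and} the finite set of named constants it contains, including which constants are identified; when a cell is split by $p$, the disjunction has to range over all consistent ways of sending each of its constants into $C\cap p$ or $C\setminus p$ while the free counts are distributed accordingly, and these two kinds of data must be kept mutually consistent, since a constant placed in $C\cap p$ already consumes one of the required elements there. Verifying that this enumeration is both sound and complete, and that it stays finite, is the delicate part; it is also the point at which equality can appear in $F^{\prime}$ even when $F$ has none, so that the output genuinely lives in \MONE rather than \MON, consistent with the remark following the definition of the formula classes.
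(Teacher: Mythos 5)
Your route is genuinely different from the paper's. The paper proves the theorem purely by equivalence-preserving rewriting: after the counting-quantifier normal form (Theorem~\ref{thm-foqe-monadic-eq}), the quantifier $\ex p$ is pushed inward until every occurrence of $p$ sits inside a \emph{Generalized Eliminationshauptform} (Lemma~\ref{lem-construction-eq}); the counting quantifiers there are expanded into fresh existentially quantified individual variables that are moved in front of $\ex p$ (Lemma~\ref{lem-hauptform-eq-elim}), leaving exactly the shape $\ex p\,(\all x\,(A[x]\lor px)\land\all x\,(B[x]\lor\lnot px))$, which the Basic Elimination Lemma turns into $\all x\,(A[x]\lor B[x])$. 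You instead argue semantically, by cardinality arithmetic over the Venn cells of the remaining predicates: $p$ splits each cell $C$ into two halves that may be populated freely, so a conjunction of $p$-constraints on $C$ is realizable iff a $p$-free condition on $\left\vert C\right\vert$ and on the placement of constants holds. That is a legitimate and arguably more transparent explanation of \emph{why} elimination succeeds in monadic logic, close in spirit to the Löwenheim--Skolem decidability proofs; the paper's rewriting route buys an explicit algorithm whose intermediate and output formulas have a controlled syntactic shape, which the later sections depend on.

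There is, however, a concrete gap. Once you take the disjunctive normal form over counting and placement sentences, a disjunct contains not only positive counting literals (lower bounds) but also \emph{negated} ones, i.e., upper bounds of the form \enq{fewer than $n$ elements of $C\cap p$} --- and these occur unavoidably (already $\ex p\,\lnot\ex x\,px$ produces one). Your realizability criterion $\left\vert C\right\vert\ge a+b$ covers only the case of two lower bounds; you give no rule for a literal $\lnot\ex^{\geq n}x\,(\ldots\land px)$, which contains $p$ and therefore cannot be carried into $F^{\prime}$ unchanged. The correct condition for a cell carrying lower bounds $a,b$ and upper bounds $a^{\prime},b^{\prime}$ on its two halves is an interval constraint $a+b\le\left\vert C\right\vert\le a^{\prime}+b^{\prime}$ (together with $a\le a^{\prime}$, $b\le b^{\prime}$ and consistency with the constants assigned to each half), and one must still verify that this is expressible by counting sentences over $C$ alone. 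A second, smaller omission: the normalization you invoke yields counting quantifiers over \emph{partial} conjunctions of literals, not full cell descriptions, so an extra refinement step --- distributing $\ex^{\geq n}x\,G$ over the disjoint cells making up $G$ --- is needed before the cell-by-cell decoupling applies. Both points are repairable, but as written the scheme has no answer for a negative counting literal, so the proof is incomplete.
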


\noindent
The proof given in \cite{beh:22} for Theorem~\ref{thm-beh:22-final} resides on
the conversion of arbitrary \MONE formulas to a certain syntactic normal form
that only allows restricted use of quantification. In particular, the scopes
of quantifier occurrences are not permitted to overlap.  To achieve this
property, it is utilized that equality atoms with quantified variables can be
represented implicitly by counting quantifiers $\ex^{\geq n}$, which
express existence of at least~$n$ individuals.  Formulas with counting
quantifiers can be expanded into equivalent formulas of particular shapes with
standard first-order quantifiers and equality atoms, as discussed in more
detail below in Sect.~\ref{sec-cq}.  The standard quantifier $\ex$ can be
equivalently expressed as $\ex^{\geq 1}$.
In the considered normal form, the argument formula of a counting quantifier
$\ex^{\geq n}$ must be a conjunction of literals of applications of a
unary predicate to the quantified individual variable. The representability of
\MONE formulas in this normal form can be stated as follows:
\pagebreak 
\begin{thm}[Counting Quantifier Normal Form for \MONE]
\label{thm-foqe-monadic-eq}
\noindent
There is an effective method to compute from a given
\MONE formula~$F$ a
formula~$F^{\prime}$ such that
\begin{enumerate}
\item $F^{\prime}$ is a Boolean combination of basic
formulas of the form:
\begin{enumerate}[label=(\alph*),ref=\alph*]
\item \label{basic-nullary} $p$, where $p$ is a nullary predicate,
\item \label{basic-unary} $pt$,
   where $p$ is a unary predicate and $t$ is a constant or an 
  individual variable,
\item \label{basic-equality} $t = s$, where each of $t, s$ is a constant or an 
individual variable,
\item \label{basic-exists} $\ex^{\geq n} x\, \bigwedge_{1 \leq i \leq
  m} L_i[x]$, where $n \geq 1$, $m \geq 0$ and the $L_i[x]$ are pairwise
  different and pairwise non-complementary positive or negative literals with
  a unary predicate applied to the individual variable~$x$,
\end{enumerate}
\item $F^{\prime} \equiv F$,
\item All free individual variables, constants and predicates in $F^{\prime}$ do
  occur in~$F$.
\end{enumerate}
\end{thm}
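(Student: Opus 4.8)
The plan is to eliminate the ordinary first-order quantifiers of $F$ one at a time, working from the innermost outward, until only counting quantifiers of the restricted shape~(\ref{basic-exists}) remain. It suffices to treat existential quantifiers: a universal quantifier $\all x\, G$ is rewritten as $\lnot \ex x\, \lnot G$, the existential procedure is applied to $\ex x\, \lnot G$, and the surrounding negations are absorbed, using that the class of Boolean combinations of basic formulas~(\ref{basic-nullary})--(\ref{basic-exists}) is closed under $\lnot$. The induction that drives the process rests on one structural observation: every form-(\ref{basic-exists}) subformula produced by the elimination is \emph{closed} (its only individual variable is the bound one), so once an inner quantifier has been removed its contribution behaves, for the next quantifier out, exactly like a nullary atom. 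Consequently, whenever we come to eliminate an innermost ordinary quantifier $\ex x\, G$, the body $G$ is a Boolean combination of basic formulas in which $x$ occurs only in literals of the forms $\pm p\,x$ and $x = t$ or $x \neq t$ (with $t$ a constant or a variable free in $\ex x\, G$); this is precisely the situation the per-quantifier step below is designed for, and it is reproduced verbatim at the next level.

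For a single step, first bring $G$ into disjunctive normal form, treating all atoms (including the closed form-(\ref{basic-exists}) subformulas) as propositional, and distribute $\ex x$ over the disjunction via $\ex x\,(D_1 \lor \dots \lor D_k) \equiv \ex x\, D_1 \lor \dots \lor \ex x\, D_k$. In each disjunct, now a conjunction of literals, pull the conjuncts not containing $x$ out of the scope of $\ex x$ using $\ex x\,(A \land B) \equiv A \land \ex x\, B$ when $x \notin A$. What remains is $\ex x\, B$ with $B$ a conjunction of $x$-literals; after deleting duplicate literals and replacing $B$ by $\false$ whenever it contains a complementary pair $p\,x, \lnot p\,x$, two cases arise. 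If $B$ contains a positive equality $x = t$, then $\ex x\, B \equiv B[t/x]$ with the remaining equalities merged; substitution removes $x$ and yields a conjunction of atoms of the forms~(\ref{basic-unary}) and~(\ref{basic-equality}). Otherwise $B = P(x) \land \bigwedge_{j=1}^{r} x \neq t_j$, where $P(x) = \bigwedge_i L_i[x]$ is a cleaned conjunction of unary literals on $x$ (possibly empty) and the $t_j$ are constants or free variables.

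The remaining case, $\ex x\,(P(x) \land \bigwedge_{j=1}^{r} x \neq t_j)$, is the heart of the argument and the step where equality forces counting quantifiers to appear. Its negation asserts $\{x : P(x)\} \subseteq \{t_1, \dots, t_r\}$, so the formula holds exactly when the number of elements satisfying $P$ strictly exceeds the number $d$ of \emph{distinct} values among $t_1, \dots, t_r$ that satisfy $P$. Since $d$ is completely determined by the equality pattern of the $t_j$ and their $P$-membership, we may write
\[
\ex x\,\Big(P(x) \land \bigwedge_{j=1}^{r} x \neq t_j\Big) \;\equiv\; \bigvee_{k=0}^{r}\big(\ex^{\geq k+1} x\, P(x) \;\land\; \delta_k\big),
\]
where $\delta_k$ is the quantifier-free formula expressing that exactly $k$ distinct $P$-satisfying values occur among the $t_j$. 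Each $\delta_k$ is a Boolean combination of equalities $t_i = t_j$ (form~(\ref{basic-equality})) and of the literals making up $P(t_j) = \bigwedge_i L_i[t_j]$ (form~(\ref{basic-unary})), while each $\ex^{\geq k+1} x\, P(x)$ is exactly of form~(\ref{basic-exists}); the special subcase $r = 1$ is the familiar $\ex x\,(P(x) \land x \neq t) \equiv \ex^{\geq 2} x\, P(x) \lor (\ex^{\geq 1} x\, P(x) \land \lnot P(t))$. I expect the bookkeeping behind $\delta_k$ — correctly accounting for coincidences among the $t_j$ and for which of them satisfy $P$ — to be the main technical obstacle; everything else is routine normalization.

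Finally, the three asserted properties are immediate from the construction. Property~2 holds because every rewriting step above is a logical equivalence, so $F' \equiv F$ throughout; property~3 holds because no step introduces a predicate, constant or free variable that was not already present (the only variables introduced are bound, under counting quantifiers), and substitution only reuses existing terms; and effectiveness and termination follow since each step strictly decreases the number of ordinary quantifiers while producing only closed counting quantifiers of the admissible shape, so the procedure halts with a Boolean combination of basic formulas of the forms~(\ref{basic-nullary})--(\ref{basic-exists}).
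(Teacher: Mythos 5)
Your proposal is correct and follows essentially the same route as the paper: innermost-out elimination of individual quantifiers, disjunctive normal form over the basic formulas, distribution and scope-narrowing of $\ex x$, substitution for positive equalities $x=t$, and the reduction of the residual $\ex x\,(P(x)\land\bigwedge_j x\neq t_j)$ to counting quantifiers. Your partition into the formulas $\delta_k$ (``exactly $k$ distinct $P$-satisfying values among the $t_j$'') is just a repackaging of the paper's Lemma~\ref{lem-folelim-core}, whose $\f{NDT}_{F[x],T}(m)$ expresses the cumulative condition ``fewer than $m$ distinct $P$-satisfying $t$'s,'' so the key step coincides up to this bookkeeping choice.
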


\noindent
If the given formula~$F$ in Theorem~\ref{thm-foqe-monadic-eq} is without
equality, the allowed basic formulas can be strengthened by excluding the case
$t = s$ (\ref{basic-equality}) and restricting the case (\ref{basic-exists})
to $n = 1$, such that the counting quantifier can be considered as standard
quantifier. The method of \cite{beh:22} to compute the normal form according
to Theorem~\ref{thm-foqe-monadic-eq} proceeds by applying equivalence
preserving formula rewritings to move quantifiers inward such that their
scopes do not overlap. All predicate occurrences in the scope of a quantifier
then have exactly the quantified variable as argument. 

To achieve this, aside from inexpensive transformations such as narrowing
quantifier scopes to subformulas where the quantified variables actually
occur, distribution of existential (universal, resp.) quantifiers over
disjunction (conjunction, resp.), propagating negation inward, and rearranging
binary connectives according to associativity and commutativity, also
\emph{expensive transformations}, in particular distribution of conjunction
over disjunction and vice versa, as familiar from conversion to disjunctive
and conjunctive normal form, are required. Consider the following example,
where initially $\ex y$ is in the scope of $\ex x$:
\begin{equation}
\begin{array}{rl}
& \ex x\, (px \land (qx \lor \ex y\, ry))\\
\equiv & \ex x\, ((px \land qx) \lor (px \land \ex y\, ry))\\
\equiv & \ex x\, (px \land qx) \lor \ex x\, (px \land \ex y\,
  ry)\\
\equiv & \ex x\, (px \land qx) \lor (\ex x\, px \land 
\ex y\, ry).\\
\end{array}
\end{equation}
Equality literals require special handling, as described in the full
exposition in Sect.~\ref{sec-thm-foqe-monadic-eq-fullproof} below.

We now sketch a method as asserted by Theorem~\ref{thm-beh:22-final} for the
case where the input formula is without equality.  Given is the formula
$\ex p\, F$, where $F$ is a \MON formula and $p$ is a predicate. We only
consider unary $p$ here, nullary~$p$ could be handled analogously as a
particularly simple special case.  The input formula is first rewritten to an
equivalent formula in which all occurrences of $p$ are in
subformulas of a specific syntactic form, called \de{Eliminationshauptform}
(main form for elimination) in \cite{beh:22}:
\begin{equation}
\label{eq-hauptform-informal}
\begin{array}{rll}
\ex p\,
&  ( \bigwedge_{1 \leq i \leq a} \all x\, (A_i[x] \lor px) & \land\\
& \hparen \bigwedge_{1 \leq i \leq b} \all x\, (B_i[x] \lor \lnot px) & \land\\
& \hparen \bigwedge_{1 \leq i \leq c} \ex x\, (C_i[x] \land px) & \land\\
&  \hparen \bigwedge_{1 \leq i \leq d} \ex x\, (D_i[x] \land \lnot px)), &
\end{array}
\end{equation}
where $a$, $b$, $c$, $d$ are natural numbers \geqzero and
the $A_i[x]$, $B_i[x]$, $C_i[x]$, $D_i[x]$ are first-order formulas in which
$p$ does not occur.  This can be achieved with the following steps: Normalize
with the method of Theorem~\ref{thm-foqe-monadic-eq} and convert to
disjunctive normal form, generalized such that the role of atoms is played by
the basic formulas.  Rewrite occurrences of $pt$, where $t$ is a constant or
a variable that is free in $F$, first with the equivalence $\lnot pt \equiv
\all x\, (x\neq t \lor \lnot px)$ and then with $pt \equiv \all x\, (x\neq t
\lor px)$.  This step introduces equality, which thus may also be present in
the result.  If $F$ is without equality, the counting quantifiers are
decorated with $1$, directly corresponding to standard quantifiers.  They can
thus be rewritten with the equivalence $\lnot \ex^{\geq 1} x\,
\bigwedge_{1 \leq i \leq m} L_i[x] \equiv \all x\, \bigvee_{1 \leq i \leq
  m} \du{L_i[x]}$, and then with $\ex^{\geq 1} x\, \bigwedge_{1 \leq i
  \leq m} L_i[x] \equiv \ex x\, \bigwedge_{1 \leq i \leq m} L_i[x]$.
After the predicate quantifier upon~$p$ is then moved inward with the
techniques outlined for individual variable quantifiers in the context of
Theorem~\ref{thm-foqe-monadic-eq}, all occurrences of $p$ are in subformulas
that match the \name{Eliminationshauptform}.

The \de{Eliminationshauptform} allows to move the existential individual
quantifiers and the constituents $C_i$ and $D_i$ to the front of the predicate
quantifier, while the occurrences of $p$ with existentially quantified
arguments can be rewritten to universally quantified occurrences that match
the forms $\all x\, (A_i[x] \lor px)$ or $\all x\, (B_i[x] \lor \lnot
px)$, respectively, by applying the equivalences $pt \equiv \all x\, (x\neq
t \lor px)$ and $\lnot pt \equiv \all x\, (x\neq t \lor \lnot px)$.
(Notice that in this step again equality may be introduced also in cases where
the original formula $F$ is without equality.)  In this way, a formula in
\name{Eliminationshauptform} can be further converted such that $p$ only
occurs in a subformula which is in a restricted form of the
\name{Eliminationshauptform} allowing only the two universally quantified
constituents, that is, $c = d = 0$. When restricted in this way, the
\name{Eliminationshauptform} matches the left side of the following
\name{Basic Elimination Lemma}, which gives a first-order equivalent for
second-order formulas matching its left side:
\begin{lem}[Basic Elimination Lemma]
\label{lem-basic-elim}
Let $p$ be a unary predicate and let $F, G$ be first-order formulas with
equality in which $p$ does not occur. It then holds that
\[\ex p\, (\all x\, (F \lor px) \land
              \all x\, (G \lor \lnot px))
  \;\equiv\;
  \all x\, (F \lor G).\]
\end{lem}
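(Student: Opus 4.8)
The plan is to prove the equivalence semantically, reading $\ex p$ under the standard (full) second-order semantics in which $p$ ranges over all subsets of the domain~$D$. I would fix an arbitrary structure interpreting all symbols except~$p$, together with an assignment to the free individual variables other than~$x$. Since $p$ occurs neither in $F$ nor in $G$, the truth of $F$ and of $G$ at each domain element $d$ (the value assigned to~$x$) is then determined independently of the choice of~$p$; the same holds for any equality atoms in $F$ or $G$, so equality causes no difficulty. The point of the argument is that, because $p$ is quantified as an arbitrary set, its membership can be decided pointwise in~$d$.

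First I would rewrite the matrix using the propositional equivalences $F \lor px \equiv (\lnot F \imp px)$ and $G \lor \lnot px \equiv (px \imp G)$. Under the fixed interpretation, the conjunct $\all x\,(\lnot F \imp px)$ asserts that the extension of $p$ contains every $d$ with $\lnot F[d]$, while $\all x\,(px \imp G)$ asserts that the extension of $p$ is contained in the set of those $d$ with $G[d]$. Thus the matrix expresses exactly the sandwich condition $\{d \in D : \lnot F[d]\} \subseteq \mathrm{ext}(p) \subseteq \{d \in D : G[d]\}$.

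The existential $\ex p$ therefore holds iff some subset of $D$ lies between these two bounds, and such a subset exists iff the lower bound is already contained in the upper bound, that is, iff $\{d : \lnot F[d]\} \subseteq \{d : G[d]\}$. This containment is precisely the truth of $\all x\,(\lnot F \imp G)$, i.e.\ of $\all x\,(F \lor G)$, which establishes the equivalence. For an explicit witness in the direction from right to left, I would let $p$ denote the lower bound $\lnot F$: the first conjunct then becomes the valid $\all x\,(F \lor \lnot F)$, and the second becomes $\all x\,(G \lor F)$, which is the right-hand side; the symmetric choice $p := G$ (the upper bound) works as well.

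I do not expect a genuine obstacle here, since the statement is the monadic instance of what the paper calls Ackermann's Lemma, and the per-element sandwich argument makes the witness immediate. The only points requiring care are making the second-order semantics explicit — so that $\ex p$ genuinely quantifies over arbitrary subsets and membership may be chosen independently for each~$d$ — and keeping the free variables of $F$ and $G$ other than~$x$ held fixed by the ambient assignment while the containment is verified elementwise in the value of~$x$.
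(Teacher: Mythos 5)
Your argument is correct, but there is nothing in the paper to compare it against: the paper states the Basic Elimination Lemma without proof and only remarks later that it is an instance of Ackermann's Lemma (Prop.~\ref{prop-ackermann-lemma}), which is itself left unproven. Your semantic ``sandwich'' argument is a sound, self-contained justification: fixing the interpretation of everything except $p$, the matrix says exactly that $\{d : \lnot F[d]\} \subseteq \mathrm{ext}(p) \subseteq \{d : G[d]\}$, and such a $p$ exists iff the lower bound is contained in the upper bound, i.e.\ iff $\all x\, (F \lor G)$ holds; the witnesses $p := \lnot F$ and $p := G$ are exactly right (with $\mathrm{ext}(p) = \{d : \lnot F[d]\}$ the second conjunct becomes $\all x\, (G \lor F)$, and dually). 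You correctly isolate the two points where care is needed --- that $p$ must not occur in $F$ or $G$, so the two bounding sets are independent of the choice of $p$, and that the free variables of $F$ and $G$ other than $x$ are held fixed while the containment is checked pointwise in $x$. What your route buys beyond the paper's bare statement is an explicit, \emph{definable} witness, which shows the equivalence survives in Henkin-style semantics with comprehension and makes transparent why the lemma is the monadic special case of Ackermann's Lemma; it would serve equally well as a direct proof of Prop.~\ref{prop-elim-expandable-definition}.
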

Note that $F$ and $G$ in that proposition may contain free variables, in
particular free occurrences of~$x$, which are then bound by the surrounding
universal quantifiers on the left as well as on the right side of the
proposition.  Rewriting all subformulas headed with $\ex p$ according to
Lemma~\ref{lem-basic-elim} then completes the method asserted by
Theorem~\ref{thm-beh:22-final}.

\subsection{Applications to Predicate Elimination and Decidability}

We now turn to applications of Theorem~\ref{thm-beh:22-final}.  Repeatedly
running the method ensured by that theorem allows to eliminate \emph{all}
predicate quantifiers in a \MONE formula. The result is first-order, but might
be with equality even in cases where the input is without equality, since a
single run of the method already might introduce equality.  This transfer of
Theorem~\ref{thm-beh:22-final} to formulas with several predicate quantifiers
is made precise with the following corollary:
\begin{corr}[Elimination in \QMONE]
\label{corr-elim}
There is an effective method to compute for a given \QMONE formula $F$ of a
formula~$F^{\prime}$ such that
\begin{enumerate}
\item $F^{\prime}$ is a \MONE formula,
\item $F^{\prime} \equiv F$,
\item Predicates that just occur bound by a second-order quantifier in~$F$
  do not occur in $F^{\prime}$,
\item All free individual variables, constants and predicates in
$F^{\prime}$ do occur in~$F$.
\end{enumerate}
\end{corr}

\begin{proof}
Return the result of applying the following equivalence preserving rewritings
to $F$: First, exhaustively\footnotemark\ rewrite subformula occurrences of
the form $\all p\, G$ where $p$ is a predicate with the equivalent formula
$\lnot \ex p\, \lnot G$.  Second, exhaustively rewrite subformula occurrences
of the form $\ex p\, G$ where $p$ is a predicate and $G$ is first-order with
(i.e. $\ex p\, G$ is an innermost second-order quantification) to the
equivalent first-order formula obtained according to
Theorem~\ref{thm-beh:22-final}.  \qed
\end{proof}

\footnotetext{By \name{rewriting exhaustively} we mean that in each step a
  subformula occurrence of the indicated form is replaced and the resulting
  overall formula is subjected again to rewriting, until it does no longer
  contain a subformula of the indicated form.}

\label{sec-pure-cq-applic}
\noindent
Basic formulas of form~(\ref{basic-exists}) in
Theorem~\ref{thm-foqe-monadic-eq} include the case where $m = 0$, that is,
$\ex^{\geq n} x\, \true$. If $F$ is without constants, without free
individual variables and such that all predicate occurrences are quantified,
then the result of applying the method according to Corollary~\ref{corr-elim}
followed by normalization according to Theorem~\ref{thm-foqe-monadic-eq} must
be a Boolean combination of basic formulas of just the form $\ex^{\geq n}
x\, \true$, where $n$ is a number \geqone. A formula $\ex^{\geq n} x\,
\true$ is satisfied by exactly those interpretations whose domain has at least
$n$ distinct members.  A formula $\lnot \ex^{\geq n} x\, \true$ by those
whose domain has less than $n$ members.  As observed in \cite{beh:22}, a
Boolean combination of formulas of the form $\ex^{\geq n} x\, \true$ with
$n \geq 1$ is either true for all domain cardinalities with exception of a
finite number or false for all domain cardinalities with exception of a finite
number.  It is not hard to see that validity and satisfiability of Boolean
combinations of formulas of the form~$\ex^{\geq n} x\, \true$ is
decidable. We discuss this in more depth in Sect.~\ref{sec-qp-pure} below.

The decidability problem for \QMONE can be reduced to the elimination problem.
Hence, the decidability of \QMONE, and thus also of its subclasses \QMON,
\MONE and \MON, follows from Corollary~\ref{corr-elim}, and thus indirectly
from Theorem~\ref{thm-beh:22-final}. The following corollary states this from
two perspectives, validity and satisfiability.

\begin{corr}[Decidability of \QMONE]

\slab{corr-decide-valid} There is an effective method to decide whether a
\QMONE formula is valid.

\slab{corr-decide-sat} There is an effective method to decide whether a
\QMONE formula is satisfiable.
\end{corr}

\begin{proof}
(\ref{corr-decide-valid}) Let $F$ be the given \QMONE formula.  Let $p_1,
  \ldots, p_n$ be all predicates with free occurrences in $F$, let $x_1,
  \ldots, x_m$ be the free individual variables in $F$, and let $c_1, \ldots,
  c_k$ be the constants in $F$.  Let $F^{\prime}$ be the formula \[\all p_1
  \ldots \all p_n \all x_1 \ldots \all x_m \all c_1 \ldots \all
  c_k\, F.\] The formula $F^{\prime}$ is valid if and only if $F$ is valid. When
  applied to $F^{\prime}$, the method according to Corollary~\ref{corr-elim}
  followed by normalization according to Theorem~\ref{thm-foqe-monadic-eq}
  then yields an equivalent formula $F^{\prime\prime}$ which is a Boolean
  combination of formulas of the form $\ex^{\geq n} x\, \true$, where $n
  \geq 1$.  Validity of such Boolean combinations can be decided.

(\ref{corr-decide-sat}) Decidability of satisfiability follows trivially from
  the decidability of validity, since a \QMONE formula is satisfiable if and
  only if its negation, which is also a \QMONE formula, is not valid. However,
  it is also possible to express the involved intermediate steps directly in
  terms of satisfiability: Let $F$ be the given formula.  Let $F^{\prime}$ be
  the formula \[\ex p_1 \ldots \ex p_n \ex x_1 \ldots \ex x_m
  \ex c_1 \ldots \ex c_k\, F,\] where the quantified predicates,
  variables and constants are as specified in the proof of
  Prop.~\ref{corr-decide-valid} Then $F^{\prime}$ is satisfiable if and only if
  $F$ is satisfiable. As in the proof for the decidability of validity, when
  applied to $F^{\prime}$, the method according to Corollary~\ref{corr-elim}
  followed by normalization according to Theorem~\ref{thm-foqe-monadic-eq}
  yields an equivalent formula $F^{\prime\prime}$ which is a Boolean
  combination of formulas of the form $\ex^{\geq n} x\,
  \true$. Satisfiability of such Boolean combinations can be decided.  \qed
\end{proof}

\section{Starting Points: Rewrite Rules and Deciding Propositional Logic}
\label{sec-starting}

The methodical approach of \cite{beh:22} essentially consists in developing
effective methods that operate by rewriting of formulas in an equivalence
preserving way to certain normal forms. The rewriting is done according to a
set of rules, that is, oriented equivalences.  The involved normal forms are
in particular Boolean combinations of certain basic formulas as well as
disjunctive and conjunctive normal form, generalized such that the role of
atoms is played by certain basic formulas.
In this section, a collection of the relevant equivalences and entailments
that are used as rules is presented. In addition, the relationship between
clausal normal forms and decision methods is sketched for propositional logic
and quantified Boolean formulas, along with some comments on history.

\subsection{Equivalences and Entailments for Rewriting Formulas}
\label{sec-prop-equi-dual}

Well-know equivalences and entailments between formulas are listed below as
labeled propositions, such that they can be referenced in the sequel.  Their
choice is mainly motivated by their role in the methods of \cite{beh:22}. A
concordance with the rule labels used in \cite{beh:22} is provided with
Table~\ref{tab-concordance-b22} at the end of the section.  The following
Prop.~\ref{prop-equi} gathers equivalences between formulas.  In
\cite{beh:22}, such equivalences are used as rules for \emph{reversible}
inferences (\de{Regeln für umkehrbare Schlüsse}): they can be applied oriented
from left to right as well oriented from right to left to obtain a formula
that is equivalent to a given formula, but has different syntactic properties.

\begin{prop}[Equivalences Useful for Rewriting]
\label{prop-equi}
We consider second-order logic with equality.  For all formulas $F, G, H$,
quantifiers $Q \in \{\all, \ex\}$, individual variables or predicates $v, w$
and binary connectives $\binop \in \{\land, \lor\}$ the following equivalences
hold:

\smallskip

\noindent
Interaction of negation with other operators
\smallskip

\noindent
\begin{tabular}{L{3.5em}Sl}
\eqlab{rule-not-not} & $\lnot \lnot F \equiv F$.\\
\eqlab{rule-not-and} & $\lnot (F \land G) \equiv \lnot F \lor \lnot G$.\\
\eqlab{rule-not-or} & $\lnot (F \lor G) \equiv \lnot F \land \lnot G$.\\
\eqlab{rule-not-all} & $\lnot \all v\, F \equiv \ex v\, \lnot F$.\\
\eqlab{rule-not-ex} & $\lnot \ex v\, F \equiv \all v\, \lnot F$.\\
\end{tabular}

\medskip\noindent
Associativity, commutativity and idempotence of conjunction and disjunction
\smallskip

\noindent
\begin{tabular}{L{3.5em}Sl}
\eqlab{rule-ao-assoc} &
    $(F \binop G) \binop H \equiv F \binop (G \binop H)$.\\
\eqlab{rule-ao-comm} & $F \binop G \equiv G \binop F$.\\
\eqlab{rule-ao-idem} & $F \binop F \equiv F$.
\end{tabular}

\medskip\noindent
Interaction of truth values with other operators
\smallskip

\noindent
\begin{tabular}{L{3.5em}Sl}
\eqlab{rule-tv-not-t} & $\lnot \true \equiv \false$.\\
\eqlab{rule-tv-not-f} & $\lnot \false \equiv \true$.\\
\eqlab{rule-tv-and-t} & $\true \land F \equiv F$. \hspace{1em}
                       $F \land \true \equiv F$.\\
\eqlab{rule-tv-and-f} & $\false \land F \equiv \false$.  \hspace{1em}
                       $F \land \false \equiv \false$.\\
\eqlab{rule-tv-or-t} & $\true \lor F \equiv \true$. \hspace{1em}
                      $F \lor \true \equiv \true$.\\
\eqlab{rule-tv-or-f} & $\false \lor F \equiv F$. \hspace{1em} 
                      $F \lor \false \equiv F$.\\
\eqlab{rule-tv-q-t} & $Q v\ \true \equiv \true$.\\
\eqlab{rule-tv-q-f} & $Q v\ \false \equiv \false$.
\end{tabular}

\medskip\noindent
Cancellation of complementary formulas
\smallskip

\noindent
\begin{tabular}{L{3.5em}Sl}
\eqlab{rule-complem-and} & $F \land \lnot F \equiv \false$.
                      \hspace{1em} $\lnot F \land F \equiv \false$.\\
\eqlab{rule-complem-or} & $F \lor \lnot F \equiv \true$.
                      \hspace{1em} $\lnot F \lor F \equiv \true$.
\end{tabular}

\medskip\noindent
Distribution among conjunction and disjunction
\smallskip

\noindent
\begin{tabular}{L{3.5em}Sl}
\eqlab{rule-dist-dnf} &
   $F \land (G \lor H) \equiv (F \land G) \lor (F \land H)$.\\
&  $(F \lor G) \land H \equiv (F \land H) \lor (G \land H)$.\\
\eqlab{rule-dist-cnf} &
   $F \lor (G \land H) \equiv (F \lor G) \land (F \lor H)$.\\
&  $(F \land G) \lor H \equiv (F \lor H) \land (G \lor H)$.
\end{tabular}

\medskip\noindent
Quantifier shifting
\smallskip

\noindent
\begin{tabular}{L{3.5em}Sl}
\eqlab{rule-all-out-and} &
  $\all v\, F \land \all v\, G \equiv \all v\, (F \land G)$.\\
\eqlab{rule-ex-out-or} &
 $\ex v\, F \lor \ex v\, G \equiv \ex v\, (F \lor G)$.\\
\eqlab{rule-q-out-ao} & $Q v\, F \binop G \equiv Q v\, (F \binop G)$, 
if $v$ does not occur free in G.\\
            & $F \binop Q v\, G  \equiv Q v\, (F \binop G)$, 
if $v$ does not occur free in F.\\
\end{tabular}

\medskip\noindent
Vacuous quantifiers, quantifier switching and variable renaming
\smallskip

\noindent
\begin{tabular}{L{3.5em}SL{10.5cm}}
\eqlab{rule-quant-drop} & 
$Q v\, F \equiv F$, if $v$ does not occur free in $F$.\\
\eqlab{rule-quant-flip} & $Q v Q w\, F \equiv Q w Q v\, F$.\\
\eqlab{rule-var-rename} &
   $Q v F[v] \equiv Q w F[w]$, if $F[v]$ and $F[w]$ are identical
   with the exception\\ & that $F[w]$ is obtained 
   from $F[v]$ by replacing all free occurrences of $v$ with $w$, and
   vice versa, $F[v]$ from $F[w]$ by 
   replacing all free occurrences of $w$ with $v$.
\end{tabular}

\medskip\noindent
Absorption of entailed conjuncts and entailing disjuncts
\smallskip

\noindent
\begin{tabular}{L{3.5em}Sl}
\eqlab{rule-subs-and-absorp} &
 $F \land G \equiv F$, if $F \entails G$.\\
\eqlab{rule-subs-or-absorp} &
 $F \lor G \equiv G$, if $F \entails G$.
\end{tabular}

\medskip\noindent
Clausal simplifications: tautology reduction, subsumption and
unit reduction
\smallskip

\noindent
Here we consider a \name{matrix} (conjunction or disjunction with arity $\geq
0$) of \name{clauses} (disjunctions or conjunctions, respectively, with
arities $\geq 0$) of \name{basic formulas} or negated basic formulas.
Corresponding to the setting in \cite{beh:22}, basic formulas are not
restricted to atoms but can be arbitrary formulas (see also
footnote~\ref{footnote-dnf} on p.~\pageref{footnote-dnf}). The indicated
operations preserve equivalence of the matrix.
\smallskip

\noindent
\begin{tabular}{L{3.5em}SL{10.5cm}}
\eqlab{rule-taut} & 
A clause that contains a basic formula and its complement can be\\ &
 removed.\\
\eqlab{rule-subs} &
A clause whose members are all contained in another clause
can be\\ & removed.\\

\eqlab{rule-unit} & A member of a clause can be removed if its
complement is\\ & the sole member of another clause in the
matrix.
\end{tabular}

\medskip\noindent
Circumlocution of argument terms\footnotemark
\smallskip

\footnotetext{In Behmann's later manuscripts, the term \de{Umschreibung},
  which might be translated as \name{circumlocution}, appears for the right
  sides of these equivalences.}

\noindent
Let $x$ be an individual variable, let $F[x]$ be a formula, let $t$ be a term
that does contain neither $x$ nor a variable that is bound in $F[x]$ (and thus
is itself also different from $x$ and from any variable bound in $F[x]$), and
let $F[t]$ be $F[x]$ with all free occurrences of $x$ replaced by $t$.  It
then holds that

\smallskip
\noindent
\begin{tabular}{L{3.5em}Sl}
\eqlab{rule-pullout-all} & $F[t]\; \equiv\; \all x\, (x\neq t \lor F[x])$.\\
\eqlab{rule-pullout-ex}  & $F[t]\; \equiv\; \ex x\, (x=t \land F[x])$.
\end{tabular}

\end{prop}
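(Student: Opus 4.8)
The plan is to verify each listed equivalence directly against the Tarskian semantics of second-order logic with equality. I would fix an interpretation $\mathcal{I}$ consisting of a non-empty domain together with an assignment of values to the free individual variables (domain elements), predicates (relations) and functions, recall that $F \equiv G$ means $\mathcal{I} \models F$ iff $\mathcal{I} \models G$ for every such $\mathcal{I}$, and then treat the propositional equivalences and the quantifier equivalences separately.

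The propositional equivalences --- double negation \ref{rule-not-not}, the De Morgan laws \ref{rule-not-and} and \ref{rule-not-or}, associativity, commutativity and idempotence \ref{rule-ao-assoc}--\ref{rule-ao-idem}, the truth-value interactions \ref{rule-tv-not-t}--\ref{rule-tv-or-f}, complementary cancellation \ref{rule-complem-and}, \ref{rule-complem-or}, and distribution \ref{rule-dist-dnf}, \ref{rule-dist-cnf} --- are tautologies in the truth values of their immediate subformulas, so each holds under every $\mathcal{I}$ by a routine truth-table check once every maximal subformula whose top symbol is not among $\lnot, \land, \lor, \true, \false$ is treated as a propositional atom. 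The absorption equivalences \ref{rule-subs-and-absorp} and \ref{rule-subs-or-absorp} follow once $F \entails G$ is read semantically: if every model of $F$ models $G$, then $\mathcal{I} \models F \land G$ collapses to $\mathcal{I} \models F$, and dually for $\lor$. The clausal simplifications \ref{rule-taut}--\ref{rule-unit} are then derived rather than proved afresh: tautology reduction combines \ref{rule-complem-or} (resp. \ref{rule-complem-and}) with \ref{rule-tv-or-t} (resp. \ref{rule-tv-and-t}); subsumption is an instance of the absorption laws applied to the two clauses, noting that the clause with the smaller member set entails (is entailed by) the other in a conjunctive (disjunctive) matrix; and unit reduction follows from \ref{rule-complem-and} and the truth-value laws.

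For the quantifier group --- the negation--quantifier laws \ref{rule-not-all}, \ref{rule-not-ex}, the truth-value quantifier laws \ref{rule-tv-q-t}, \ref{rule-tv-q-f}, the shifting laws \ref{rule-all-out-and}, \ref{rule-ex-out-or}, \ref{rule-q-out-ao}, the vacuity law \ref{rule-quant-drop}, the flip \ref{rule-quant-flip} and the renaming law \ref{rule-var-rename} --- I would first establish, at the generality of the second-order setting, a \emph{Coincidence Lemma} (the truth value of a formula depends only on the interpretation of the symbols occurring free in it) and a \emph{Substitution Lemma} (the value of $F[t]$ under $\mathcal{I}$ equals the value of $F[x]$ under $\mathcal{I}$ modified so that $x$ denotes the value of $t$, provided the substitution is capture-free). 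With the Coincidence Lemma in hand all of these reduce to elementary manipulations of quantification over assignments: the side conditions ``$v$ does not occur free in $\ldots$'' are exactly what the lemma requires, the non-emptiness of the domain is what makes \ref{rule-tv-q-t}, \ref{rule-tv-q-f} and the individual-variable case of \ref{rule-quant-drop} hold, and the freshness condition in \ref{rule-var-rename} is what makes the renaming capture-free. Since the laws are stated uniformly for $v, w$ ranging over individual variables or predicates, the one point requiring attention rather than calculation is to formulate both lemmas for second-order assignments.

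The substantive cases are the circumlocution equivalences \ref{rule-pullout-all} and \ref{rule-pullout-ex}, and I expect the Substitution Lemma to be the main obstacle. Writing $d_0$ for the value of $t$ under $\mathcal{I}$, the side condition on $t$ guarantees that the substitution underlying $F[t]$ is capture-free, so $\mathcal{I} \models F[t]$ iff $\mathcal{I}[x \mapsto d_0] \models F[x]$. For \ref{rule-pullout-all}, the formula $\all x\, (x \neq t \lor F[x])$ holds iff for every domain element $d$ the disjunct $F[x]$ holds whenever $d = d_0$; since $d_0$ is the unique such $d$, this is equivalent to $\mathcal{I}[x \mapsto d_0] \models F[x]$, hence to $\mathcal{I} \models F[t]$. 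For \ref{rule-pullout-ex}, $\ex x\, (x = t \land F[x])$ holds iff some $d$ satisfies both $d = d_0$ and $F[x]$, which again isolates $d = d_0$ and gives $\mathcal{I}[x \mapsto d_0] \models F[x]$, hence $\mathcal{I} \models F[t]$. The only genuine care needed throughout is the bookkeeping of the capture-freeness side conditions so that the Substitution Lemma is applicable; the logical content of every equivalence is otherwise immediate from the semantics.
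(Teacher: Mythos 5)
Your proposal is correct in substance, but note that the paper does not actually prove Prop.~\ref{prop-equi}: it is presented as a list of well-known equivalences whose selection is motivated by their role as rewrite rules, so there is no ``paper proof'' to compare against. Your semantic verification --- truth-functional evaluation for the propositional group, a Coincidence Lemma and a Substitution Lemma formulated for second-order assignments for the quantifier group, and a direct argument isolating the witness $d_0$ for the circumlocution laws \ref{rule-pullout-all} and \ref{rule-pullout-ex} --- is the standard way to discharge all of this, and you correctly identify the only two points that genuinely require attention: non-emptiness of the domain (for \ref{rule-tv-q-t}, \ref{rule-tv-q-f} and \ref{rule-quant-drop}) and capture-freeness of the substitution of $t$ for $x$ (for the circumlocution laws, where you should also note explicitly that, since $t$ does not contain $x$, its value is unchanged under the reassignments $\mathcal{I}[x \mapsto d]$). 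Two small points of care. First, for the predicate instances of \ref{rule-tv-q-t}, \ref{rule-tv-q-f} and \ref{rule-quant-drop} the relevant non-emptiness is that of the set of relations of the given arity (always satisfied, e.g.\ by the empty relation), not of the individual domain. Second, in deriving \ref{rule-subs} from the absorption laws, observe that \ref{rule-subs-and-absorp} and \ref{rule-subs-or-absorp} delete the \emph{entailed} conjunct, respectively the \emph{entailing} disjunct --- in both the conjunctive and the disjunctive matrix this is the clause that \emph{contains} the other's members; deleting the subset clause instead would not preserve equivalence, so your write-up should state explicitly which clause your argument licenses removing.
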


\medskip

\noindent
Proposition~\ref{prop-equi-dual} below also provides a basis for reversible
inferences, but does not state an equivalence of formulas. Instead, it states
an equivalence of statements about formulas.  Let $\f{dual}(F)$ denote the
\defname{dual} of a first- or second-order formula~$F$ (whose only operators
are $\true$, $\false$, $\lnot$, $\land$, $\lor$, $\ex$, and $\all$),
that is, the formula obtained from $F$ by switching $\true$ with $\false$,
$\land$ with $\lor$, and $\all$ with $\ex$. The dual of~$F$ is
equivalent to the negation of $F$ after negating each atom occurrence.

\begin{prop}[Preservation of Equivalence under Duality]
\label{prop-equi-dual}
Let $F$ and $G$ be first- or second-order formulas. It then holds that
\[F \equiv G\; \text{ if and only if }\; \dual(F) \equiv \dual(G).\]
\end{prop}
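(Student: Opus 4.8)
The plan is to derive the biconditional from a single semantic correspondence between $F$ and its dual, together with the observation that $\dual$ is a syntactic involution. First I would record that $\dual(\dual(F)) = F$, since switching $\true$ with $\false$, $\land$ with $\lor$, and $\all$ with $\ex$ twice restores every operator while atoms and $\lnot$ are left untouched; moreover $\dual(\lnot F) = \lnot\dual(F)$. Because of this involution it suffices to establish only the implication from $F \equiv G$ to $\dual(F) \equiv \dual(G)$: the converse then follows by applying that implication to the formulas $\dual(F)$ and $\dual(G)$ and simplifying $\dual(\dual(\cdot))$ back to the identity.

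The central device is a complementation operation on interpretations. Given an interpretation $\mathcal{I}$ with domain $D$, let $\mathcal{I}^{c}$ agree with $\mathcal{I}$ on the domain, on all constants and functions, and on the assignment to free individual variables, but interpret every predicate~$p$ of arity~$n$ by the complement $D^{n}\setminus p^{\mathcal{I}}$ (for nullary~$p$ this flips its truth value). This map is an involutive bijection on interpretations over~$D$, and it commutes with updating the assignment of an individual variable. I would then prove the correspondence lemma
\[\mathcal{I} \models \dual(F) \quad\text{iff}\quad \mathcal{I}^{c} \not\models F\]
by structural induction on~$F$. For a predicate atom $p\,t_1\ldots t_n$ the claim is immediate from the definition of~$\mathcal{I}^{c}$, since the terms are evaluated identically in $\mathcal{I}$ and $\mathcal{I}^{c}$ and membership is complemented. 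The induction steps for $\lnot$, $\land$, $\lor$, $\all$ and $\ex$ are routine: each uses that $\dual$ pushes through the connective in De~Morgan fashion (e.g.\ $\dual(F \land G) = \dual(F) \lor \dual(G)$ and $\dual(\all x\, F) = \ex x\, \dual(F)$), so that the induction hypothesis unfolds $\mathcal{I}\models\dual(F\land G)$ to $\mathcal{I}^{c}\not\models F\land G$ and $\mathcal{I}\models\dual(\all x\,F)$ to $\mathcal{I}^{c}\not\models\all x\,F$, the latter relying on the commutation of complementation with assignment updates.

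Granting the correspondence lemma, the proposition follows quickly: assuming $F \equiv G$, for every $\mathcal{I}$ we have $\mathcal{I} \models \dual(F)$ iff $\mathcal{I}^{c} \not\models F$ iff $\mathcal{I}^{c} \not\models G$ iff $\mathcal{I} \models \dual(G)$, whence $\dual(F) \equiv \dual(G)$, and the converse is supplied by the involution as noted above. I expect the \emph{main obstacle} to be the base case for \emph{equality atoms}. Complementation can flip the extension of an ordinary predicate, but the interpretation of~$=$ is fixed as genuine identity and cannot be complemented, so $\mathcal{I}^{c} \models (s = t)$ holds iff $\mathcal{I} \models (s = t)$ rather than iff its negation; already $F := (x = x)$ shows that the correspondence lemma fails verbatim on equality atoms. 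Reconciling the treatment of~$=$ is therefore the delicate point: one either confines the induction to the predicate-atom case, or argues instead at the level of the identity $\dual(F) \equiv \lnot N(F)$ recalled in the preamble (with $N$ negating every atom occurrence), taking care that the negation of equality atoms is matched consistently on both sides of the claimed biconditional.
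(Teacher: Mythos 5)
The paper states this proposition without proof, offering only the preceding remark that $\dual(F)$ is equivalent to the negation of $F$ after negating each atom occurrence; your complementation-of-interpretations lemma is exactly the semantic content of that remark, and your reduction of the converse direction to the involution $\dual(\dual(F)) = F$ is sound. For formulas whose atoms are all predicate atoms the induction goes through and the argument is complete (for the second-order quantifier cases one additionally observes that $R \mapsto D^n \setminus R$ is a bijection on $n$-ary relations, so that $(\mathcal{I}[p \mapsto R])^{c} = \mathcal{I}^{c}[p \mapsto D^n \setminus R]$ ranges over all predicate-updates of $\mathcal{I}^{c}$).

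Your worry about equality, however, is not merely a wrinkle in the proof --- it is a genuine defect of the statement under the paper's conventions. Since the paper's default is formulas \emph{with} equality and $\dual$ leaves atoms untouched, the proposition as literally stated is false: take $F := (x = x)$ and $G := \true$; then $F \equiv G$, but $\dual(F) = (x = x) \equiv \true$ while $\dual(G) = \false$. (A variant avoiding truth-value constants: $\ex y\, (y = x)$ and $\all y\, (y = y)$ are both valid, yet their duals $\all y\, (y = x)$ and $\ex y\, (y = y)$ are not equivalent.) Consequently neither of your two proposed repairs can succeed without amending the statement: confining the induction to predicate atoms leaves the equality base case unprovable because it is untrue, and the route via $\dual(F) \equiv \lnot N(F)$ founders at the same point, since uniform atom negation sends the valid $x = x$ to the unsatisfiable $x \neq x$ and so does not preserve equivalence for a logically interpreted atom. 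The correct resolution is either to build the swap of $=$ with $\neq$ into $\dual$ --- whereupon negating equality atoms syntactically plays exactly the role that complementing $p^{\mathcal{I}}$ plays for predicate atoms, and your induction closes --- or to restrict the proposition to formulas without equality. You identified the right obstacle; commit to one of these fixes rather than leaving the reconciliation open.
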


\medskip

\noindent
Proposition~\ref{prop-entail} shows entailments between formulas that are
considered as rules for inferences that are not reversible (\de{Regeln für
  nicht umkehrbare Schlüsse}). Rewriting with them oriented from left to right
yields a formula that is weaker than or equivalent to the original formula.
This can be useful, for example, since validity is preserved by weaker
formulas, since establishing the entailment relationship can give rise to
equivalence preserving rewritings where the entailment is a precondition, and
in the context of methods like resolution that proceed by enriching a given
formula with entailed formulas.

\begin{prop}[Entailments Useful for Rewriting]
\label{prop-entail}
We consider formulas of first- and second-order logic.

\smallskip

\noindent
For formulas $F, G, H$ and first-order variable $x$ it holds that

\smallskip

\noindent
\begin{tabular}{L{3.5em}SL{10.5cm}}
\enlab{inf-v} & 
  $(F \lor G) \land (H \lor \lnot G) \entails F \lor H$.\\
\enlab{inf-v-q} &
  $\all x\, (F \lor G) \land 
  \all x\, (H \lor \lnot G) \entails \all x\, (F \lor H)$.
\end{tabular}

\medskip

\noindent
Let $F[^+G]$ ($F[^-G]$, resp.) be a formula with a positive (negative, resp.)
occurrence of subformula $G$. Let $F[^+H]$ ($F[^-H]$, resp.)  denote $F$ with
the occurrence of $G$ replaced by formula~$H$. It then holds that

\smallskip

\noindent
\begin{tabular}{L{3.5em}SL{10.5cm}}
\enlab{inf-vbar-pos} &
$F[^+G] \land (\lnot G \lor H) \entails F[^+H]$.\\
\enlab{inf-vbar-neg} &
$F[^-G] \land (G \lor \lnot H) \entails F[^-H]$.\\
\end{tabular}

\medskip

\noindent
Let $F[^+G]$, $F[^-G]$, $G$, $H$ be as specified before, with the exception
that the first-order variables $x_1, \ldots, x_n$ possibly occur free in $G$,
$H$ and also elsewhere in $F[^+G]$ or $F[^-G]$, respectively. It then holds
that

\smallskip

\noindent
\begin{tabular}{L{3.5em}SL{10.5cm}}
\enlab{inf-vbar-pos-star} & $F[^+G] \land 
  \all x_1 \ldots \all x_n\, (\lnot G \lor H) 
  \entails F[^+H]$.\\ 
\enlab{inf-vbar-neg-star} & $F[^-G] \land 
  \all x_1 \ldots \all x_n\, (G \lor \lnot H)
  \entails F[^-H]$.\\
\end{tabular}

\end{prop}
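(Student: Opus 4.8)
The plan is to treat the six entailments of Proposition~\ref{prop-entail} in three stages, building from the single clause-level resolution step up to the replacement of a subformula occurrence in an arbitrary context. Throughout I reason model-theoretically: to show $P \entails Q$ I fix an arbitrary interpretation together with an assignment to the free individual variables and show that whenever $P$ is true there, $Q$ is true as well. First I would prove \ref{inf-v}. In a model of $(F \lor G) \land (H \lor \lnot G)$, distinguish two cases on the truth value of $G$ under the current assignment: if $G$ is true, the second conjunct forces $H$; if $G$ is false, the first conjunct forces $F$. Either way $F \lor H$ holds. The quantified form \ref{inf-v-q} then follows at once, since a model of $\all x\, (F \lor G) \land \all x\, (H \lor \lnot G)$ satisfies both conjuncts for every value assigned to $x$, so by \ref{inf-v} it satisfies $F \lor H$ for every such value, that is, $\all x\, (F \lor H)$.

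The remaining four entailments all rest on a monotonicity (polarity) lemma in its local, single-model form: in any interpretation under any assignment, if $G \imp H$ is true then $F[^+G] \imp F[^+H]$ is true, and if $H \imp G$ is true then $F[^-G] \imp F[^-H]$ is true. I would prove this by structural induction on the formula context surrounding the distinguished occurrence, tracking the sign flip at each negation and using that $\land$, $\lor$, $\ex$ and $\all$ are monotone in each of their argument positions once polarity is accounted for. Given the lemma, \ref{inf-vbar-pos} and \ref{inf-vbar-neg} are immediate, once one reads the premises correctly: $\lnot G \lor H$ is $G \imp H$ and $G \lor \lnot H$ is $H \imp G$. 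Thus a model of $F[^+G] \land (\lnot G \lor H)$ has $G \imp H$ true and $F[^+G]$ true, hence $F[^+H]$ true; dually, a model of $F[^-G] \land (G \lor \lnot H)$ has $H \imp G$ true and $F[^-G]$ true, hence $F[^-H]$ true.

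The starred versions \ref{inf-vbar-pos-star} and \ref{inf-vbar-neg-star} are the delicate part, and I expect the capture of the variables $x_1, \ldots, x_n$ to be the main obstacle. Here $G$ and $H$ may share free variables with quantifiers of $F$ that enclose the distinguished occurrence, so the bare condition must be strengthened to the universal closure $\all x_1 \ldots \all x_n\, (\lnot G \lor H)$. I would prove these by the same structural induction as the monotonicity lemma, except that when the induction descends under a quantifier $\all x_i$ or $\ex x_i$ binding one of the critical variables, the inductive hypothesis has to be applied under the reassigned value of $x_i$; the universal closure is precisely what guarantees that $G \imp H$ remains true under every such reassignment, so the inductive step still goes through. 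The non-starred rules \ref{inf-vbar-pos} and \ref{inf-vbar-neg} then reappear as the special case $n = 0$, where no enclosing quantifier binds a free variable of $G$ or $H$ and the plain premise $\lnot G \lor H$ already suffices at the top level.
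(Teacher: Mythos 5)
Your proof is correct. Note, however, that the paper itself supplies no proof of Proposition~\ref{prop-entail}: it is presented as a list of well-known facts, chosen because they correspond to Behmann's computation rules V, V$^*$, $\overline{\text{V}}$ and $\overline{\text{V}}^*$ (see the concordance table), so there is no argument in the paper to compare yours against. What you give is the standard semantic justification, and it fills the gap cleanly: the two-case analysis on $G$ for \ref{inf-v}, its pointwise lifting for \ref{inf-v-q}, and the polarity/monotonicity lemma by structural induction on the context for the replacement rules. You also correctly identify the one genuinely delicate point, namely that the non-starred rules \ref{inf-vbar-pos} and \ref{inf-vbar-neg} only make sense when no quantifier of $F$ enclosing the distinguished occurrence captures a free variable of $G$ or $H$, and that the universal closure over $x_1,\ldots,x_n$ in the starred versions is exactly what keeps the inductive step sound under reassignment of those variables. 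One small remark: since the proposition is stated for second-order formulas, the context $F[\cdot]$ may also contain predicate quantifiers; your induction handles these identically to individual quantifiers, but strictly speaking the same capture caveat applies to predicates free in $G$ or $H$, a restriction the proposition leaves implicit just as it does for individual variables in the non-starred case.
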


\medskip

\noindent
Table~\ref{tab-concordance-b22} provides a concordance of the equivalences and
entailments stated as propositions in this section and the computation rules
(\de{Rechenregeln}) of \cite{beh:22}.  Rules~I--IV$^*$, VI, VII and IX--XI are
equivalences of formulas. Rule~VIII is an equivalence of statements about
formulas.  Rules V--$\overline{\text{V}}^*$ are entailments.  The starred
rules concern quantifiers.  The correspondence given here is not in all cases
one-to-one: The specification in \cite{beh:22} is only informal, such that we
have to interpret some details in a specific way or quietly apply
generalizations that are straightforward from today's point of view.  Also, in
\cite{beh:22} conjunction and disjunction are understood directly as $n$-ary
operators, such that in some cases the effect of the rules from \cite{beh:22}
can only be achieved by repeated application of the listed corresponding
equivalences.  Equivalences~\ref{rule-tv-not-t}--\ref{rule-tv-or-f} which
concern embedded truth-value operators are not labeled in \cite{beh:22}, but
listed on p.~182 and~188.

\begin{table}
\label{tab-concordance-b22}
\small
\centering
\begin{tabular}{lSlSl}
\textit{Label} & \textit{Correspondence} & \textit{Explanation}\\\toprule
I & \ref{rule-not-not} & double negation\\
II & \ref{rule-ao-assoc} & associativity of conjunction and disjunction\\
II$^*$ & \ref{rule-q-out-ao} & modifying the scope of quantifiers\\
III & \ref{rule-ao-comm} & commutativity of conjunction and disjunction\\
III$^*$ & \ref{rule-quant-flip} & permuting quantifiers of the same type\\
IV & \ref{rule-ao-idem} & idempotence of conjunction and disjunction\\
IV$^*$ & \ref{rule-all-out-and}, \ref{rule-ex-out-or}
& merging quantifiers over conjunctions and disjunctions\\
V & \ref{inf-v} & generalized propositional resolvent\\
V$^*$ & \ref{inf-v-q} 
  & generalized propositional resolvent, with quantification\footnotemark\\
$\overline{\text{V}}$ & \ref{inf-vbar-pos}, \ref{inf-vbar-neg}
& substitution of subformula by implied/implying formula\\
$\overline{\text{V}}^\star$ &
\ref{inf-vbar-pos-star}, \ref{inf-vbar-neg-star} & 
like $\overline{\text{V}}$, with quantification over the implication\\
VI & \ref{rule-not-and}, \ref{rule-not-or} & De Morgan's laws\\
VI$^*$ & \ref{rule-not-all}, \ref{rule-not-ex} & negated quantifications\\
VII & \ref{rule-dist-cnf}, \ref{rule-dist-dnf}
 & distribution among conjunction and disjunction\\
VIII & Prop.~\ref{prop-equi-dual}
     & preservation of equivalence under duality\\
IX & \ref{rule-subs}, \ref{rule-unit}
   & subsumption and unit reduction in clausal forms\\
X & \ref{rule-subs-and-absorp}, \ref{rule-subs-or-absorp}
& absorption of entailed conjuncts and entailing disjuncts\\
XI & \ref{rule-pullout-all}, \ref{rule-pullout-ex} &
     circumlocution of argument terms\\
\bottomrule
\end{tabular}
\smallskip

\caption{Concordance with the rule labels in \cite{beh:22}.}
\end{table}

\footnotetext{The description of V$^*$ in the appendix of \cite{beh:22} might
  suggest also the entailment $\ex x\, (F \lor G) \land \ex x\, (H
  \lor \lnot G) \entails \ex x\, (F \lor H)$, which does not hold in
  general.}

\subsection{Deciding Propositional Logic}
\label{sec-propositional}

As outlined in \cite{beh:22}, a propositional formula $F$ whose atoms are
$p_1, \ldots, p_n$ is valid if and only if the quantified Boolean
formula \[\all p_1 \ldots \all p_n\, F\] is true, and satisfiable if and
only if \[\ex p_1 \ldots \ex p_n\, F\] is true.  The Boolean
quantifiers immediately indicate the substitution method
(\de{Einsetzungsverfahren}) to decide propositional validity and
satisfiability: Let $F[p \mapsto G]$ denote $F$ under substitution of all
occurrences of atom $p$ by $G$. Then $\all p\, F$ is true if and only if
\emph{for all} truth value constants~$G \in \{\true, \false\}$ it holds that
$F[p \mapsto G] \equiv \true$, while $\ex p\, F$ is true if and only if
\emph{there exists} a truth value constant~$G \in \{\true, \false\}$ such that
$F[p \mapsto G] \equiv \true$. For nested quantifiers, the corresponding
combined substitutions have to be evaluated.

\nocite{hilbert:2013}

A second method to decide propositional validity, attributed in \cite{beh:22}
to Bernays \cite{bernays:habil}, consists in producing a conjunctive normal
form, which is valid if and only if each of its clauses is valid, that is,
contains a literal and its complement.  In \cite{beh:22} it is noticed that
propositional satisfiability can be analogously decided by conversion to
disjunctive normal form:%
\footnote{Actually, \label{footnote-dnf} the names \de{konjunktive Normalform}
  and \de{disjunktive Normalform} as well as the precise realization of their
  duality seem due to \cite{beh:22} -- see \cite[p.~166]{church:book}.  In
  \cite{bernays:habil}, the work referenced by \cite{beh:22}, Bernays just
  speaks of \de{Normalform}, in the sense of conjunctive normal form. Footnote
  on p.~13 in \cite{bernays:habil} suggests that the duality with disjunctive
  normal form was not common knowledge at that time: \de{Es wäre ein Irrtum,
    nach Analogie [...] zu vermuten, dass ein Produkt-Ausdruck dann und nur
    dann eine beweisbare Formel ist, wenn mindestens eines der Glieder eine
    beweisbare Formel ist.}  (Can be paraphrased as: It would be an error to
  conjecture in analogy that a disjunction is a valid formula if and only if
  at least one of its disjuncts is a valid formula.)  See also
  \cite[Note~299]{church:book}.  In \cite{mancosu:zach:2015} it is observed
  that Hilbert considered conjunctive and disjunctive normal forms as well as
  decidability of propositional logic already in 1905. As further noted in
  \cite{mancosu:zach:2015}, Behmann writes on 27 December 1927 to Scholz
  \cite[Kasten 3, I~63]{beh:nl} that, as far as he remembers, he had learned
  about the solution of the decision problem for propositional logic using
  normal forms directly from Hilbert.

   As explained in \cite[p.~185,
    footnote~19]{beh:22}, Behmann uses \name{normal form} also more generally
  for a conjunction of disjunctions of arbitrary formulas, not necessary
  literals, or a disjunction of conjunctions of arbitrary formulas,
  respectively.}
A disjunctive normal form is satisfiable if and only if at least one of its
(conjunctive) clauses is satisfiable, that is, does not contain a literal and
its complement.  As indicated in \cite{beh:22}, quantified Boolean formulas
that contain universal as well as existential quantifiers then could be
evaluated by successive use of both normal forms.  A third method to decide
quantified Boolean formulas by moving quantifiers inward is also shown in
\cite{beh:22}. It is sketched below in Sect.~\ref{sec-qbf-inner}.

\section{Counting Quantifier Normal Form  for \MONE Formulas}
\label{sec-ccnf}

\label{sec-nfqe-mfe}

The method of \cite{beh:22} for the elimination of second-order quantifiers in
\QMONE formulas, stated here as Corollary~\ref{corr-elim}, based on
Theorem~\ref{thm-beh:22-final}, involves rewriting \MONE formulas to
equivalent formulas that are constructed from a restricted set of basic
formulas.  In the target format, the only quantifiers permitted are counting
quantifiers, and only in occurrences where their scopes are not nested.

The development of methods to express a class of formulas by Boolean
combinations of certain basic formulas, typically with respect to a given
background theory, is the core of \name{elimination of quantifiers}, the
prevailing program of model theory in the 1920s (see
e.g.~\cite[Sect.~1.5]{chang:keisler}, \cite[Sect.~2.7]{hodges:shorter}).
Although \cite{beh:22} does not explicitly reference works clearly associated
with that program, we present here the method to solve the ``problem of normal
form'' (\name{Problem der Normalform}) given in \cite[\S~20]{beh:22} as
instance of such a quantifier elimination method -- with respect to the empty
theory -- following the template given in \cite[Sect.~1.5]{chang:keisler}.

The presentation in this section proceeds ``bottom-up''. First, counting
quantifiers, which are important as constituents of the basic formulas, are
discussed, then the construction of the normal form is shown.

\subsection{Counting Quantifiers}
\label{sec-cq}

As already indicated, formulas with counting quantifiers belong to the basic
formulas of the envisaged target format.  A counting quantifier $\ex^{\geq
  n} x$, where $n$ is a natural number \geqone, expresses existence of at
least $n$ individuals~$x$.  It is well known that these quantifiers can be
defined in terms of standard first-order quantifiers and equality literals.
There are two obvious possibilities to do so, which are both used in
\cite{beh:22}, as we will see below in
Sect.~\ref{sec-two-counting-expansions}. The following proposition lists them
as equivalences:

\begin{prop}[First-Order Expansions of Existential Counting Quantifiers]
\label{prop-exp-counting-ex}
Let $F[x]$ be a first-order formula which possibly has free occurrences of
variable $x$ and let $n$ be a natural number \geqone. Let $x_1, \ldots, x_n$
be distinct variables that are fresh (that is, different from $x$ and not
occurring in $F[x]$), and, for $i \in \{1,\ldots,n\}$, let $F[x_i]$ denote
$F[x]$ with the free occurrences of $x$ replaced by $x_i$.  It then holds that

\medskip

\slab{prop-exp-eq-geq-poly}
$\ex^{\geq n} x\, F[x]\; \equiv\;
  \ex x_1 \ldots \ex x_n\, (
  \bigwedge_{1 \leq i \leq n}
  F[x_i]
  \; \land
  \bigwedge_{i < j \leq n}
  x_i \neq x_j).$

\slab{prop-exp-eq-geq-lin}
$\ex^{\geq n} x\, F[x]\; \equiv\;
  \all x_1 \ldots \all x_{n-1} \ex x\,
   (F[x]
   \, \land
   \bigwedge_{1 \leq i < n} x \neq x_i).$
   
\end{prop}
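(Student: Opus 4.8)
The two equivalences both characterize "there exist at least $n$ distinct individuals satisfying $F$", so the plan is to prove each one by establishing both directions semantically, working with an arbitrary interpretation $\mathcal{A}$ with domain $D$ and an arbitrary variable assignment. Let me sketch each part.

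For part~(\ref{prop-exp-eq-geq-poly}), I would argue as follows. The left side $\ex^{\geq n} x\, F[x]$ is true in $\mathcal{A}$ under a given assignment exactly when the set $S = \{ d \in D \mid \mathcal{A} \models F[x] \text{ with } x \mapsto d\}$ has cardinality at least $n$. For the forward direction, if $|S| \geq n$, pick $n$ pairwise distinct elements $d_1, \ldots, d_n \in S$ and assign $x_i \mapsto d_i$; then each conjunct $F[x_i]$ holds because $d_i \in S$ and because $x_i$ is fresh (so substituting $x_i$ for the free occurrences of $x$ correctly captures the semantics of $F$ at $d_i$), while each $x_i \neq x_j$ holds by distinctness. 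Hence the right side is satisfiable by suitable witnesses, so $\ex x_1 \ldots \ex x_n$ holds. For the converse, any witnesses $d_1, \ldots, d_n$ making the right side true must be pairwise distinct (by the $x_i \neq x_j$ conjuncts) and each must lie in $S$ (by the $F[x_i]$ conjuncts), so $|S| \geq n$. The key hygiene point, used in both directions, is the freshness of the $x_i$: because they do not occur in $F[x]$ and differ from $x$, the substitution $F[x_i]$ is capture-free and its truth under $x_i \mapsto d_i$ coincides with the truth of $F[x]$ under $x \mapsto d_i$.

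For part~(\ref{prop-exp-eq-geq-lin}), the right side reads: for every choice of $d_1, \ldots, d_{n-1} \in D$ there is some $d \in D$ with $d \in S$ and $d$ distinct from all of $d_1, \ldots, d_{n-1}$. I would show this is again equivalent to $|S| \geq n$. If $|S| \geq n$, then given any $d_1, \ldots, d_{n-1}$, these exclude at most $n-1$ elements of $S$, so $S$ still contains an element distinct from all of them, witnessing the inner existential. Conversely, suppose the right side holds but, for contradiction, $|S| \leq n-1$. Then I can enumerate $S$ (padding with repetitions if $|S| < n-1$) as values $d_1, \ldots, d_{n-1}$ covering all of $S$; for this particular choice there can be no $d \in S$ distinct from all $d_i$, contradicting the universally quantified assertion. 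Hence $|S| \geq n$, completing this direction.

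The only mild obstacle is the careful treatment of variable capture and the padding argument in the converse of~(\ref{prop-exp-eq-geq-lin}) when $|S|$ is strictly less than $n-1$; one must ensure the chosen $d_1, \ldots, d_{n-1}$ genuinely cover $S$, which is achieved by allowing repetitions among them (the $\all$ quantifiers impose no distinctness). Everything else is routine counting. An alternative, more syntactic route would be to derive~(\ref{prop-exp-eq-geq-lin}) from~(\ref{prop-exp-eq-geq-poly}) by repeatedly applying the quantifier-shifting and circumlocution equivalences of Prop.~\ref{prop-equi} (notably \ref{rule-pullout-all} and \ref{rule-q-out-ao}) to transform the block of existentials with explicit inequalities into the mixed $\all \ldots \ex$ prefix; however, I expect the direct semantic argument above to be cleaner and less error-prone, so I would present that as the main proof.
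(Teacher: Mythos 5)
Your argument is correct. Note that the paper itself offers no proof of Proposition~\ref{prop-exp-counting-ex}: it presents the two equivalences as the well-known first-order definitions of the counting quantifier, so there is no "paper proof" to diverge from. Your direct semantic verification -- reducing both sides to the condition $|S|\geq n$ on the set $S$ of satisfying domain elements -- is the standard way to justify these equivalences, and you handle the two points that actually need care: capture-freeness of the substitutions $F[x_i]$ via freshness of the $x_i$, and the padding-with-repetitions step in the converse direction of~(\ref{prop-exp-eq-geq-lin}) when $|S|<n-1$ (including, implicitly, the degenerate case $S=\emptyset$, which is unproblematic since domains are nonempty). The alternative syntactic derivation you mention at the end is unnecessary; the semantic route is the right choice here.
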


\noindent
The statement $\ex^{\geq n} x\, \true$ expresses that the domain has at least
$n$ members.  Further properties of $\ex^{\geq n}$ are gathered in the
following proposition:

\begin{prop}[Properties of Existential Counting Quantifiers]
\label{prop-cqex-properties}
For all first-order formulas $F$ and natural numbers~$n, m \geq 1$ it holds
that

\smallskip

\slab{prop-cqex-false} $\ex^{\geq n} x\, \false \equiv \false$.

\slab{prop-cqex-one} $\ex^{\geq 1} x\, F \equiv \ex x\, F$.

\slab{prop-cqex-one-true} $\ex^{\geq 1} x\, \true \equiv \true$.

\slab{prop-cqex-entails}
$\ex^{\geq n} x\, F \entails \ex^{\geq m} x\, F$, if $m \leq n$.
\end{prop}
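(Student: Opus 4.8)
The plan is to establish the three equivalences \ref{prop-cqex-false}--\ref{prop-cqex-one-true} by unfolding the counting quantifier through one of the first-order expansions provided by Proposition~\ref{prop-exp-counting-ex} and then simplifying with the propositional equivalences collected in Proposition~\ref{prop-equi}, whereas the entailment \ref{prop-cqex-entails} is handled by a direct semantic argument about the number of witnesses. None of the four claims is genuinely deep; the value of the proof lies in reducing each to a mechanical simplification and in being explicit about the conventions for empty prefixes and empty conjunctions.

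For \ref{prop-cqex-false} I would apply the expansion \ref{prop-exp-eq-geq-poly}: substituting $F[x] := \false$ turns every conjunct $F[x_i]$ into $\false$, so by \ref{rule-tv-and-f} the whole matrix collapses to $\false$, and repeated application of \ref{rule-tv-q-f} strips the existential prefix, leaving $\false$. For \ref{prop-cqex-one} I would instantiate the linear expansion \ref{prop-exp-eq-geq-lin} at $n = 1$: the universal prefix $\all x_1 \ldots \all x_{n-1}$ is then empty and the disequality conjunction $\bigwedge_{1 \leq i < n} x \neq x_i$ is the empty conjunction, hence $\true$, so by \ref{rule-tv-and-t} what remains is exactly $\ex x\, F[x]$. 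Claim \ref{prop-cqex-one-true} then follows immediately by chaining \ref{prop-cqex-one} with $\ex x\, \true \equiv \true$ (rule \ref{rule-tv-q-t}).

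The entailment \ref{prop-cqex-entails} is cleanest semantically: any interpretation satisfying $\ex^{\geq n} x\, F$ supplies at least $n$ pairwise distinct domain elements satisfying $F$, and since $m \leq n$ any $m$ of these already witness $\ex^{\geq m} x\, F$, so the left side entails the right side. I do not expect a real obstacle anywhere; the only points that require a little care are the empty-prefix and empty-conjunction conventions invoked in the case $n = 1$ of \ref{prop-cqex-one}, and the tacit reliance on non-empty domains that makes $\ex x\, \true \equiv \true$ available in \ref{prop-cqex-one-true}.
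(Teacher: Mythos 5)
Your proposal is correct. The paper states this proposition without any proof, treating the four properties as immediate consequences of the expansions in Prop.~\ref{prop-exp-counting-ex} and the semantics of counting quantifiers; your derivation via those expansions together with the truth-value simplification rules (including the careful handling of the empty prefix and empty conjunction at $n=1$), and the direct semantic witness-counting argument for the entailment, is exactly the intended justification.
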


\noindent
Let $\all^{\cca{n}} x$, where $n$ is a natural number \geqone, be further
counting quantifiers, defined as shorthand for $\lnot \ex^{\geq n} x
\lnot$ .  They express ``for all with the exception of less than $n$
individuals~$x$ it holds that''. In analogy to
Prop.~\ref{prop-exp-counting-ex}, they can be expanded in two way as follows:
\begin{prop}[First-Order Expansions of Universal Counting Quantifiers]
\label{prop-exp-counting-all}
Let $F[x], n, x_1, \ldots x_n, F[x_i]$ be as specified in
Prop.~\ref{prop-exp-counting-ex}. It then holds that

\medskip

\slab{prop-exp-eq-lt-poly}
$\all^{\cca{n}} x\, F[x]\; \equiv\;
  \all x_1 \ldots \all x_n\,
  (\bigvee_{1 \leq i \leq n}
  F[x_i]
  \; \lor
  \bigvee_{i < j \leq n}
  x_i = x_j).$

\slab{prop-exp-eq-lt-lin}
$\all^{\cca{n}} x\, F[x]\; \equiv\;
  \ex x_1 \ldots \ex x_{n-1} \all x\,
   (F[x]
   \, \lor
   \bigvee_{1 \leq i < n} x = x_i).$
\end{prop}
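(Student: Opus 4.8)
The plan is to derive both equivalences directly from the definition of the universal counting quantifier as a shorthand, namely $\all^{\cca{n}} x\, F[x] = \lnot \ex^{\geq n} x\, \lnot F[x]$, together with the two expansions of $\ex^{\geq n}$ already established in Prop.~\ref{prop-exp-counting-ex}. First I would instantiate that proposition with $\lnot F[x]$ in place of $F[x]$; this is legitimate because $\lnot F[x]$ is again a first-order formula whose free occurrences of $x$ are exactly those of $F[x]$, and because substitution commutes with negation, so that $\lnot F[x]$ with $x$ replaced by $x_i$ is just $\lnot (F[x_i])$. It then remains only to propagate the leading negation inward using the interaction rules of Prop.~\ref{prop-equi}, turning the expansion of $\ex^{\geq n}$ into the claimed expansion of $\all^{\cca{n}}$.

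For part (\ref{prop-exp-eq-lt-poly}), I would start from item (\ref{prop-exp-eq-geq-poly}) applied to $\lnot F[x]$, giving $\ex^{\geq n} x\, \lnot F[x] \equiv \ex x_1 \ldots \ex x_n\, (\bigwedge_{1 \leq i \leq n} \lnot F[x_i] \land \bigwedge_{i < j \leq n} x_i \neq x_j)$. Negating both sides and moving $\lnot$ across the existential prefix by \ref{rule-not-ex} replaces each $\ex$ by $\all$; De Morgan (\ref{rule-not-and}) turns the negated conjunction into a disjunction of negated conjuncts; double negation (\ref{rule-not-not}) rewrites $\lnot \lnot F[x_i]$ to $F[x_i]$; and since $\neq$ abbreviates a negated equality, $\lnot (x_i \neq x_j)$ becomes $x_i = x_j$. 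The result is exactly the right-hand side of (\ref{prop-exp-eq-lt-poly}).

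For part (\ref{prop-exp-eq-lt-lin}), the argument is identical in spirit, starting from item (\ref{prop-exp-eq-geq-lin}) applied to $\lnot F[x]$, namely $\ex^{\geq n} x\, \lnot F[x] \equiv \all x_1 \ldots \all x_{n-1} \ex x\, (\lnot F[x] \land \bigwedge_{1 \leq i < n} x \neq x_i)$. Here propagating $\lnot$ through the mixed quantifier prefix by \ref{rule-not-all} and \ref{rule-not-ex} flips the alternation, sending $\all x_1 \ldots \all x_{n-1} \ex x$ to $\ex x_1 \ldots \ex x_{n-1} \all x$, while the inner negated conjunction is treated exactly as before, yielding $F[x] \lor \bigvee_{1 \leq i < n} x = x_i$. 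This is precisely the right-hand side of (\ref{prop-exp-eq-lt-lin}).

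There is no genuine obstacle here; the work is entirely routine negation pushing, and the proposition may in fact be regarded as the dual of Prop.~\ref{prop-exp-counting-ex}. The only points that require a little care are bookkeeping ones: that $\neq$ is meta-level notation for a negated equality atom, so that $\lnot (x_i \neq x_j)$ legitimately collapses to $x_i = x_j$; that the freshness side conditions on $x_1, \ldots, x_n$ are inherited unchanged from Prop.~\ref{prop-exp-counting-ex}; and that passing from $F$ to $\lnot F$ does not disturb those conditions, since negation introduces no new variable occurrences.
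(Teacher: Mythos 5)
Your derivation is correct and is exactly the argument the paper intends: the paper states this proposition without proof, merely remarking that it holds ``in analogy to'' Prop.~\ref{prop-exp-counting-ex} after defining $\all^{\cca{n}} x$ as shorthand for $\lnot \ex^{\geq n} x \lnot$, and your instantiation of the existential expansions at $\lnot F[x]$ followed by routine negation pushing (\ref{rule-not-ex}, \ref{rule-not-all}, \ref{rule-not-and}, \ref{rule-not-not}) is precisely that duality argument made explicit. The bookkeeping points you flag (that $\neq$ abbreviates a negated equality atom and that the freshness conditions are undisturbed by negation) are the only subtleties, and you handle them correctly.
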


\noindent
The statement $\all^{\cca{n}} x\, \false$ expresses that the domain has
less than $n$ members.  Further properties of $\all^{\cca{n}}$ are gathered
in the following proposition, analogously to Prop.~\ref{prop-cqex-properties}:

\begin{prop}[Properties of Universal Counting Quantifiers]
\label{prop-cqall-properties}
For all first-order formulas $F$ and natural numbers~$n, m \geq 1$ it holds
that

\smallskip

\slab{prop-cqall-true}
$\all^{\cca{n}} x\, \true \equiv \true$.

\slab{prop-cqall-one} 
$\all^{\cca{1}} x\, F \equiv \all x\, F$

\slab{prop-cqall-one-false} $\all^{\cca{1}} x\, \false \equiv \false$.

\slab{prop-cqall-entails}
$\all^{\cca{n}} x\, F \entails \all^{\cca{m}} x\, F$, if $n \leq m$.

\end{prop}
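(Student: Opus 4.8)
The plan is to derive every item directly from the defining equivalence $\all^{\cca{n}} x\, F \equiv \lnot \ex^{\geq n} x\, \lnot F$ together with the matching item of Prop.~\ref{prop-cqex-properties} for the existential counting quantifier. Each assertion about $\all^{\cca{n}}$ is simply the De~Morgan dual of the corresponding assertion about $\ex^{\geq n}$, so the work consists only in pushing negation through the quantifier and simplifying with the equivalences collected in Prop.~\ref{prop-equi}. No semantic reasoning about domains is needed beyond what is already packaged into Prop.~\ref{prop-cqex-properties}.

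For Prop.~\ref{prop-cqall-true} I would start from $\all^{\cca{n}} x\, \true \equiv \lnot \ex^{\geq n} x\, \lnot \true$, rewrite $\lnot \true$ to $\false$ by~\ref{rule-tv-not-t}, apply $\ex^{\geq n} x\, \false \equiv \false$ (Prop.~\ref{prop-cqex-false}), and conclude with $\lnot \false \equiv \true$ by~\ref{rule-tv-not-f}. For Prop.~\ref{prop-cqall-one} I unfold the definition to $\lnot \ex^{\geq 1} x\, \lnot F$, collapse the counting quantifier by Prop.~\ref{prop-cqex-one} to get $\lnot \ex x\, \lnot F$, and then use~\ref{rule-not-ex} followed by~\ref{rule-not-not} to reach $\all x\, F$. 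Prop.~\ref{prop-cqall-one-false} then drops out immediately by instantiating $F$ with $\false$ in the item just proved and applying $\all x\, \false \equiv \false$ (\ref{rule-tv-q-f}); alternatively it follows from the definition via Prop.~\ref{prop-cqex-one-true}.

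The one item that deserves a moment of care is the entailment Prop.~\ref{prop-cqall-entails}. Here I would argue by contraposition: the target entailment $\lnot \ex^{\geq n} x\, \lnot F \entails \lnot \ex^{\geq m} x\, \lnot F$ is equivalent to $\ex^{\geq m} x\, \lnot F \entails \ex^{\geq n} x\, \lnot F$, and the latter is an instance of Prop.~\ref{prop-cqex-entails} applied to the formula $\lnot F$, whose side condition instantiates to $n \leq m$ — exactly the hypothesis of the present item. The only place an error could slip in is precisely this bookkeeping: one must track the reversal of the entailment under negation of both sides together with the resulting flip of the inequality direction, so I would state explicitly that taking complements of both sides of an entailment reverses it. Beyond that single point the whole proof is mechanical, and I anticipate no genuine obstacle.
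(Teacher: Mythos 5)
Your proof is correct and follows exactly the route the paper intends: the paper states this proposition without proof, introducing it as the dual analogue of Prop.~\ref{prop-cqex-properties} via the definition $\all^{\cca{n}} x \eqdef \lnot \ex^{\geq n} x \lnot$, which is precisely the unfolding you carry out. Your explicit handling of the contraposition and inequality flip in Prop.~\ref{prop-cqall-entails} is the only step needing care, and you get it right.
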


\noindent
In \cite{beh:22} a dedicated symbolic notation that expresses counting
quantifiers for numbers~$n$ with $n$ stacked arcs is introduced. It covers
basic formulas of form (\ref{basic-exists}.) in
Theorem~\ref{thm-foqe-monadic-eq} and their negations.  For example, if
$\alpha[x]$ and $\beta[x]$ are formulas with free variable~$x$,
then \[\underparen{\underparen{\alpha\dot{\beta}}}\] stands for
$\all^{\cca{2}} x\, (\alpha[x] \lor \lnot \beta[x])$ (or, equivalently,
$\lnot \ex^{\geq 2} x\, (\lnot \alpha[x] \land \beta[x])$) and
\[\overparen{\overparen{\alpha\dot{\beta}\rule{0pt}{11pt}}}\] for
$\ex^{\geq 2} x\, (\alpha[x] \land \lnot \beta[x])$. The role of $\true$ and
$\false$ for the empty conjunction and disjunction, respectively, is played in
\cite{beh:22} by symbols $V$ and
\raisebox{\depth}{\rotatebox[origin=c]{180}{$V$}}
for the universal and the empty class,
which would correspond to versions of $\true$ and $\false$ that are notated
like unary predicates.

\label{sec-qp-pure}

As discussed in Sect.~\ref{sec-pure-cq-applic}, elimination of \emph{all}
predicate quantifiers in a formula without constants, without free individual
variables and without free predicate occurrences yields formulas that can be
represented by a Boolean combination of basic formulas of just the form
$\ex^{\geq n} x\, \true$ with $n \geq 1$. We call such a Boolean
combination a \name{pure counting formula}.  A straightforward way to decide
validity of pure counting formulas is by conversion to conjunctive normal
form, replacing literals $\lnot \ex^{\geq n} x\, \true$ with the
equivalent formula $\all^{\cca{n}} x\, \false$ (which we now also accept as basic
formulas) and simplifying each clause by Prop.~\ref{prop-cqex-entails}
and~\ref{prop-cqall-entails} such that each clause is either empty (that is,
is $\false$), contains a single basic formula or contains exactly two basic
formulas, one with existential and the other with universal counting
quantifier. The formula is then valid if and only if each clause is valid,
that is, (i) is not empty, or (ii) contains just $\ex^{\geq 1} x\, \true$,
or (iii) contains $\ex^{n} x\, \true$ and $\all^{\cca{m}} x\, \false$,
where $n \leq m$.

Dually, satisfiability of pure counting formulas can be decided by analogous
transformation to disjunctive normal form. The formula is then satisfiable if
and only if each (conjunctive) clause is satisfiable, that is, (i) is empty,
or (ii) contains just a single basic formula that is different from
$\all^{\cca{1}} x\, \false$, or (iii) contains two basic formulas
$\ex^{n} x\, \true$ and $\all^{\cca{m}} x\, \false$, where $n < m$.

Following \cite{beh:22}, this disjunctive form illustrates how domain
cardinalities are constrained by a pure counting formula: Each (conjunctive)
clause justifies a series of numbers with a lower limit or with lower as well
as upper limits as domain cardinalities.  A pure counting formula is thus
either true for all domain cardinalities with exception of a finite number or
false for all domain cardinalities with exception of a finite number.  For
sufficiently large domains, in particular for all infinite domains, a pure
counting formula is then either valid or unsatisfiable.

\subsection{Conversion to Counting Quantifier Normal Form}

We now approaching the proof of Theorem~\ref{thm-foqe-monadic-eq}. It is
helpful to isolate the following key step as a lemma on its own:
\begin{lem}[Auxiliary Elimination Lemma for First-Order Logic with Equality]
\label{lem-folelim-core}
For all first-order formulas $F[x]$, where $x$ is an individual variable that
possibly occurs free in $F[x]$, for all sequences $T = \{t_1, \ldots, t_n\}$
of $n \geq 0$ distinct constants or variables which are different from~$x$ and
do not occur in $F[x]$ and for all integers $m \geq 1$ let
$\f{NDT}_{F[x],T}(m)$ be the formula
\[\f{NDT}_{F[x],T}(m)\; \eqdef\;
\bigwedge_{\substack{S \subseteq T\\ \left\vert{S}\right\vert = m}}
     (\bigvee_{t \in S} \lnot F[t] \lor 
      \bigvee_{\substack{t_i, t_j \in S\\ i < j}} t_i = t_j),\]
where $F[t]$ denotes $F[x]$ with all free occurrences of $x$ replaced by $t$.
Then \[\ex x\, (F[x] \land \!\! \displaystyle\bigwedge_{1 \leq i \leq n} \!\! x
\neq t_i)\] is equivalent to

\medskip

\slab{lem-folelim-core-disj}
\vspace{-8pt}
\[\displaystyle\bigvee_{1 \leq m \leq n} \!\!
   \ex^{\geq m} x\, (F[x] \land \f{NDT}_{F[x],T}(m))
 \;\lor\; \ex^{\geq n+1} x\, F[x].\]

\slab{lem-folelim-core-conj}
\vspace{-8pt}
\[\ex^{\geq 1} x\, F[x]\; \land
 \displaystyle\bigwedge_{1 \leq m \leq n} \!\!
   \ex^{\geq m+1} x\, (F[x] \lor \f{NDT}_{F[x],T}(m)).\]
\end{lem}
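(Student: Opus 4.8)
The plan is to argue semantically, fixing an arbitrary interpretation together with an assignment to the free variables and reducing everything to a comparison of a few cardinalities. Write $N$ for the number of domain elements $d$ for which $F[d]$ holds (with the convention that an infinite $N$ exceeds every finite bound), and let $k$ be the number of \emph{distinct} domain elements among the values of $t_1, \ldots, t_n$ that satisfy $F$. The first thing I would record is a reading of the auxiliary formula: since the terms $t$ range over $T$ and each $F[t]$ as well as each equation $t_i = t_j$ is free of $x$, the formula $\f{NDT}_{F[x],T}(m)$ contains no free occurrence of $x$, and it is true precisely when no $m$ of the values of $t_1, \ldots, t_n$ are simultaneously pairwise distinct and satisfy $F$ --- that is, exactly when $k < m$. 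I would also note the elementary shifting fact that, because $\f{NDT}_{F[x],T}(m)$ is $x$-free, $\ex^{\geq m} x\, (F[x] \land \f{NDT}_{F[x],T}(m)) \equiv \f{NDT}_{F[x],T}(m) \land \ex^{\geq m} x\, F[x]$ and dually for a disjunctive body; this is the counting-quantifier analogue of \ref{rule-q-out-ao} and follows from the expansions in Prop.~\ref{prop-exp-counting-ex}.

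Next I would evaluate the left-hand side. The formula $\ex x\, (F[x] \land \bigwedge_{1 \leq i \leq n} x \neq t_i)$ asserts the existence of an $F$-element distinct from every $t_i$, i.e.\ an $F$-element lying outside the set of values taken by the $t_i$. As exactly $k$ of those values satisfy $F$, the number of $F$-elements outside them is $N - k$, so the left-hand side holds if and only if $N \geq k+1$. Observe also that $k \leq n$, since $k$ counts among at most $n$ values.

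For the disjunctive form~(\ref{lem-folelim-core-disj}) I would feed in the two observations above. Each disjunct $\ex^{\geq m} x\, (F[x] \land \f{NDT}_{F[x],T}(m))$ is true exactly when $k < m \leq N$, and the trailing disjunct $\ex^{\geq n+1} x\, F[x]$ is true exactly when $N \geq n+1$. Hence the whole disjunction holds iff there is an $m$ with $1 \leq m \leq n$ and $k+1 \leq m \leq N$, or else $N \geq n+1$; a short case split on whether $k < n$ or $k = n$ (using $k \leq n$) shows this to be equivalent to $N \geq k+1$, matching the left-hand side.

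The conjunctive form~(\ref{lem-folelim-core-conj}) is where I expect the real work, and it is the step I would treat most carefully. The natural characterisation to aim for is $N \geq k+1$ iff $N \geq 1$ and $k \geq m \Rightarrow N \geq m+1$ for every $m$ with $1 \leq m \leq n$, which mirrors the shape of~(\ref{lem-folelim-core-conj}): the conjunct $\ex^{\geq m+1} x\, (F[x] \lor \f{NDT}_{F[x],T}(m))$ collapses to $\ex^{\geq m+1} x\, F[x]$, i.e.\ $N \geq m+1$, when $k \geq m$, and to a plain domain-cardinality condition $\ex^{\geq m+1} x\, \true$ when $k < m$. The main obstacle is precisely this second case: when $\f{NDT}_{F[x],T}(m)$ is true the counting quantifier degenerates into a constraint on the size of the domain, so I would have to verify that the cardinality conditions that arise neither over- nor under-constrain the domain relative to the left-hand side. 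This hinges on the $F$-witness being genuinely distinct from the values of the $t_i$, and I expect the delicate point to be interpretations in which several of the terms $t_i$ collapse to a common value (so that the $t_i = t_j$ disjuncts of $\f{NDT}$ fire); checking that these are handled correctly --- or isolating the distinctness hypothesis on the $t_i$ under which the stated equivalence is intended --- is the crux of the argument.
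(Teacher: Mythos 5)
Your semantic counting argument is sound as far as it goes, and for the disjunctive form~(\ref{lem-folelim-core-disj}) it is complete: the reduction of $\f{NDT}_{F[x],T}(m)$ to the condition $k<m$, of the left-hand side to $N\geq k+1$, and the case split on $k<n$ versus $k=n$ give a full proof, which is considerably more explicit than the remark accompanying the lemma in the text (namely that the left side is implied by each disjunct and that one form is obtained from the other by distributing connectives). The gap is that you stop exactly where the remaining work lies: the conjunctive form~(\ref{lem-folelim-core-conj}) is analysed but never actually verified, so as a proof of the stated equivalence your argument is incomplete.

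That said, the difficulty you flag is not an artefact of your approach but a genuine feature of the statement. With $\f{NDT}_{F[x],T}(m)$ \emph{inside} the scope of $\ex^{\geq m+1} x$, the conjunct degenerates to the domain-cardinality constraint $\ex^{\geq m+1} x\, \true$ whenever $k<m$, and this can over-constrain the domain when the $t_i$ collapse in value: take $n=2$, two distinct constants $a,b$ denoting the same element $v$, and $F[x]=px$ with $p$ true at exactly one element $w\neq v$ of the two-element domain $\{v,w\}$; then $\ex x\,(px\land x\neq a\land x\neq b)$ holds, but $\ex^{\geq 3} x\,(px\lor \f{NDT}_{px,\{a,b\}}(2))$ demands three domain elements and fails. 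Your observation that the domain has at least $N$ plus the number of distinct non-$F$ values of the $t_i$ elements --- hence at least $n+1$ when those values are pairwise distinct and $N\geq k+1$ --- is precisely what rescues the inside reading under a semantic distinctness hypothesis on the $t_i$, which the lemma does not impose (it only requires syntactic distinctness). The worked examples following Theorem~\ref{thm-foqe-monadic-eq} resolve the matter the other way: there the conjuncts are written $\ex^{\geq m+1} x\, F[x] \lor \f{NDT}_{F[x],T}(m)$, with $\f{NDT}$ \emph{outside} the counting quantifier. Under that reading your two characterisations finish the proof in one line: the conjunctive form becomes $N\geq 1$ together with $(k\geq m \Rightarrow N\geq m+1)$ for all $1\leq m\leq n$, which, taking $m=k$ when $k\geq 1$, is equivalent to $N\geq k+1$, i.e.\ to the left-hand side and to the disjunctive form. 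So the fix is to adopt the outside reading (or add the distinctness hypothesis you mention) and then carry this last step out rather than deferring it.
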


\noindent
The shorthand $\f{NDT}_{F[x],T}(m)$ in Lemma~\ref{lem-folelim-core}
(suggesting \name{No $m$ are Denoted by a Term}) expresses that it is false
that $F$ applies to $m$ different individuals denoted by terms $t \in T$,
which is immediate from the following alternate way to write its definition:
\begin{equation}
\label{eq-form-max-m-f-intuition}
\f{NDT}_{F[x],T}(m)\; \equiv\;
\bigwedge_{\substack{S \subseteq T\\ \left\vert{S}\right\vert = m}}
     \lnot (\bigwedge_{t \in S} F[t] \land
      \bigwedge_{\substack{t_j, t_k \in S\\ j < k}} t_j \neq t_k).
\end{equation}

\noindent
Propositions~\ref{lem-folelim-core-disj} and~\ref{lem-folelim-core-conj} show
a disjunctive form and a conjunctive form in which the existential first-order
quantifier in $\ex x\, (F[x] \land \bigwedge_{1 \leq i \leq n} x \neq t_i)$ can
be ``eliminated'' in favor of counting quantifiers whose scopes do not include
the disequalities $x \neq t_i$. One form can be obtained from the other by
distributing connectives and simplifying.  As remarked in \cite{beh:22},
Lemma~\ref{lem-folelim-core} can be proven on the basis that the left side of
the proposition, that is, $\ex x\, (F[x] \land \bigwedge_{1 \leq i \leq n} x
\neq t_i)$, is implied by each disjunct of the right side of the disjunctive
form (Lemma~\ref{lem-folelim-core-disj}) and implies each conjunct of the
right side of the conjunctive form (Lemma~\ref{lem-folelim-core-conj}).

Based on Lemma~\ref{lem-folelim-core}, we now prove
Theorem~\ref{thm-foqe-monadic-eq}, the main theorem of ``quantifier
elimination'' for monadic first-order formulas with equality, following
roughly the template from \cite[Sect.~1.5]{chang:keisler}.  As explained in
Sect.~\ref{sec-meta-methods}, we explicitly assert the \emph{existence of an
  effective method} in the theorem.

\label{sec-thm-foqe-monadic-eq-fullproof}

\begin{thmref}{Counting Quantifier Normal Form for \MONE}
{\ref{thm-foqe-monadic-eq}} 
\noindent
There is an effective method to compute from a given
\MONE formula~$F$ a
formula~$F^{\prime}$ such that
\begin{enumerate}
\item \label{ref-thm-foqe-monadic-eq-item-boolean} $F^{\prime}$ is a Boolean combination of basic
formulas of the form:
\begin{enumerate}[label=(\alph*),ref=\alph*]
\item \label{ref-basic-nullary} $p$, where $p$ is a nullary predicate,
\item \label{ref-basic-unary} $pt$,
   where $p$ is a unary predicate and $t$ is a constant or an 
  individual variable,
\item \label{ref-basic-equality} $t = s$, where each of $t, s$ is a constant or an 
individual variable,
\item \label{ref-basic-exists} $\ex^{\geq n} x\, \bigwedge_{1 \leq i \leq
  m} L_i[x]$, where $n \geq 1$, $m \geq 0$ and the $L_i[x]$ are pairwise
  different and pairwise non-complementary positive or negative literals with
  a unary predicate applied to the individual variable~$x$,
\end{enumerate}
\item \label{ref-thm-foqe-monadic-eq-item-equiv} $F^{\prime} \equiv F$,
\item \label{ref-thm-foqe-monadic-eq-item-free} 
All free individual variables, constants and predicates in $F^{\prime}$ do
  occur in~$F$.
\end{enumerate}
\end{thmref}

\begin{proof}
Existence of a method that meets
items~(\ref{ref-thm-foqe-monadic-eq-item-boolean}.)
and~(\ref{ref-thm-foqe-monadic-eq-item-equiv}.) of the theorem statement
follows if (i) every atomic formula of the input class is a basic formula, and
(ii) there is an effective method to compute for all formulas of the form
$\ex v\, F$ where $v$ is an individual variable and $F$ a Boolean
combination of basic formulas an equivalent Boolean combination of basic
formulas.  The overall method to compute for a given formula in the input
class an equivalent Boolean combination of basic formulas then consists in
exhaustively rewriting (as explained in the proof of
Corollary~\ref{corr-elim}) subformula occurrences of the form of case~(ii) to
a Boolean combination of basic formulas.

Property~(i) is easy to see: Every atomic \MONE formula is clearly a basic
formula of the specified form (\ref{basic-nullary}.), (\ref{basic-unary}.) or
(\ref{basic-equality}.).

We now prove~(ii): Let $F$ be a Boolean combination of basic formulas and let
$v$ be a variable. We show how the formula $\ex v\, F$ can be converted
with equivalence preserving formula rewritings to a Boolean combination of
basic formulas. Unless indicated otherwise, the referenced equivalences are
applied there from left to right.

\begin{enumerate}[leftmargin=2.5em]
\item Convert $F$ to a disjunction~$F_1 \lor \ldots \lor F_n$, where $n \geq
  0$, of conjunctions of basic formulas or negated basic formulas
  (\ref{rule-not-not}--\ref{rule-not-or}, \ref{rule-dist-dnf}). The result is
  a generalization of disjunctive normal form, where basic formulas play the
  role of atoms.

\item Distribute the existential quantifier over the disjunction
  to obtain $\ex v\, F_1 \lor \ldots \lor \ex v\, F_n$, which is
  equivalent to $\ex v\, F$ (\ref{rule-ex-out-or} right to left).

\item Convert each disjunct $\ex v\, F_i$, for $1 \leq i \leq n$,
  separately to a Boolean combination~$F^{\prime}_i$ of basic formulas and
  disjoin the results. The resulting formula $F^{\prime}_1 \lor \ldots \lor
  F^{\prime}_n$ is a Boolean combination of basic formulas which is equivalent to
  $\ex v F$. The following equivalence preserving transformation steps are
  applied to each disjunct $\ex v\, F_i$:
\end{enumerate}
  
\begin{enumerate}[label={3.\arabic*.},leftmargin=2.5em] 

  \item In case there are two complementary conjuncts return with $F^{\prime}_i
     = \false$ (\ref{rule-ao-assoc}--\ref{rule-ao-comm},
     \ref{rule-complem-and}, \ref{rule-tv-and-f}, \ref{rule-tv-q-f}).

  \item In case $t \neq t$ is a conjunct, where $t$ is a term, return with
    $F^{\prime}_i = \false$ (\ref{rule-tv-and-f}, \ref{rule-tv-q-f}).

  \item Remove conjuncts of the form $t = t$, where $t$ is a term 
   (\ref{rule-tv-and-t}).

  \item Orient conjuncts $t = v$ and $t \neq v$, where $t$ is a term
    (different from $v$, as ensured by the previous steps) to the forms $v =
    t$ and $v \neq t$, respectively. 

 \item Reorder the conjuncts, remove duplicate conjuncts, and propagate the
   existential quantifier inward such that exactly those conjuncts are in its
   scope in which $v$ does occur as a free variable
   (\ref{rule-ao-assoc}--\ref{rule-ao-idem}, \ref{rule-q-out-ao}
   from right to left).
   The resulting formula then has the form
 \[G \land \ex v\, (\bigwedge_{1 \leq i \leq k} L_i[v] \land \bigwedge_{1 \leq i \leq l} v \neq t_i \land
  \bigwedge_{1 \leq i \leq m} v = u_i),\]
  
where $k, l, m \geq 0$, the $L_i[v]$ are pairwise different positive or
negative literals with an unary predicate applied to $v$ and the $t_i$ and
$u_i$ are terms.  In particular, the $t_i$ are pairwise different and also
different from $v$.  The subformula $G$ is contains the conjuncts with no free
occurrences of $v$.

 \item In case $k = l = m = 0$, return with $F^{\prime}_i = G$.

\item In case $m > 0$, return with
   \[F^{\prime}_i\; =\; G \land \bigwedge_{1 \leq i \leq k} L_i[u_1] 
   \land \bigwedge_{1 \leq i \leq n} u_1 \neq t_i \land \bigwedge_{1 < i \leq
     m} u_1 = u_i,\] where the $L_i[u_1]$ are obtained from the literals
   $L_i[v]$ by replacing their argument variable $v$ with $u_1$.  This
   transformation step is justified by applying
   \ref{rule-pullout-ex} from right to left.

\item Finally, in the remaining case $m = 0$, return with
   \[F^{\prime}_i\; =\; G \land G^{\prime},\]
   where $G^{\prime}$ is a Boolean combination of basic formulas that is
    equivalent to
   \[\ex v\, (\bigwedge_{1 \leq i \leq k} L_i[v] \land \bigwedge_{1 \leq i \leq l}
   v \neq t_i),\] as obtained according to either
   Lemma~\ref{lem-folelim-core-disj} or \ref{lem-folelim-core-conj}. 
\end{enumerate}

\noindent
Item (\ref{ref-thm-foqe-monadic-eq-item-free}.) of the theorem statement, that
is, all free individual variables, constants and predicates in $F^{\prime}$ do
occur in~$F$, follows since in the transformations involved to compute
$F^{\prime}$ at no point bound variables are moved outside the scope of their
binding quantifier, the only variables introduced in the transformations are
bound by counting quantifiers, and neither constants nor predicates are newly
introduced.  \ \qed
\end{proof}

\noindent
We conclude this subsection with some examples that illustrate the conversion
of equality literals to counting quantifiers performed by the method of
Theorem~\ref{thm-foqe-monadic-eq}. In the following example the input formula
contains positive equality atoms:

\begin{equation}
\begin{array}{ll}
 & \ex x\, (px 
  \land x \neq a \land x \neq b \land x = c \land x = d)\\
\equiv
 & pc 
  \land c \neq a \land c \neq b \land c = d.
\end{array}
\end{equation}

\noindent
The following two sequences of equivalences show results obtained if $\f{NDT}$
is expanded according to Lemma~\ref{lem-folelim-core-disj}:
\begin{equation}
\begin{array}{ll}
& \ex x\, (px \land x \neq a)\\
\equiv &
  (\ex^{\geq 1} x\, px \land \f{NDT}_{px, \{a\}}(1))\; \lor\; \ex^{\geq 2} x\, px\\
\equiv & 
  (\ex^{\geq 1} x\, px \land \lnot pa)\; \lor\;
    \ex^{\geq 2} x\, px.
\end{array}
\end{equation}
\begin{equation}
\begin{array}{lll}
& \ex x\, (px \land x \neq a \land x \neq b)\\
\equiv & 
     (\ex^{\geq 1} x\, px \land \f{NDT}_{px, \{a,b\}}(1)) & \lor\\
   & (\ex^{\geq 2} x\, px \land \f{NDT}_{px, \{a,b\}}(2)) & \lor\\
   & \ex^{\geq 3} x\, px\\
\equiv & 
     (\ex^{\geq 1} x\, px \land 
     \lnot pa \land \lnot pb)
   & \lor\\
   & (\ex^{\geq 2} x\, px \land 
      (\lnot pa \lor \lnot pb \lor a=b)) & \lor\\
   & \ex^{\geq 3} x\, px.
\end{array}
\end{equation}

\noindent 
For the same input formulas, if $\f{NDT}$ is expanded according to
Lemma~\ref{lem-folelim-core-conj}, then we obtain the following
results:
\begin{equation}
\begin{array}{ll}
& \ex x\, (px \land x \neq a)\\
\equiv &
  \ex^{\geq 1} x\, px \land (\ex^{\geq 2} x\, px \lor
\f{NDT}_{px, \{a\}}(1))\\
\equiv &
  \ex^{\geq 1} x\, px \land (\ex^{\geq 2} x\, px \lor \lnot
pa).
\end{array}
\end{equation}
\begin{equation}
\begin{array}{lll}
 & \ex x\, (px \land x \neq a \land x \neq b)\\
\equiv
   & \ex^{\geq 1} x\, px & \land\\
   & (\ex^{\geq 2} x\, px \lor \f{NDT}_{px, \{a,b\}}(1)) & \land\\
   & (\ex^{\geq 3} x\, px \lor \f{NDT}_{px, \{a,b\}}(2))\\
\equiv
   & \ex^{\geq 1} x\, px & \land\\
   & (\ex^{\geq 2} x\, px \lor 
      (\lnot pa \land \lnot pb)) & \land\\
   & (\ex^{\geq 3} x\, px \lor 
     \lnot pa \lor \lnot pb \lor a=b).
\end{array}
\end{equation}

\section{Predicate Elimination for \MONE}
\label{sec-elim-eq}

In this section Theorem~\ref{thm-beh:22-final}, the core result of
\cite{beh:22}, is proven. It is first shown that an existential predicate
quantification of a formula in the counting quantifier normal form (as
produced by the method of Theorem~\ref{thm-foqe-monadic-eq}) can be converted
such that all occurrences of predicate quantification are in subformulas of a
particular form, which we call here \defname{Generalized
  Eliminationshauptform} because it generalizes the \de{Eliminationshauptform}
((\ref{eq-hauptform-informal}) on p.~\pageref{eq-hauptform-informal}), by the
use of counting quantifiers instead of standard quantifiers.  Like the proper
\de{Eliminationshauptform}, the \dename{Generalized Eliminationshauptform} can
be converted to an equivalent formula that allows to eliminate the
second-order quantification upon~$p$ with Lemma~\ref{lem-basic-elim}.

It is possible to convert the \de{Generalized Eliminationshauptform} first to
a formula with a subformula of the proper \de{Eliminationshauptform} and then
perform elimination just based on the latter. However, following
\cite{beh:22}, elimination is shown here directly for the \de{Generalized
  Eliminationshauptform}. The proof actually involves a conversion such that
the predicate quantification has the form of the special case of the proper
\de{Eliminationshauptform} where the constituents $\bigwedge_{1 \leq i \leq c}
\ex x\, (C_i[x] \land px)$ and $\bigwedge_{1 \leq i \leq d} \ex x\, 
(D_i[x] \land \lnot px)$ are ``empty'', that is, $c = d = 0$, which allows to
perform elimination directly with Lemma~\ref{lem-basic-elim}.  Although
technically covered by the general case with equality, we will discuss the
simpler handling of the proper \de{Eliminationshauptform} in the next section.
The proof of Theorem~\ref{thm-beh:22-final} is split up into several
lemmas. First, the construction of the \de{Generalized Eliminationshauptform}
is shown.

\newcommand{\va}{x}
\newcommand{\vb}{y}
\newcommand{\vc}{u}
\newcommand{\vd}{v}

\begin{lem}[Constructing the \de{Generalized Eliminationshauptform}
for \MONE Formulas]
\label{lem-construction-eq}
There is an effective
method to compute for a given unary predicate $p$ and \MONE formula~$F$ 
formula $F^{\prime}$ such that
\begin{enumerate}
\item $F^{\prime}$ is a \QMONE formula,
\item $F^{\prime} \equiv \ex p F$,
\item $p$ is the only quantified predicate in $F^{\prime}$,
\item \label{form-eq-sub-p} All occurrences of $p$ in $F^{\prime}$ are in
  positive occurrences of subformulas of the form
\[\begin{array}{llll}
\ex p
& (   \bigwedge_{1 \leq i \leq a} & \all x^{\cca{a_i}}\, 
      (A_i[x] \lor px) & \land\\
& \hparen \bigwedge_{1 \leq i \leq b} & \all x^{\cca{b_i}}\,
      (B_i[x] \lor \lnot px) & \land\\
& \hparen \bigwedge_{1 \leq i \leq c} & \ex x^{\geq c_i}\, (C_i[x] \land px) & \land\\
& \hparen \bigwedge_{1 \leq i \leq d} & \ex x^{\geq d_i}\, (D_i[x] \land \lnot px)),
\end{array}\]
where $a$, $b$, $c$, $d$ are natural numbers $\geq 0$ and for the referenced
values of $i$ the $a_i$, $b_i$, $c_i$, $d_i$ are natural numbers $\geq 1$, and
the $A_i[x], B_i[x], C_i[x], D_i[x]$ are first-order formulas in which
$p$ does not occur,
\item All free individual variables, constants and predicates in
$F^{\prime}$ do occur in~$F$.
\end{enumerate}
\end{lem}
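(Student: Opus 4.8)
The plan is to begin with $\ex p\, F$ and transform $F$ until every occurrence of $p$ is confined to one of the four conjunct shapes of the Generalized Eliminationshauptform, after which the predicate quantifier can be distributed and pushed inward. The decisive first step is to apply the method of Theorem~\ref{thm-foqe-monadic-eq} and replace $F$ by an equivalent Boolean combination of basic formulas. This confines all occurrences of $p$ to just two kinds of basic formula: those of form~(\ref{basic-unary}), $pt$ with $t$ a constant or an individual variable that is free in the whole formula (there are no quantifiers in a Boolean combination of basic formulas that could bind such a $t$), and those of form~(\ref{basic-exists}), $\ex x^{\geq n}\, \bigwedge_{i} L_i[x]$, in which at most one literal $L_i[x]$ involves $p$ --- at most one, because the $L_i[x]$ are pairwise non-complementary, so $px$ and $\lnot px$ cannot both appear.

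Next I would convert this Boolean combination to a disjunctive normal form $F_1 \lor \dots \lor F_n$ generalized so that basic formulas play the role of atoms (\ref{rule-not-not}--\ref{rule-not-or}, \ref{rule-dist-dnf}), and distribute the predicate quantifier as $\ex p\, F \equiv \ex p\, F_1 \lor \dots \lor \ex p\, F_n$ (\ref{rule-ex-out-or} from right to left, which is licensed for predicate quantifiers as well). It then suffices to treat one disjunct $\ex p\, F_i$, where $F_i$ is a conjunction of basic formulas and negated basic formulas, and to rewrite each $p$-carrying conjunct into one of the four admissible types. A positive conjunct $\ex x^{\geq n}\, (px \land \bigwedge_j M_j[x])$ is already of the shape $\ex x^{\geq c}\,(C[x] \land px)$ with $C[x] = \bigwedge_j M_j[x]$ free of $p$, and the $\lnot px$ variant supplies a $D$-conjunct. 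A negated conjunct $\lnot \ex x^{\geq n}\,(\dots)$ I would expand using the definition of the universal counting quantifier, $\lnot \ex x^{\geq n}\, \phi \equiv \all x^{\cca{n}}\, \lnot\phi$, together with De Morgan (\ref{rule-not-and}, \ref{rule-not-or}); this turns it into an $\all x^{\cca{n}}$-conjunct and complements the single $p$-literal, so that a $p$-literal $px$ yields $\all x^{\cca{n}}\,(B[x] \lor \lnot px)$ and a $p$-literal $\lnot px$ yields $\all x^{\cca{n}}\,(A[x] \lor px)$. Finally, a conjunct $pt$ (resp.\ $\lnot pt$) of form~(\ref{basic-unary}) I would replace by circumlocution, $pt \equiv \all x^{\cca{1}}\,(x\neq t \lor px)$ and $\lnot pt \equiv \all x^{\cca{1}}\,(x\neq t \lor \lnot px)$ (\ref{rule-pullout-all}), matching an $A$- resp.\ $B$-conjunct with $a = 1$ resp.\ $b = 1$; the fresh bound variable is harmless and the disequality is the point at which equality may enter $F'$.

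After these rewrites every conjunct of $F_i$ is either free of $p$ or of one of the four types. Gathering the $p$-free conjuncts into a subformula $G_i$ and pulling $\ex p$ out past them (\ref{rule-q-out-ao}) gives $G_i \land \ex p\, H_i$, where $\ex p\, H_i$ is exactly a Generalized Eliminationshauptform. The resulting formula $F' = \bigvee_i (G_i \land \ex p\, H_i)$ is equivalent to $\ex p\, F$ since every step preserved equivalence; it contains $p$ only in the displayed subformulas $\ex p\, H_i$, which occur positively because no negation stands above them in a disjunction of conjunctions; and $p$ is its only quantified predicate because $F$ is first-order. Condition~5 follows from the corresponding clause of Theorem~\ref{thm-foqe-monadic-eq} together with the fact that the remaining steps introduce only bound variables and equality atoms, never new free variables, constants, or predicates. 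The step I expect to demand the most care is the negated form-(\ref{basic-exists}) case: one must track how the counting quantifier dualizes and how the lone $p$-literal flips polarity, so that each such conjunct is routed to the correct $A$- or $B$-family, and one must use the pairwise non-complementarity of the $L_i[x]$ to guarantee that the remainder $A[x]$ or $B[x]$ is genuinely free of $p$.
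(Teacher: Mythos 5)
Your proposal is correct and follows essentially the same route as the paper's proof: normalize $F$ via Theorem~\ref{thm-foqe-monadic-eq}, pass to a generalized disjunctive normal form over basic formulas, distribute and push $\ex p$ inward past the $p$-free conjuncts, rewrite negated counting-quantifier basic formulas as universal counting quantifiers via De Morgan, and circumlocute $pt$ and $\lnot pt$ into $\all^{\cca{1}}$-conjuncts. The only cosmetic difference is that the paper additionally drops tautological disjuncts and explicitly removes $\ex p$ from $p$-free disjuncts, neither of which affects correctness.
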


\begin{proof}
The result formula $F^{\prime}$ is obtained from $\ex p\, F$ by a sequence
of transformations that preserve equivalence and do neither introduce fresh
constants nor predicates nor variables without a binding quantifier.  First,
transform $F$ according to Theorem~\ref{thm-foqe-monadic-eq} to counting
quantifier normal form, that is, to a Boolean combination of basic formulas in
the sense of Theorem~\ref{thm-foqe-monadic-eq}.  Convert the result to a
disjunction of conjunctions of such basic formulas or negated such basic
formulas (like a disjunctive normal form, but with basic formulas in the role
of atoms).  Remove disjuncts that contain complementary conjuncts
(\ref{rule-taut}).

Let $F^{\prime\prime}$ be this intermediate result, which is equivalent to
$F$.  Proceed with transforming $\ex p\, F^{\prime\prime}$. Distribute the
existential quantifier upon~$p$ over the disjuncts (\ref{rule-ex-out-or} right
to left), remove it from disjuncts in which $p$ does not occur
(\ref{rule-quant-drop}), reorder disjuncts and move the quantifier upon~$p$
inward (\ref{rule-ao-assoc}--\ref{rule-ao-idem}, \ref{rule-q-out-ao} right to
left) such that its argument is a conjunction of basic or negated basic
formulas in which $p$ occurs.  Let $F^{\prime\prime\prime}$ be the
intermediate result obtained so far.

In $F^{\prime\prime\prime}$, replace all subformulas of the form $\lnot
\ex^{\geq n} x\, \bigwedge_{1 \leq i \leq m} L_i[x]$ with the equivalent
formula $\all^{\cca{n}} x\, \bigvee_{1 \leq i \leq m} \du{L_i[x]}$. Replace
all subformulas of the form $\lnot pt$ where $t$ is a constant or an
individual variable that is free in $F$ with the equivalent formula
$\all^{\cca{1}} x\, (x\neq t \lor \lnot px)$ (\ref{rule-pullout-all},
\ref{prop-cqall-one}).  Replace all subformulas of the form $pt$ where $t$ is
a constant or an individual variable that is free in $F$ with the equivalent
formula $\all^{\cca{1}} x\, (x\neq t \lor px)$.  Reorder disjuncts and
conjuncts in the scope of $\ex p$ such that form~(\ref{form-noeq-sub-p}) in
the lemma statement is matched (\ref{rule-ao-assoc}--\ref{rule-ao-comm}). The
formula has now all properties asserted about $F^{\prime}$ by the lemma to
prove.  \qed
\end{proof}

\noindent
If the quantified predicate is nullary, the first steps of the method of
Lemma~\ref{lem-construction-eq} already yield a particularly simple form:
\begin{lem}[Normalizing Quantification Upon Nullary Predicates]\linebreak%
\label{lem-construction-eq-nullary}%
There is an effective
method to compute for a given nullary predicate $p$ and 
\MONE formula~$F$ formula $F^{\prime}$ such that
\begin{enumerate}
\item $F^{\prime}$ is a \QMONE formula,
\item $F^{\prime} \equiv \ex p F$,
\item $p$ is the only quantified predicate in $F^{\prime}$,
\item \label{form-eq-sub-p-nullary} 
All occurrences of $p$ in $F^{\prime}$ are in positive occurrences
of subformulas of
the forms $\ex p\, p$ and $\ex p\, \lnot p$,
\item All free individual variables, constants and predicates in
$F^{\prime}$ do occur in~$F$.
\end{enumerate}
\end{lem}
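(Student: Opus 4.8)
The plan is to run exactly the same preprocessing as in the proof of Lemma~\ref{lem-construction-eq}, stopping as soon as the predicate quantifier has been driven inward, and then to exploit the fact that a nullary predicate interacts with none of the counting-quantifier machinery. First I would apply Theorem~\ref{thm-foqe-monadic-eq} to rewrite $F$ into an equivalent Boolean combination of basic formulas, and then convert this into a disjunction $D_1 \lor \ldots \lor D_n$ of conjunctions of basic formulas and negated basic formulas (a disjunctive normal form with basic formulas in the role of atoms), discarding every disjunct that contains a basic formula together with its complement (\ref{rule-taut}). Calling this intermediate formula $F''$, we have $F'' \equiv F$ and hence $\ex p\, F \equiv \ex p\, F''$.

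Next I would distribute the predicate quantifier over the disjunction, writing $\ex p\, F''$ as $\ex p\, D_1 \lor \ldots \lor \ex p\, D_n$ (\ref{rule-ex-out-or} from right to left), drop $\ex p$ from every disjunct in which $p$ does not occur (\ref{rule-quant-drop}), and in each remaining disjunct reorder conjuncts and push $\ex p$ inward (\ref{rule-ao-assoc}--\ref{rule-ao-idem}, \ref{rule-q-out-ao} from right to left) so that its scope is precisely the conjunction of those conjuncts that mention $p$. The key observation -- the point where the nullary case diverges from Lemma~\ref{lem-construction-eq} -- is that among the basic formulas of Theorem~\ref{thm-foqe-monadic-eq} only case~(\ref{basic-nullary}.) can contain the nullary predicate $p$: forms~(\ref{basic-unary}.), (\ref{basic-equality}.) and~(\ref{basic-exists}.) are built from unary predicates, equality and quantified individual variables, and thus never involve $p$. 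Hence every conjunct in the scope of $\ex p$ is literally $p$ or $\lnot p$.

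It then remains to collapse such a conjunction. Since disjuncts containing both $p$ and $\lnot p$ were already removed by \ref{rule-taut}, the $p$-conjuncts of a surviving disjunct are either all $p$ or all $\lnot p$, and idempotence (\ref{rule-ao-idem}) contracts them to a single $p$ or $\lnot p$. Writing the disjunct as $G \land \ex p\, p$ or $G \land \ex p\, \lnot p$, where $G$ collects the $p$-free conjuncts, I obtain a disjunction in which every occurrence of $p$ sits, positively, inside a subformula $\ex p\, p$ or $\ex p\, \lnot p$; this yields item~(\ref{form-eq-sub-p-nullary}.). The remaining items are inherited almost verbatim: $p$ is the only predicate ever bound (item~3, and with it membership in \QMONE for item~1), and, exactly as in Theorem~\ref{thm-foqe-monadic-eq} and Lemma~\ref{lem-construction-eq}, no transformation introduces a fresh constant, predicate, or variable outside the scope of a binding quantifier, so item~5 holds.

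The statement is genuinely easy, so there is no deep obstacle; the one thing to get right is the bookkeeping that justifies the simplification, namely the verification that a nullary predicate cannot be concealed inside any of the compound basic formulas of the counting-quantifier normal form. Once that is in place, the entire apparatus of the \de{Generalized Eliminationshauptform} -- the universal and existential counting constituents $\all x^{\cca{a_i}}\,(A_i[x] \lor px)$ and their companions -- degenerates, since there are no unary occurrences of $p$ to package, and the circumlocution steps (\ref{rule-pullout-all}, \ref{rule-pullout-ex}) invoked in the unary construction are simply never triggered.
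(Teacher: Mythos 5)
Your proof is correct and follows essentially the same route as the paper: the paper's proof simply observes that the intermediate formula $F^{\prime\prime\prime}$ from the proof of Lemma~\ref{lem-construction-eq} already has the required form, which is exactly the construction you carry out. You additionally spell out the justification the paper leaves implicit -- that a nullary $p$ can only appear as a basic formula of form~(\ref{basic-nullary}.), so after tautology removal and idempotence the scope of $\ex p$ collapses to a single $p$ or $\lnot p$ -- which is a welcome piece of bookkeeping but not a different argument.
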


\begin{proof}
The intermediate result $F^{\prime\prime\prime}$ computed described in the
proof of Lemma~\ref{lem-construction-eq} has all properties asserted about
$F^{\prime}$ by the lemma to prove.
\qed
\end{proof}

\label{sec-two-counting-expansions}

\noindent
The following Lemma shows for quantification upon a unary predicate the
conversion of the \de{Generalized Eliminationshauptform} to a form that allows
to perform elimination directly by application of Lemma~\ref{lem-basic-elim}.
\begin{lem}[From Generalized Eliminationshauptform to 
the Basic Elimination Lemma]
\label{lem-hauptform-eq-elim}
Let $p$ be a unary predicate and let 
$F$ be a formula of the form
\[\begin{array}{llll}
\ex p
&   (  \bigwedge_{1 \leq i \leq a} & \all x^{\cca{a_i}}\, 
      (A_i[x] \lor px) & \land\\
&  \hparen   \bigwedge_{1 \leq i \leq b} & \all x^{\cca{b_i}}\,
       (B_i[x] \lor \lnot px) & \land\\
&  \hparen   \bigwedge_{1 \leq i \leq c} & \ex x^{\geq c_i}\, (C_i[x] \land px) & \land\\
&  \hparen   \bigwedge_{1 \leq i \leq d} & \ex x^{\geq d_i}\, (D_i[x] \land \lnot px)),
\end{array}\]
where $a$, $b$, $c$, $d$ are natural numbers $\geq 0$, for the referenced
values of $i$ the $a_i$, $b_i$, $c_i$, $d_i$ are natural numbers $\geq 1$, and
the $A_i[x], B_i[x], C_i[x], D_i[x]$ are first-order formulas in which $p$
does not occur.  Then $F$ is equivalent to
\[Q\, (G \land \ex p\,
 (\all x\, (A[x] \lor px) \land \all x\, (B[x] \lor \lnot px))),\]
where $Q$ is a quantifier prefix that existentially quantifies upon the
following individual variables which are fresh, that is, do not occur in $F$:
\[\begin{array}{ll}
 \va_{i1} \ldots \va_{i(a_i-1)}, & \text{ for } 1 \leq i \leq a\\
\vb_{i1} \ldots \vb_{i(b_i-1)}, & \text{ for } 1 \leq i \leq b,\\
\vc_{i1} \ldots \vc_{ic_i}, & \text{ for } 1 \leq i \leq c,\\
\vd_{i1} \ldots \vd_{ic_i}, & \text{ for } 1 \leq i \leq d,
\end{array}\]
\noindent
where $G$ is the formula
\[\begin{array}{ll}
 \bigwedge_{1 \leq i \leq c,\; 1 \leq j \leq c_i} 
   (C_i[\vc_{ij}] \land
       \bigwedge_{j < k \leq c_i}
       \vc_{ij} \neq \vc_{ik}) & \; \land\\
\bigwedge_{1 \leq i \leq d,\; 1 \leq j \leq d_i} 
   (D_i[\vd_{ij}] \land
       \bigwedge_{j < k \leq d_i}
       \vd_{ij} \neq \vd_{ik}),
\end{array}\]
with $C_i[\vc_{ij}]$ and $D_i[\vd_{ij}]$ denoting $C_i[x]$ and $D_i[x]$
after replacing all free occurrences of $x$ by $\vc_{ij}$ and $\vd_{ij}$,
respectively,

\smallskip

\noindent
where $A[x]$ is the formula
\[\textstyle\bigwedge_{1 \leq i \leq a} 
(A_i[x] \lor \textstyle\bigvee_{1 \leq j < a_i} x = \va_{ij})
\land
\textstyle\bigwedge_{1 \leq i \leq c,\; 1 \leq j \leq c_i} 
x \neq \vc_{ij},\]

\noindent
and where $B[x]$ is the formula
\[\textstyle\bigwedge_{1 \leq i \leq b}
(B_i[x] \lor \textstyle\bigvee_{1 \leq j < b_i} x = \vb_{ij})
\land
\textstyle\bigwedge_{1 \leq i \leq d,\; 1 \leq j \leq d_i} 
x \neq \vd_{ij}.\]

\end{lem}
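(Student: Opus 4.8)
The plan is to remove every counting quantifier in favour of ordinary quantifiers over the fresh witness variables named in the statement, and then to reorganize the result by equivalence preserving quantifier shifting so that the existential witness quantifiers accumulate in the prefix~$Q$, the subformulas not containing~$p$ accumulate in~$G$, and the subformulas still containing~$p$ remain under $\ex p$, where they are massaged into the two universally quantified constituents of the target. The decisive choice is \emph{which} first-order expansion to apply to each kind of counting quantifier: for the existential constituents $\ex^{\geq c_i} x\,(C_i[x] \land px)$ and $\ex^{\geq d_i} x\,(D_i[x] \land \lnot px)$ I would use the \enq{polynomial} expansion~\ref{prop-exp-eq-geq-poly}, and for the universal constituents $\all^{\cca{a_i}} x\,(A_i[x] \lor px)$ and $\all^{\cca{b_i}} x\,(B_i[x] \lor \lnot px)$ the \enq{linear} expansion~\ref{prop-exp-eq-lt-lin}. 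These are exactly the shapes that produce the witness structure recorded in $Q$, $G$, $A[x]$ and $B[x]$.

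For the existential constituents, applying~\ref{prop-exp-eq-geq-poly} rewrites $\ex^{\geq c_i} x\,(C_i[x] \land px)$ as $\ex \vc_{i1} \ldots \ex \vc_{ic_i}\,(\bigwedge_{1 \leq j \leq c_i}(C_i[\vc_{ij}] \land p\vc_{ij}) \land \bigwedge_{j < k \leq c_i} \vc_{ij} \neq \vc_{ik})$, and analogously for the $D$-constituents using $\vd$ and $d_i$. The conjuncts $C_i[\vc_{ij}]$ together with the disequalities $\vc_{ij} \neq \vc_{ik}$ (resp.\ $D_i[\vd_{ij}]$ with $\vd_{ij} \neq \vd_{ik}$) carry no occurrence of~$p$ and are precisely the conjuncts of the target~$G$. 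Each remaining conjunct $p\vc_{ij}$ (resp.\ $\lnot p\vd_{ij}$) is turned by the circumlocution rule~\ref{rule-pullout-all} into $\all x\,(x \neq \vc_{ij} \lor px)$ (resp.\ into $\all x\,(x \neq \vd_{ij} \lor \lnot px)$), i.e.\ into a constituent of $A$- (resp.\ $B$-) type.

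For the universal constituents, applying~\ref{prop-exp-eq-lt-lin} rewrites $\all^{\cca{a_i}} x\,(A_i[x] \lor px)$ as $\ex \va_{i1} \ldots \ex \va_{i(a_i-1)} \all x\,(A_i[x] \lor \bigvee_{1 \leq j < a_i} x = \va_{ij} \lor px)$, which already has the $A$-shape $\all x\,(\,\cdot \lor px)$, and analogously for the $B$-constituents. Now every conjunct under $\ex p$ is either $p$-free or of the form $\all x\,(\,\cdot \lor px)$ or $\all x\,(\,\cdot \lor \lnot px)$. Exploiting the freshness of all witness variables, I would pull each existential witness quantifier out of its conjunction by~\ref{rule-q-out-ao} and past $\ex p$ by~\ref{rule-quant-flip} to build the prefix~$Q$, and then move every $p$-free conjunct out of the scope of $\ex p$ by~\ref{rule-q-out-ao} to build~$G$. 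After aligning the bound variables of the surviving $\all x$-constituents by renaming~(\ref{rule-var-rename}), I would merge all $A$-type constituents into one via~\ref{rule-all-out-and} followed by the right-to-left use of the distribution law~\ref{rule-dist-cnf}, which contracts $(P_1 \lor px) \land (P_2 \lor px)$ to $(P_1 \land P_2) \lor px$, and symmetrically obtain $\all x\,(B[x] \lor \lnot px)$. Collecting the left parts of the merged constituents shows that $A[x]$ and $B[x]$ are exactly the formulas stated in the lemma, so the formula now has the target form, which is the special case $c = d = 0$ on which the Basic Elimination Lemma~\ref{lem-basic-elim} can act.

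I expect no deep obstacle: the argument is in essence careful bookkeeping, and its two points of real content are (a) the deliberate pairing of the polynomial expansion with existential and the linear expansion with universal counting quantifiers, since any other choice yields witness quantifiers of the wrong type and fails to reproduce $Q$, $G$, $A[x]$ and $B[x]$; and (b) the justification of every quantifier movement. The latter rests entirely on freshness: because each witness variable occurs only in the constituent that introduced it, the side conditions of~\ref{rule-q-out-ao} (the moved variable is not free in the other argument) and of~\ref{rule-pullout-all} (the displaced term contains neither the bound variable nor any variable bound in its scope) hold automatically. The only step that genuinely demands attention is the alignment of the several bound variables written $x$ prior to the merge, which is handled by variable renaming.
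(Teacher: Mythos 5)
Your proposal is correct and follows essentially the same route as the paper's proof: the same pairing of the polynomial expansion~\ref{prop-exp-eq-geq-poly} with the existential counting quantifiers and the linear expansion~\ref{prop-exp-eq-lt-lin} with the universal ones, the same use of circumlocution~\ref{rule-pullout-all} to turn $p\vc_{ij}$ and $\lnot p\vd_{ij}$ into universally quantified constituents, and the same final merging and factoring to obtain $A[x]$ and $B[x]$. The only divergence is cosmetic: you correctly identify the factoring step as a right-to-left use of~\ref{rule-dist-cnf}, where the paper's annotation reads \enq{left to right}.
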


\begin{proof}
Universal and existential counting quantifiers in $F$ can be expanded in
alternate ways (Prop.~\ref{prop-exp-eq-lt-lin} and \ref{prop-exp-eq-geq-poly})
such that in both cases existential variables are produced which can be moved
(by \ref{rule-q-out-ao}, \ref{rule-quant-flip}) in front of the existential
predicate quantifier. The particular expansions of universal counting
quantifiers applied there are:

\medskip

\noindent\hspace{1em}
$\begin{array}{rl@{\hspace{1em}}l}
 & \all x^{\cca{a_i}}\, (A_i[x] \lor px)\\
\equiv &
  \ex \va_{i1} \ldots \ex \va_{i(a_i-1)} \all x\,
   (A_i[x] \lor px
   \lor
   \bigvee_{1 \leq j < a_i} x = \va_{ij}) &
   \text{by Prop.~\ref{prop-exp-eq-lt-lin}}\\
\equiv &
  \ex \va_{i1} \ldots \ex \va_{i(a_i-1)} \all x\,
   ((A_i[x] \lor \bigvee_{1 \leq j < a_i} x = \va_{ij}) \lor px),
   & \text{by \ref{rule-ao-assoc}, \ref{rule-ao-comm}}
\end{array}$

\medskip 
\noindent
and analogously

\medskip

\noindent\hspace{1em}
$\begin{array}{rll}
 & \all x^{\cca{b_i}}\, (B_i[x] \lor \lnot px)\\
\equiv &
  \ex \vb_{i1} \ldots \ex \vb_{i(b_i-1)} \all x\,
   ((B_i[x] \lor \bigvee_{1 \leq j < b_i} x = \vb_{ij}) \lor \lnot px).
\end{array}$

\medskip

\noindent
The involved expansions of existential counting quantifiers are:

\medskip

\noindent\hspace{1em}
$\begin{array}{rl@{\hspace{1em}}l}
& \ex^{\geq c} x\, (C_i[x] \land px)\\
 \equiv &
  \ex \vc_{i1} \ldots \ex \vc_{ic}\,
  \bigwedge_{1 \leq j \leq c}
  (C_i[\vc_{ij}] \land p\vc_{ij}
     \land
      \bigwedge_{j < k \leq c}
      \vc_{ij} \neq \vc_{ik}) & \text{by Prop.~\ref{prop-exp-eq-geq-poly}}\\
\equiv &  \ex \vc_{i1} \ldots \ex \vc_{ic}\,
  (\bigwedge_{1 \leq j \leq c} 
   (C_i[\vc_{ij}] \land
       \bigwedge_{j < k \leq c}
       \vc_{ij} \neq \vc_{ik})\; \land &  \text{by \ref{rule-ao-assoc}, 
                                                \ref{rule-ao-comm}}\\
 & \hspace{2em}
   \bigwedge_{1 \leq j \leq c} p\vc_{ij})\\
\equiv &  \ex \vc_{i1} \ldots \ex \vc_{ic}\,
  (\bigwedge_{1 \leq j \leq c} 
   (C_i[\vc_{ij}] \land
       \bigwedge_{j < k \leq c}
       \vc_{ij} \neq \vc_{ik})\; \land & \text{by \ref{rule-pullout-all}}\\
 & \hspace{2em}  
    \bigwedge_{1 \leq j \leq c} \all x\, (x \neq \vc_{ij} \lor px)),
\end{array}$

\medskip

\noindent
and analogously

\medskip 

\noindent\hspace{1em}
$\begin{array}{rl@{\hspace{1em}}l}
& \ex^{\geq d_i} x\, (D_i[x] \land \lnot px)\\
\equiv &  \ex v^{\prime}_{i1} \ldots \ex v^{\prime}_{id_i}\,
  (\bigwedge_{1 \leq j \leq d_i} 
   (D_i[v^{\prime}_{ij}] \land
       \bigwedge_{j < k \leq d_i}
       v^{\prime}_{ij} \neq v^{\prime}_{ik})\; \land\\
 & \hspace{2em} 
    \bigwedge_{1 \leq j \leq d_i} 
    \all x\, (x \neq v^{\prime}_{ij} \lor \lnot px)).
\end{array}$

\medskip
\noindent
The formulas $A[x]$ and $B[x]$ are obtained from the universally quantified
conjuncts in the shown expansions by merging universal quantifiers
(\ref{rule-all-out-and}) and factoring the disjuncts $px$ and $\lnot px$,
respectively (\ref{rule-dist-cnf} left to right).
\qed
\end{proof}

\noindent
We can now combine the lemmas of this section to the proof of
Theorem~\ref{thm-beh:22-final}.

\begin{thmref}{Predicate Elimination for \MONE}
{\ref{thm-beh:22-final}}
There is an effective method to compute from a given predicate~$p$ and
\MONE formula~$F$ a formula~$F^{\prime}$
such that
\begin{enumerate}
\item $F^{\prime}$ is a \MONE formula,
\item $F^{\prime} \equiv \ex p\, F$,
\item $p$ does not occur in $F^{\prime}$,
\item All free individual variables, constants and predicates in
$F^{\prime}$ do occur in~$F$.
\end{enumerate}
\end{thmref}

\begin{proof}
If $p$ is nullary, then apply the method according to
Lemma~\ref{lem-construction-eq-nullary} to $p$ and $F$, followed by replacing
all occurrences of $\ex p\, p$ and of $\ex p\, \lnot p$ with the
equivalent~$\true$.

In case $p$ is unary, apply the method according to
Lemma~\ref{lem-construction-eq} to $p$ and $F$, followed by replacing all
subformulas starting with $\ex p$ with the equivalent formulas according
to Lemma~\ref{lem-hauptform-eq-elim}. In the intermediate result, replace all
subformulas starting with $\ex p$ with the equivalent formulas obtained by
eliminating $\ex p$ according to Lemma~\ref{lem-basic-elim}.
\qed
\end{proof}

\section{Predicate Elimination for Formulas without Equality}
\label{sec-elim-noeq}

Theorem~\ref{thm-beh:22-final} and the material leading to that theorem
applies to formulas \emph{with equality}. In \cite{beh:22}, a simpler variant
that only applies to formulas \emph{without equality} is described first and
in more detail, along with a direct comparison to an earlier result by
Schröder and with a restructuring of the result as a postprocessing operation
after elimination.  In this section we adapt the essential steps of this
variant and discuss aspects that become apparent more clearly with that
simpler method.

As already mentioned in Sect.~\ref{sec-overview-elimination-method}, even if
the input formula is without equality, the steps involved in predicate
quantifier elimination can possibly introduce equality if the predicate to be
eliminated occurs with an argument that is not a universally quantified
variable.  Thus, if there are several predicates to be eliminated one by one,
an input formula without equality can only be safely assumed for elimination
of the first predicate. 

However, as can be seen from the proofs in this section, the equality literals
introduced for inputs without equality actually either have a constant or two
existential variables as arguments, implying that the simpler variant without
dedicated equality handling is sufficient for elimination in formulas $\ex
p_1 \ldots \ex p_n\, F$ where $F$ is a \MON formula. For this case, and
the dual elimination of a sequence of universal predicate quantifiers, Behmann
\cite{beh:22} sketches a generalization of the variant for formulas without
equality, which, however, involves a translation whose size is exponential in
the number of predicate quantifiers. The core idea for Behmann's
generalization is described below at the end of this section.

The following proposition shows a method to compute a normal form for
monadic first-order logic \emph{without equality}, which is like the counting
quantifier normal form of Theorem~\ref{thm-foqe-monadic-eq}, but with certain
further restrictions on the allowed basic formulas. In particular, it involves
no counting quantifiers.
\begin{prop}[Normal Form for \MON]
\label{prop-foqe-monadic-noeq}  
There is an effective method to compute 
for a given \MON formula~$F$
a formula $F^{\prime}$ such that
\begin{enumerate}
\item $F^{\prime}$ is a Boolean combination of basic
formulas of the form:
\begin{enumerate}[label=(\alph*),ref=\alph*]
\item \label{noeq-nullary} $p$, where $p$ is a nullary predicate,
\item \label{noeq-unary} $pt$,
   where $p$ is a unary predicate and $t$ is a constant or individual variable,
\item \label{noeq-exists} $\ex x\, \bigwedge_{1 \leq i \leq m} L_i[x]$,
  where the $L_i[x]$ are pairwise different and pairwise non-complementary
  positive or negative literals with a unary predicate applied to the
  individual variable~$x$,
\end{enumerate}
\item $F^{\prime} \equiv F$,
\item All free individual variables, constants and predicates in $F^{\prime}$ do
  occur in~$F$.
\end{enumerate}
\end{prop}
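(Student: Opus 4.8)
The plan is to follow the proof of Theorem~\ref{thm-foqe-monadic-eq} verbatim in its overall structure, specializing every step to the equality-free setting; there the only genuinely expensive ingredient, the appeal to Lemma~\ref{lem-folelim-core} and the counting quantifiers it produces, degenerates to a triviality. As in that proof, it suffices to show (i) that every atomic \MON formula is already a basic formula, and (ii) that for every individual variable~$v$ and every Boolean combination of basic formulas~$F$ there is an effective method yielding a Boolean combination of basic formulas equivalent to $\ex v\, F$; the overall method then consists in exhaustively rewriting (innermost first) subformula occurrences of the form $\ex v\, F$, exactly as in the proof of Corollary~\ref{corr-elim} and Theorem~\ref{thm-foqe-monadic-eq}. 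Property~(i) is immediate: an atomic \MON formula is either a nullary predicate~$p$ or of the form $pt$ with $t$ a constant or an individual variable, hence a basic formula of form~(\ref{noeq-nullary}.) or~(\ref{noeq-unary}.), and no equality atoms occur.

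For property~(ii), I would first convert $F$ to a disjunction $F_1 \lor \ldots \lor F_n$ of conjunctions of basic formulas or negated basic formulas (\ref{rule-not-not}--\ref{rule-not-or}, \ref{rule-dist-dnf}) and distribute the quantifier to $\ex v\, F_1 \lor \ldots \lor \ex v\, F_n$ (\ref{rule-ex-out-or} from right to left), treating each disjunct separately. The decisive observation is that the \emph{only} conjuncts of $F_i$ in which $v$ occurs free are literals of the form $pv$ or $\lnot pv$ with $p$ unary: nullary predicates, formulas $pt$ with $t$ different from $v$, the existential basic formulas $\ex x\, \bigwedge_i L_i[x]$, and all of their negations have no free occurrence of~$v$. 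In particular, because the input is equality-free and no applied rule introduces an equality atom, conjuncts of the form $v = u$ or $v \neq t$ -- which in the proof of Theorem~\ref{thm-foqe-monadic-eq} were the triggers for steps~3.7 and~3.8 -- can never arise. Pushing the $v$-free conjuncts out of the quantifier scope (\ref{rule-q-out-ao} from right to left, with \ref{rule-ao-assoc}--\ref{rule-ao-comm}) therefore yields $G \land \ex v\, \bigwedge_i L_i[v]$, where $G$ collects the $v$-free conjuncts and is already a Boolean combination of basic formulas.

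It then remains to recognise $\ex v\, \bigwedge_i L_i[v]$ as a basic formula of form~(\ref{noeq-exists}.). If a complementary pair $pv, \lnot pv$ occurs among the $L_i[v]$, I would return $F^{\prime}_i = \false$ (\ref{rule-complem-and}, \ref{rule-tv-q-f}, \ref{rule-tv-and-f}); otherwise I would delete duplicate literals by idempotence (\ref{rule-ao-idem}), leaving pairwise different and pairwise non-complementary literals, so that $\ex v\, \bigwedge_i L_i[v]$ is precisely a basic formula of form~(\ref{noeq-exists}.). Hence each $F^{\prime}_i$ is a Boolean combination of basic formulas, and disjoining them gives the required equivalent of $\ex v\, F$. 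Property~(iii) follows exactly as for Theorem~\ref{thm-foqe-monadic-eq}: no step moves a bound variable outside its quantifier, the sole variable occurring bound in the result is the original~$v$ (now bound by the standard quantifier of the existential basic formula), and neither constants nor predicates are introduced.

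I expect no real obstacle, since the hard part of the equality case is simply absent. Formally, the equality-free case is the degenerate instance $n = 0$ (empty term set~$T$) of Lemma~\ref{lem-folelim-core}: its disjunctive form~(\ref{lem-folelim-core-disj}) collapses to the single disjunct $\ex^{\geq 1} x\, F[x] \equiv \ex x\, F[x]$ (\ref{prop-cqex-one}), so the standard quantifier survives and neither counting quantifiers with $n \geq 2$ nor fresh equality atoms are ever produced. The one point that genuinely deserves care is verifying that equality can never re-enter: this reduces to checking that none of the rewrite rules actually used introduces an equality atom, which holds because the circumlocution rules~\ref{rule-pullout-all} and~\ref{rule-pullout-ex} and Lemma~\ref{lem-folelim-core} -- the only sources of equality in the general method -- are never invoked here.
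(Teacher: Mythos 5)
Your proposal is correct and takes essentially the same route as the paper: the paper's proof simply instantiates the proof of Theorem~\ref{thm-foqe-monadic-eq} with an equality-free input, observes that all resulting counting quantifiers carry the subscript $n=1$ (precisely your degenerate $T=\emptyset$ case of Lemma~\ref{lem-folelim-core}), and replaces $\ex^{\geq 1}x$ with $\ex x$ via Prop.~\ref{prop-cqex-one}. You spell out explicitly which steps of the general method can fire and why equality cannot re-enter, which the paper leaves implicit, but the underlying argument is identical.
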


\begin{proof}
If the proof of Theorem~\ref{thm-foqe-monadic-eq} is instantiated with a
formula $F$ without equality, the resulting formula is a Boolean combination
of basic formulas of the forms~(\ref{noeq-nullary}), (\ref{noeq-unary}) as
shown above and of the form (\ref{basic-exists}) from
Theorem~\ref{thm-foqe-monadic-eq} with $n$ instantiated to~$1$.  The latter
formulas can be converted to the form (\ref{noeq-exists}) shown above by
replacing the counting quantifier $\ex^{\geq 1} x$ with the equivalent
standard quantifier $\ex x$ (Prop.~\ref{prop-cqex-one}).  \qed
\end{proof}

\noindent
The remaining two propositions and lemmas in this section give, based on the
normal form of Prop.~\ref{prop-foqe-monadic-noeq}, the ingredients for a
method to eliminate predicate quantification on \MON formulas, in analogy to
Lemma~\ref{lem-construction-eq} and \ref{lem-construction-eq-nullary}.  The
construction of the \de{Eliminationshauptform} from arbitrary such formulas is
ensured with the following lemma.

\begin{lem}[Constructing the Eliminationshauptform for \MON]
\label{lem-construction-noeq}
There is an effective
method to compute for a given unary predicate $p$ and \MON formula~$F$ 
a formula $F^{\prime}$ such that
\begin{enumerate}
\item $F^{\prime}$ is a \MONE formula,
\item $F^{\prime} \equiv \ex p F$,
\item $p$ is the only quantified predicate in $F^{\prime}$,
\item \label{form-noeq-sub-p} 
All occurrences of $p$ in $F^{\prime}$ are in positive occurrences of subformulas of
the form
\[\begin{array}{rll}
\ex p
&  (\bigwedge_{1 \leq i \leq a} \all x\, (A_i[x] \lor px) & \land\\
&  \hparen \bigwedge_{1 \leq i \leq b} \all x\, (B_i[x] \lor \lnot px) & \land\\
&  \hparen \bigwedge_{1 \leq i \leq c} \ex x\, (C_i[x] \land px) & \land\\
&  \hparen \bigwedge_{1 \leq i \leq d} \ex x\, (D_i[x] \land \lnot px)), &
\end{array}\]
where $a, b, c, d$ are natural numbers \geqzero and the
$A_i[x]$, $B_i[x]$, $C_i[x]$, $D_i[x]$ are formulas in which $p$
does not occur,
\item All free individual variables, constants and predicates in
$F^{\prime}$ do occur in~$F$.
\end{enumerate}
\end{lem}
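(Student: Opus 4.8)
The plan is to follow the proof of Lemma~\ref{lem-construction-eq} almost verbatim, substituting the equality-free normal form of Prop.~\ref{prop-foqe-monadic-noeq} for the counting-quantifier normal form, so that the only quantifiers ever produced are the standard $\ex$ and $\all$ required by form~(\ref{form-noeq-sub-p}). First I would rewrite $F$ by Prop.~\ref{prop-foqe-monadic-noeq} into an equivalent Boolean combination of basic formulas of forms~(\ref{noeq-nullary})--(\ref{noeq-exists}), and then into a generalized disjunctive normal form, a disjunction of conjunctions of basic formulas and negated basic formulas (\ref{rule-not-not}--\ref{rule-not-or}, \ref{rule-dist-dnf}), discarding any disjunct that contains complementary conjuncts (\ref{rule-taut}); call the result $F''$. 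Transforming $\ex p\, F''$, I would distribute $\ex p$ over the disjuncts (\ref{rule-ex-out-or} from right to left), drop it from those in which $p$ is absent (\ref{rule-quant-drop}), and push it inward (\ref{rule-ao-assoc}--\ref{rule-ao-idem}, \ref{rule-q-out-ao} from right to left) so that in each remaining disjunct its scope is exactly the conjunction of the $p$-bearing conjuncts. Thus $F'$ becomes a disjunction in which each disjunct is either free of $p$ or a single $\ex p$-subformula, and it remains only to cast each such subformula into the shape~(\ref{form-noeq-sub-p}).

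Since $p$ is unary, its occurrences are confined to the unary literals $pt$, $\lnot pt$ of form~(\ref{noeq-unary}) and to the literals $px$, $\lnot px$ inside basic formulas of form~(\ref{noeq-exists}); because the literals within such a basic formula are pairwise non-complementary, no basic formula contains both $px$ and $\lnot px$. Consequently each $p$-bearing conjunct falls into one of four shapes. A positive basic formula $\ex x\, \bigwedge_i L_i[x]$ containing $px$ is already of the form $\ex x\, (C_i[x] \land px)$, and one containing $\lnot px$ of the form $\ex x\, (D_i[x] \land \lnot px)$, where $C_i[x]$, $D_i[x]$ collect the remaining (necessarily $p$-free) literals. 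A negated such basic formula I would rewrite by \ref{rule-not-ex} and De~Morgan (\ref{rule-not-and}) into $\all x\, \bigvee_i \du{L_i[x]}$; factoring out the complement of the $p$-literal gives $\all x\, (A_i[x] \lor px)$ when the original contained $\lnot px$ and $\all x\, (B_i[x] \lor \lnot px)$ when it contained $px$, with $A_i[x]$, $B_i[x]$ equality-free disjunctions of complemented literals.

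The remaining occurrences are the literals $pt$ and $\lnot pt$, and these I would rewrite with the circumlocution equivalence \ref{rule-pullout-all} into $\all x\, (x \neq t \lor px)$ and $\all x\, (x \neq t \lor \lnot px)$, placing them under the first and second constituents with $A_i[x] = (x \neq t)$ and $B_i[x] = (x \neq t)$. This is the one step that creates equality atoms, and it is exactly what forces $F'$ to be a \MONE formula although $F$ is in \MON; it is the point most deserving of care in the writeup, and it is the only essential difference from the purely structural bookkeeping shared with Lemma~\ref{lem-construction-eq}. A final reordering of the conjuncts under each $\ex p$ (\ref{rule-ao-assoc}--\ref{rule-ao-comm}) yields form~(\ref{form-noeq-sub-p}). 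The remaining claims are routine: every rewriting is equivalence preserving, whence item~2; $p$ is the only predicate ever quantified, whence item~3; and no step introduces a fresh constant, predicate, or free variable (the variables newly bound by the circumlocution and by the De~Morgan'd quantifiers are all captured by their quantifiers), whence items~1 and~5.
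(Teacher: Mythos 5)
Your proposal is correct and follows essentially the same route as the paper's proof: normalize $F$ to the equality-free normal form, build the generalized disjunctive normal form, distribute and push $\ex p$ inward, turn negated existential basic formulas into universal disjunctions by De~Morgan, and circumlocute the ground literals $pt$, $\lnot pt$ into $\all x\,(x\neq t \lor px)$ and $\all x\,(x\neq t \lor \lnot px)$, which is exactly where the paper also introduces equality. The only difference is cosmetic: the paper reuses the intermediate $F^{\prime\prime\prime}$ from the proof of Lemma~\ref{lem-construction-eq} verbatim, whereas you re-derive it via Prop.~\ref{prop-foqe-monadic-noeq}, and you spell out more explicitly why each $p$-bearing conjunct lands in one of the four constituent shapes.
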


\begin{proof}
Compute the intermediate $F^{\prime\prime\prime}$ as described in the
proof of Lemma~\ref{lem-construction-eq}. Continue in analogy to that proof:
In $F^{\prime\prime\prime}$, replace all subformulas of the form $\lnot
\ex x\, \bigwedge_{1 \leq i \leq m} L_i[x]$ with the equivalent formula $\all
x\, \bigvee_{1 \leq i \leq m} \du{L_i[x]}$.  Replace all subformulas of the
form $\lnot pt$ where $t$ is a constant or an individual variable that is
free in $F$ with the equivalent formula $\all x\, (x\neq t \lor \lnot px)$. Replace
all subformulas of the form $pt$ where $t$ is a constant or an individual
variable that is free in $F$ with the equivalent formula $\all x\, (x\neq t \lor
px)$. Reorder disjuncts and conjuncts in the scope of $\ex p$ such that
form~(\ref{form-noeq-sub-p}) in the lemma statement is matched.  The formula
has now all properties asserted about $F^{\prime}$ by the lemma to prove. \qed
\end{proof}

\noindent
The method described in the proof of
Lemma~\ref{lem-construction-noeq} introduces equality atoms in case
there are occurrences of the quantified predicate where the argument is a
constant or an individual variable that is free in the original input formula.

The following Lemma renders the elimination result as given in \cite{beh:22}.
It combines conversion from \de{Eliminationshauptform} to a form that allows
direct elimination with Lemma~\ref{lem-basic-elim}, in analogy to
Lemma~\ref{lem-hauptform-eq-elim}, with performing the elimination and
restructuring the result.

\begin{lem}[Elimination on the Eliminationshauptform]
\label{lem-hauptform-noeq-elim}
Let $F$ be a formula of the following form:
\[\begin{array}{rlll}
\ex p
&     (\bigwedge_{1 \leq i \leq a} & \all x\, (A_i[x] \lor px) & \land\\
&     \hparen \bigwedge_{1 \leq i \leq b} & \all x\, (B_i[x] \lor \lnot px) & \land\\
&     \hparen \bigwedge_{1 \leq i \leq c} & \ex x\, (C_i[x] \land px) & \land\\
&     \hparen \bigwedge_{1 \leq i \leq d} & \ex x\, (D_i[x] \land \lnot px)),
\end{array}\]
where $p$ is a unary predicate, $a$, $b$, $c$, $d$ are numbers \geqzero
and the $A_i[x]$, $B_i[x]$, $C_i[x]$, $D_i[x]$ are first-order
formulas in which $p$ does not occur.  Then $F$ is equivalent to
the following formula \MONE formula $F^{\prime}$:
\[\begin{array}{l@{\hspace{1ex}}l}
\all x\, (\bigwedge_{1 \leq i \leq a} A_i[x]\;\; \lor\;\;
             \bigwedge_{1 \leq i \leq b} B_i[x]) & \land\\[3pt]
\ex u_1 \ldots \ex u_c
\ex v_1 \ldots \ex v_d\\
\hspace{1em}(\bigwedge_{\substack{1 \leq i \leq c,\; 1 \leq j \leq d}} 
 u_i \neq v_j & \land\\
\hspace{1em}\hparen \bigwedge_{1 \leq i \leq c} 
  (C_i[u_i] \land 
   \bigwedge_{1 \leq j \leq b} B_j[u_i]) & \land\\
\hspace{1em}\hparen \bigwedge_{1 \leq i \leq d}
   (D_i[v_i] \land 
   \bigwedge_{1 \leq j \leq a} A_j[v_i])),
\end{array}\]
where $u_1, \ldots, u_c$ and $v_1, \ldots, v_d$ are distinct individual
variables that are fresh, that is, do not occur free in $F$, and if $t$ is one
of these variables, then $A_i[t]$, $B_i[t]$, $C_i[t]$, $D_i[t]$ denote
$A_i[x]$, $B_i[x]$, $C_i[x]$, $D_i[x]$, respectively, with all free
occurrences of $x$ replaced by~$t$.
\end{lem}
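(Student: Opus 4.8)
The plan is to reduce to the machinery already developed for the equality case and then carry out the genuinely new work, which is a purely first-order restructuring of the eliminated formula. First I would observe that the form of $F$ is exactly the instance of the form treated in Lemma~\ref{lem-hauptform-eq-elim} in which every counting superscript equals $1$: writing $\all x$ as $\all x^{\cca{1}}$ (Prop.~\ref{prop-cqall-one}) and $\ex x$ as $\ex^{\geq 1} x$ (Prop.~\ref{prop-cqex-one}) brings the four groups of conjuncts into the shape required there. Invoking Lemma~\ref{lem-hauptform-eq-elim} with $a_i = b_i = c_i = d_i = 1$ and renaming the produced witnesses to $u_i, v_i$ then yields
\[F \;\equiv\; \ex u_1 \ldots \ex u_c \ex v_1 \ldots \ex v_d\,(G \land \ex p\,(\all x\,(A[x] \lor px) \land \all x\,(B[x] \lor \lnot px))),\]
where $G = \bigwedge_{1\le i\le c} C_i[u_i] \land \bigwedge_{1\le i\le d} D_i[v_i]$ (the within-group disequalities of Lemma~\ref{lem-hauptform-eq-elim} collapsing to empty conjunctions), and where $A[x] = \bigwedge_{1\le i\le a} A_i[x] \land \bigwedge_{1\le i\le c} x \neq u_i$ and $B[x] = \bigwedge_{1\le i\le b} B_i[x] \land \bigwedge_{1\le i\le d} x \neq v_i$ (the disjunctions $\bigvee_{1\le j<1}$ being empty). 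Alternatively one reaches the same intermediate formula directly: rename the bound variables of the $C_i$- and $D_i$-conjuncts to fresh $u_i, v_i$, flip these existential individual quantifiers out in front of $\ex p$ (\ref{rule-quant-flip}), move the $p$-free parts $C_i[u_i], D_i[v_i]$ out of the scope of $\ex p$ (\ref{rule-q-out-ao}), rewrite $pu_i$ and $\lnot pv_i$ by circumlocution (\ref{rule-pullout-all}) into universally quantified conjuncts, and merge the positive and the negative conjuncts by \ref{rule-all-out-and} and \ref{rule-dist-cnf}.

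Next I would eliminate the predicate quantifier. The inner subformula $\ex p\,(\all x\,(A[x]\lor px) \land \all x\,(B[x] \lor \lnot px))$ matches the left side of the Basic Elimination Lemma (\ref{lem-basic-elim}) with first-order $A[x], B[x]$ not containing $p$, so it is equivalent to $\all x\,(A[x] \lor B[x])$. In the degenerate cases where there is no positive $p$-conjunct, i.e.\ $a = c = 0$, or no negative one, one first conjoins a trivially true $\all x\,(\true \lor px) \equiv \true$ (\ref{rule-tv-and-t}) so that the lemma applies, $A[x]$ or $B[x]$ then being the empty conjunction $\true$. This gives
\[F \;\equiv\; \ex u_1 \ldots \ex u_c \ex v_1 \ldots \ex v_d\,(G \land \all x\,(A[x] \lor B[x])).\]

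The main obstacle is the final restructuring, which removes the guarding disequalities $x \neq u_i$, $x \neq v_j$ from the universal conjunct and re-expresses their effect as conditions on the witnesses. Writing $\mathcal{A}[x] = \bigwedge_{1\le i\le a} A_i[x]$ and $\mathcal{B}[x] = \bigwedge_{1\le i\le b} B_i[x]$, the claim I would establish is the first-order equivalence
\[\all x\,(A[x] \lor B[x]) \;\equiv\; \all x\,(\mathcal{A}[x] \lor \mathcal{B}[x]) \land \bigwedge_{\substack{1\le i\le c\\ 1\le j\le d}} u_i \neq v_j \land \bigwedge_{1\le i\le c} \mathcal{B}[u_i] \land \bigwedge_{1\le i\le d} \mathcal{A}[v_i].\]
I expect to prove this semantically, by a case analysis on the interpretation of the universally bound element. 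Instantiating the left side at $x = u_k$ forces the conjunct $u_k \neq u_k$ of $A[u_k]$ to fail, so $A[u_k]$ is false and the instance reduces to $B[u_k]$, yielding $\mathcal{B}[u_k]$ together with $u_k \neq v_j$ for all $j$; symmetrically, instantiating at $x = v_k$ yields $\mathcal{A}[v_k]$ and $v_k \neq u_j$ for all $j$; and for any element distinct from every witness the guards are vacuously satisfied, so the instance is just $\mathcal{A}[x] \lor \mathcal{B}[x]$. This settles the left-to-right direction. For the converse, under the right-hand conjuncts no element can be both some $u_k$ and some $v_{k'}$ (by $\bigwedge u_i \neq v_j$), so the same three mutually exclusive and exhaustive cases rebuild every instance $A[x] \lor B[x]$. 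Finally, since $\all x\,(\mathcal{A}[x] \lor \mathcal{B}[x])$ contains none of the fresh witnesses, I would pull it out of the existential prefix (\ref{rule-q-out-ao}) and regroup the remaining conjuncts, pairing $C_i[u_i]$ with $\mathcal{B}[u_i] = \bigwedge_{1\le j\le b} B_j[u_i]$ and $D_i[v_i]$ with $\mathcal{A}[v_i] = \bigwedge_{1\le j\le a} A_j[v_i]$, which is precisely the asserted $F^{\prime}$. Throughout, the boundary values $a,b,c,d = 0$ are absorbed uniformly by reading empty conjunctions and disjunctions as $\true$ and $\false$.
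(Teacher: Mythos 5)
Your proof is correct, and its skeleton coincides with the paper's: rename and export the existential individual quantifiers, move the $p$-free conjuncts $C_i[u_i]$, $D_i[v_i]$ out of the scope of $\ex p$, turn $pu_i$ and $\lnot pv_i$ into universally quantified conjuncts by circumlocution (\ref{rule-pullout-all}), merge and factor so that Lemma~\ref{lem-basic-elim} applies, eliminate $p$, and then restructure. You differ from the paper in two places. First, you offer the shortcut of instantiating Lemma~\ref{lem-hauptform-eq-elim} with all counting superscripts equal to~$1$; the paper explicitly declines this (while acknowledging it is ``technically covered by the general case with equality'') and redoes the construction directly, because the without-equality variant is meant to mirror Behmann's original, self-contained presentation. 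Second, and more substantively, your final restructuring step -- removing the guards $x \neq u_i$, $x \neq v_j$ from $\all x\,(A[x] \lor B[x])$ -- is argued semantically by a three-way case split on the universally bound element, whereas the paper performs it purely syntactically: distribute disjunction over conjunction (\ref{rule-dist-cnf}), push $\all x$ into the resulting conjunction (\ref{rule-all-out-and} reversed), and contract the guarded conjuncts $\all x\,(x \neq u_i \lor B[x])$, $\all x\,(x \neq v_j \lor A[x])$ and $\all x\,(x\neq u_i \lor x \neq v_j)$ back to $B[u_i]$, $A[v_j]$ and $u_i \neq v_j$ by \ref{rule-pullout-all} read right to left. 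Your semantic argument is shorter and arguably more transparent, but the paper's rewriting derivation stays inside the ``computation by equivalence-preserving rewriting'' methodology it is reconstructing, and it also exposes the intermediate formula from which Schr\"oder's \emph{crude resultant} is obtained by dropping the disequality conjunct -- a comparison the paper makes immediately after the proof. Your handling of the degenerate cases ($a=c=0$ or $b=d=0$, supplying a vacuous conjunct $\all x\,(\true \lor px)$ so that Lemma~\ref{lem-basic-elim} applies) is a point the paper leaves to the empty-conjunction convention; making it explicit is a small improvement.
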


\begin{proof}
The proof shows equivalence preserving transformations that lead from $F$ to
$F^{\prime}$: First, $F$ is converted to a form that matches the left side of
Lemma~\ref{lem-basic-elim}. That lemma is then applied to eliminate the
existentially quantified $p$. Finally, the formula that results from applying
the lemma is postprocessed to yield $F^{\prime}$.

We define two shorthands $A[x] = \bigwedge_{1 \leq i \leq a} A_i[x]$ and $B[x]
= \bigwedge_{1 \leq i \leq b} B_i[x]$. Then $F$ is equivalent to
\begin{displaymath}
\begin{array}{rlll}
\ex p
&     \multicolumn{2}{l}{(\all x\, (A[x] \lor px)} &\; \land\\
&     \multicolumn{2}{l}{\hparen \all x\, (B[x] \lor \lnot px)} &\; \land\\
&     \hparen \bigwedge_{1 \leq i \leq c} & (\ex x\, C_i[x] \land px) &\; \land\\
&     \hparen \bigwedge_{1 \leq i \leq d} & (\ex x\, D_i[x] \land \lnot px)).
\end{array}
\end{displaymath}
By renaming the existential individual variables (\ref{rule-var-rename}),
moving the existential quantifiers outward (\ref{rule-q-out-ao}) and
reordering them (\ref{rule-quant-flip}) we obtain:
\begin{displaymath}
\begin{array}{rlll}
\ex u_1 \ldots \ex u_c
\ex v_1 \ldots \ex v_d
\ex p
&     \multicolumn{2}{l}{(\all x\, (A[x] \lor px)} &\; \land\\
&     \multicolumn{2}{l}{\hparen \all x\, (B[x] \lor \lnot px)} &\; \land\\
&     \hparen \bigwedge_{1 \leq i \leq c} & (C_i[u_i] \land pu_i) &\; \land\\
&     \hparen \bigwedge_{1 \leq i \leq d} & (D_i[v_i] \land \lnot pv_i)).
\end{array}
\end{displaymath}
Reordering conjuncts (\ref{rule-ao-comm}, \ref{rule-ao-assoc})
and moving the quantifier upon~$p$ inward (\ref{rule-q-out-ao} from
right to left) yields:
\begin{displaymath}
\label{cd-to-front}
\begin{array}{rlll}
\ex u_1 \ldots \ex u_c
\ex v_1 \ldots \ex v_d
&
\multicolumn{2}{l}{(\bigwedge_{1 \leq i \leq c} C_i[u_i] \land
\bigwedge_{1 \leq i \leq d} D_i[v_i]} &\; \land\\[4pt]
&\hparen \ex p
&    (\all x\, (A[x] \lor px) &\; \land\\
&& \hparen \all x\, (B[x] \lor \lnot px) &\; \land\\
&& \hparen \bigwedge_{1 \leq i \leq c} pu_i &\; \land\\
&& \hparen \bigwedge_{1 \leq i \leq d} \lnot pv_i)).
\end{array}
\end{displaymath}
Next ``pull out'' the existentially quantified arguments
from $p$ (\ref{rule-pullout-all}):
\begin{displaymath}
\begin{array}{rlll}
\ex u_1 \ldots \ex u_c
\ex v_1 \ldots \ex v_d
&
\multicolumn{2}{l}{(\bigwedge_{1 \leq i \leq c} C_i[u_i] \land
\bigwedge_{1 \leq i \leq d} D_i[v_i]} &\; \land\\[4pt]
&\hparen \ex p
&    (\all x\, (A[x] \lor px) &\; \land\\
&& \hparen \all x\, (B[x] \lor \lnot px) &\; \land\\
&& \hparen \bigwedge_{1 \leq i \leq c} (\all x\, x \neq u_i \lor px) &\; \land\\
&& \hparen \bigwedge_{1 \leq i \leq d} (\all x\, x \neq v_i \lor \lnot px))).
\end{array}
\end{displaymath}
By merging universal quantifiers (\ref{rule-all-out-and}) and
factoring the occurrences of $px$ and $\lnot px$
(\ref{rule-dist-cnf} from right to left) we get:
\begin{displaymath}
\begin{array}{rlll}
\ex u_1 \ldots \ex u_c
\ex v_1 \ldots \ex v_d
&
\multicolumn{2}{l}{(\bigwedge_{1 \leq i \leq c} C_i[u_i] \land
\bigwedge_{1 \leq i \leq d} D_i[v_i]} &\; \land\\
& \hparen \ex p
&     (\all x\, ((A[x] \land \bigwedge_{1 \leq i \leq c}\, x \neq u_i)
                  \lor px) &\; \land\\
&& \hparen    \all x\, ((B[x] \land \bigwedge_{1 \leq i \leq d}\, x \neq v_i)
                  \lor \lnot px))).
\end{array}
\end{displaymath}
The subformula starting with $\ex p$ matches the left side of
Lemma~\ref{lem-basic-elim}, which justifies to rewrite it by a first-order
formula that does no longer contain $p$, resulting in:
\begin{displaymath}
\begin{array}{rlll}
\ex u_1 \ldots \ex u_c
\ex v_1 \ldots \ex v_d
&
\multicolumn{2}{l}{(\bigwedge_{1 \leq i \leq c} C_i[u_i] \land
\bigwedge_{1 \leq i \leq d} D_i[v_i]} &\; \land\\
& \hparen \all x\, & ((A[x] \land \bigwedge_{1 \leq i \leq c}\, x \neq u_i)\; \lor\\
&& \hparen (B[x] \land \bigwedge_{1 \leq i \leq d}\, x \neq v_i))).
\end{array}
\end{displaymath}
By distributing disjunction over conjunction (\ref{rule-dist-cnf}),
distributing universal quantification into conjunction (\ref{rule-all-out-and}
from right to left), ``pulling in'' argument terms (\ref{rule-pullout-all}
from right to left) and reordering conjuncts (\ref{rule-ao-assoc},
\ref{rule-ao-comm}) we obtain:
\begin{displaymath}
\begin{array}{llll}
\all x\, (A[x] \lor B[x]) &&\; \land\\
\ex u_1 \ldots \ex u_c
\ex v_1 \ldots \ex v_d
& (\bigwedge_{1 \leq i \leq c,\; 1 \leq j \leq d} u_i \neq v_j &\; \land\\
& \hparen \bigwedge_{1 \leq i \leq c} (C_i[u_i] \land B[u_i]) &\; \land\\
& \hparen \bigwedge_{1 \leq i \leq d} (D_i[v_i] \land A[v_i])),\\
\end{array}
\end{displaymath}
where, in analogy to the previously used notation, $A[v_i]$ and $B[u_i]$
denotes $A[x]$ and $B[x]$ with all free occurrences of $x$ replaced by $v_i$
or $u_i$, respectively.
Expanding the shorthands $A[x]$, $B[x]$, $A[v_i]$, $B[u_i]$ then
yields the formula $F^{\prime}$ from the proposition statement.
\qed
\end{proof}

\noindent
The method described in the proof of Lemma~\ref{lem-hauptform-noeq-elim}
introduces equality to handle occurrences of the quantified predicate with an
existentially quantified argument.  With the method described in
Lemma~\ref{lem-construction-noeq} this is the second place where equality is
introduced in the overall method to eliminate predicates from formulas without
equality.

Behmann remarks \cite[p.~201f]{beh:22} that the first division of the second
volume of Schröder's \de{Algebra der Logik} \cite[p.~400 2]{schroeder:2.1}
concludes with recommending the problem solved with
Lemma~\ref{lem-hauptform-noeq-elim} as an open issue for future research.
That Schröder failed to solve the problem is attributed by Behmann to the
insufficient means of representation in the \de{Algebra der Logik}.
In the proof of Lemma~\ref{lem-hauptform-noeq-elim} after the elimination step a
further equivalence preserving transformation is applied, where occurrences of
$A[x]$ and $B[x]$ are instantiated with existentially quantified variables.
(To the proof of Theorem~\ref{thm-beh:22-final} given above, the analogous
step could be added.) This step facilitates to compare the result of
Lemma~\ref{lem-hauptform-noeq-elim} with an earlier incomplete result by
Schröder: Behmann \cite[p.~203f]{beh:22} remarks that the following formula,
related to the last displayed formula in the proof of
Lemma~\ref{lem-hauptform-noeq-elim}, is what Schröder calls \name{,,crude
  resultant''} (\de{\glqq Resultante aus dem Rohen\grqq})
\cite[\S41]{schroeder:2.1} (see also \cite{craig:2008} for a modern discussion
of Schröder's work on elimination):
\begin{equation}
\begin{array}{lll}
\multicolumn{2}{l}{\all x\, (A[x] \lor B[x])} & \land\\
\bigwedge_{1 \leq i \leq c} & \ex x\, (C_i[x] \land B[x]) & \land\\
\bigwedge_{1 \leq i \leq d} & \ex x\, (D_i[x] \land A[x]).
\end{array}
\end{equation}
This formula is obtained from the last displayed formula in the proof by
dropping the conjunct $\textstyle\bigwedge_{1 \leq i \leq c,\; 1 \leq j \leq d}
u_i \neq v_j,$ which leads to a weaker formula, followed by the equivalence
preserving inward propagation of existential quantifiers and renaming of
variables.

As mentioned in the beginning of this section, in \cite[\S~17]{beh:22} a
generalization of the techniques for formulas without equality to the
simultaneous elimination of a sequence of predicate quantifiers that are all
existential or all universal is outlined. It can be based on the following
adaption of Lemma~\ref{lem-basic-elim} to the simultaneous elimination of $n$
existentially quantified unary predicates from a conjunction of $2^n$
disjunctions:
\begin{equation}
\ex p_1 \ldots \ex p_n\!\!\!\!\!\!
\bigwedge_{S \subseteq \{1, \ldots, n\}}\!\!\!\!\!\!
\all x\, (F_S \lor \bigvee_{i \in S} p_ix \lor
           \!\!\!\!\!\!\!\! \bigvee_{i \in \{1, \ldots, n\} - S} 
           \!\!\!\!\!\!\!\! \lnot p_ix)
\;\;\equiv\;\;
\all x\, \!\!\!\!\!\!\bigvee_{S \subseteq \{1, \ldots, n\}} 
\!\!\!\!\!\! F_S,
\end{equation}
which holds if the predicates $p_1, \ldots, p_n$ do not occur in the formulas
$F_S$ (subsets $S$ of $\{1, \ldots, n\}$ are used in $F_S$ as index
subscript).

\part{Further Issues Addressed in and Related to Behmann's Habilitation Thesis}
\label{part-further}

\section{Introduction to Part~\ref{part-further}}

In this part various issues that are addressed by Behmann in \cite{beh:22}
aside of the main results as well as specific connections with related
techniques are discussed. For a further discussion of Behmann's elimination
technique in relation to modern second-order quantifier elimination methods
following the direct or Ackermann approach we refer to \cite{cw-relmon}.

\section{Innex and Related Forms}
\label{sec-innex}
\label{sec-qbf-inner}

Recall that the key technique of \cite{beh:22} is propagating quantifiers
inward, also for the price of expensive operations such as distribution of
conjunction over disjunction and vice versa.\footnotemark\ This inward
propagation is applied to quantifiers upon individual variables
(Theorem~\ref{thm-foqe-monadic-eq}, Prop.~\ref{prop-foqe-monadic-noeq}) as
well as to quantifiers upon predicates (Lemma~\ref{lem-construction-eq},
\ref{lem-construction-eq-nullary} \ref{lem-construction-noeq}).

\footnotetext{On formulas that are not in negation normal form, distribution
  of conjunction over disjunction alone or distribution of disjunction over
  conjunction alone is sufficient, because either one can then express the
  other, consider for example: $F \lor (G \land H) \equiv \lnot (\lnot F \land
  (\lnot G \lor \lnot H)) \equiv \lnot ((\lnot F \land \lnot G) \lor (\lnot F
  \land \lnot H)) \equiv (F \lor G) \land (F \lor H)$.}

As remarked in \cite[p.~193]{beh:22}, just propagating Boolean quantifiers
inward in this manner already yields a decision method for quantified Boolean
formulas that have no free predicates (that is, in other terminology, no free
Boolean variables), and hence also for propositional logic. This can be easily
seen from Corollary~\ref{corr-elim}: If the method asserted in the corollary
is applied to a quantified Boolean formula, by the underlying lemmas, in
particular Lemma~\ref{lem-construction-eq-nullary}, inward propagation of
$\ex p$ yields a formula in which all occurrences of Boolean
quantification are in formulas of the form $\ex p\, p$ and $\ex p\,
\lnot p$.

Given the decidability of the Bernays-Schönfinkel class
\cite{bernays:schoenfinkel:28} (published six years after \cite{beh:22}),
monadic first-order logic can be decided essentially just by propagating all
individual quantifiers inward in the manner indicated above with possibly
expensive distribution operations. From the result formula, a prenex formula
in the Bernays-Schönfinkel class can then be obtained by first propagating
existential quantifiers outward and then the universal quantifiers.  This
conversion has been sketched in \cite[p.~36]{dreben:goldfarb}.

Techniques to propagate quantifiers upon individual variables inward are known
for a long time in automated theorem proving, under the names
\name{mini\-scope form} and \name{antiprenexing}. They are typically
considered as preprocessing operations
\cite{loveland:1978,nonnengart:weidenbach:handbook}, without taking
potentially expensive steps like distribution of connectives into
account. Exceptions are early works by \cite{wang:1960} -- according to
\cite[footnote~2]{ernst:1971}\footnote{The definition of \name{miniscope form}
  in \cite{wang:1960}, however, does not imply this and would accept, for
  example, $\ex x\, (px \land (qx \lor rx))$.}  -- and
\cite{bibel:1974}, where the possibility to take advantage of distribution is
mentioned in the conclusion.  The commonly considered preprocessing methods
might terminate inward propagation with a formula where a potentially
expensive step would allow further inward propagation. As a consequence, these
methods are in general not sufficient to decide \MON.

Two variants to propagate individual quantifiers inward have been discussed in
the literature. To describe them, we assume quantifiers $\ex$ and
$\all$ as they appear in a formula in negation normal form, obtained by
propagating negation inward with (\ref{rule-not-not}--\ref{rule-not-ex}) such
that negated existential quantification is expressed by universal
quantification.
In the first variant, often called \name{miniscope form}, existential as well
as universal quantifiers are both propagated inward
\cite{wang:1960,bibel:1974,loveland:1978}.  The main motivation is there to
reduce the number of arguments of Skolem functions by reducing the number of
universal variables that have a given existential quantifier in their scope.
In the second variant, termed \name{antiprenexing} in
\cite{egly:antiprenex},\footnote{\name{Antiprenex form} has been used in
  \cite{bibel:1974} for the first variant.}  both types of quantifiers are
handled differently: only universal quantifiers are propagated inward,
existential quantifiers are propagated outward (\ref{rule-ex-out-or},
\ref{rule-q-out-ao}). Aside of reducing the number of arguments of Skolem
functions, the motivation is there to reduce also the number of Skolem
functions: Applying \ref{rule-ex-out-or} from right to left would effect
duplication of existential quantifiers, each requiring a different Skolem
function.

As observed in \cite{beh:22}, the elimination technique for monadic formulas
of Lemma~\ref{lem-hauptform-eq-elim} and \ref{lem-hauptform-noeq-elim}
involves propagating predicate quantifiers and universal individual
quantifiers \emph{inward}, while propagating existential individual
quantifiers \emph{outward}.
A systematic investigation of quantifier propagation schemes with respect to
elimination of predicate quantifiers seems still an open issue.

\section{Quine's Expansion}

In \cite{quine:decidability} Quine presents in a decision method for \MON that
is, as remarked in \cite[pp.~253,~293]{church:book}, a variant of Behmann's
method. Like the latter, it is based on producing the normal form of
Prop.~\ref{prop-foqe-monadic-noeq} by propagating quantifiers upon instance
variables inward.  However, instead of performing distribution of conjunction
over disjunction to enable inward propagation of quantifiers, rewriting with
the following equivalence Prop.~\ref{quines-expansion-original} is applied.
This equivalence is shown first in a dual version as
Prop.~\ref{quines-expansion-dual} that corresponds to the setting of
Prop.~\ref{prop-foqe-monadic-noeq}.
\begin{prop}[Quine's Expansion]
\label{prop-quine-expansion}
Let $F[G]$ is a first-order formula with occurrences of a subformula $G$ in
which $x$ does not occur free and whose free variables are not in scope of a
quantifier within $F[G]$. Formulas $F[\true]$ and $F[\false]$ denote $F[G]$
with all the occurrences of $G$ replaced by $\true$ or $\false$, respectively.
Then

\medskip

\slab{quines-expansion-dual}
$\ex x F[G] \; \equiv \;
  (G \lor \ex x\, F[\false]) \land
  (\lnot G \lor \ex x\, F[\true])$.

\slab{quines-expansion-original}
$\all x F[G] \; \equiv \;
  (G \land \all x\, F[\true]) \lor
  (\lnot G \land \all x\, F[\false]).$
\end{prop}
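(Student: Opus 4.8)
The plan is to prove the dual version Prop.~\ref{quines-expansion-dual} first by a direct semantic argument and then to obtain the original Quine expansion Prop.~\ref{quines-expansion-original} from it by the duality principle of Prop.~\ref{prop-equi-dual}, mirroring the order announced in the surrounding text.

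For Prop.~\ref{quines-expansion-dual}, I would fix an interpretation $\mathcal{I}$ together with an assignment $\beta$ to the free variables of $\ex x\, F[G]$ (which are exactly the free variables of the right-hand side). The crucial observation is that, under the stated hypotheses, the truth value of $G$ is completely determined by $(\mathcal{I},\beta)$ and stays fixed throughout the evaluation of the quantified formula: since $x$ does not occur free in $G$, the quantifier $\ex x$ cannot change it, and since no free variable of $G$ lies in the scope of a quantifier inside $F[G]$, no inner quantifier can change it either. Consequently, replacing every occurrence of $G$ by the truth constant matching its value is sound within the scope of $\ex x$; that is, $\ex x\, F[G]$ has, under $(\mathcal{I},\beta)$, the same truth value as $\ex x\, F[\true]$ when $G$ is true and as $\ex x\, F[\false]$ when $G$ is false. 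Formally this is the replacement (congruence) property applied to the equivalence of $G$ with a truth constant relative to all assignments arising during evaluation.

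It then remains to check that the right-hand side reduces to the same value in each case. If $G$ is true under $(\mathcal{I},\beta)$, then $(G \lor \ex x\, F[\false]) \land (\lnot G \lor \ex x\, F[\true])$ simplifies to $\ex x\, F[\true]$, matching the left-hand side; if $G$ is false, it simplifies to $\ex x\, F[\false]$, again matching. As $(\mathcal{I},\beta)$ is arbitrary, the equivalence follows. The hard part will be precisely the justification above: the two hypotheses on $G$ are exactly what licenses treating $G$ as a constant relative to $\ex x$ and to every quantifier occurring inside $F$. Without them the factoring fails, since a free variable of $G$ could be captured by an inner quantifier and $G$ would then vary over the range of quantification; I would illustrate this failure to motivate the hypotheses.

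Finally, I would derive Prop.~\ref{quines-expansion-original} by duality. Writing $\dual$ for the dual operation of Prop.~\ref{prop-equi-dual}, one computes $\dual(\ex x\, F[G]) = \all x\, \dual(F)[\dual(G)]$, and, after switching the outer connectives and truth constants (and noting $\dual$ leaves $\lnot$ untouched), the dual of the right-hand side of Prop.~\ref{quines-expansion-dual} becomes $(\dual(G) \land \all x\, \dual(F)[\true]) \lor (\lnot \dual(G) \land \all x\, \dual(F)[\false])$. Since $\dual$ is an involution that leaves free variables and the nesting of quantifiers unchanged, the hypotheses carry over to $\dual(F)$ and $\dual(G)$, and Prop.~\ref{prop-equi-dual} turns the already proven equivalence into exactly the statement of Prop.~\ref{quines-expansion-original}. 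Alternatively, the second part admits the symmetric direct argument, or a purely syntactic derivation from the dual Shannon equivalence $F[G] \equiv (G \lor F[\false]) \land (\lnot G \lor F[\true])$ by distributing $\all$ over conjunction (\ref{rule-all-out-and}) and extracting $G$ from the quantifier scope (\ref{rule-q-out-ao}), which yields a conjunctive form equivalent by elementary propositional reasoning to the disjunctive form in the statement.
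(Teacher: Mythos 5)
The paper states Prop.~\ref{prop-quine-expansion} without any proof -- it is presented as a known equivalence underlying Quine's decision method, with only the surrounding discussion of how it is used -- so there is no ``paper proof'' to match against. Your argument is correct and would serve as a proof. The semantic case split on the truth value of $G$ under a fixed interpretation and assignment is the right idea, and you correctly isolate why both hypotheses are needed: that $x$ is not free in $G$ keeps $G$ rigid under the outer quantifier, and that no free variable of $G$ is captured by an inner quantifier keeps $G$ rigid throughout the evaluation of $F[G]$, which is exactly what licenses replacing every occurrence of $G$ by the matching truth constant inside the scope of $\ex x$. The transfer to Prop.~\ref{quines-expansion-original} via Prop.~\ref{prop-equi-dual} is also sound, since $\dual$ is an involution that swaps $\true$ with $\false$ (so $\dual(F[\true])=\dual(F)[\false]$), leaves negation and atoms in place, and preserves the two side conditions. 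Your closing alternative -- the purely syntactic route from the propositional Shannon expansion of $F[G]$ followed by \ref{rule-all-out-and} and \ref{rule-q-out-ao} -- is arguably the derivation most in the spirit of the paper, since it uses only the rewriting rules of Prop.~\ref{prop-equi} that Behmann's and Quine's methods are built from; the semantic argument buys a self-contained justification, while the syntactic one exhibits the proposition as a consequence of the paper's own rule set.
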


\noindent
Obviously, applying the expansions according to
Prop.~\ref{prop-quine-expansion} could be immediately followed by truth-value
simplification (\ref{rule-tv-not-t}--\ref{rule-tv-q-f}), which is actually
assumed in the original presentation by Quine.  The version of
Prop.~\ref{quines-expansion-original} given in \cite{quine:decidability}, is
actually a generalization of the well-known propositional Shannon expansion.
The method of \cite{quine:decidability} proceeds by exhaustively rewriting
innermost subformulas that match the precondition of the expansion.

\section{Other Methods for Deciding Relational Monadic Formulas}

Alternative decision methods for \MON formulas include resolution: Equipped
with an appropriate ordering and condensation, it decides \MON formulas,
although the associated Herbrand universe might be infinite due to
Skolemization \cite{resolution:decision:2001}.  A superposition-based decision
method for \MONE is given in \cite{bachmair:1993}. 

Deciding satisfiability is for \MON and for \MONE \textsf{NEXPTIME}-complete,
as presented in \cite[Sect.~6.2]{classical:decision} along with more
fine-grained results.  Upper bounds makes use of the fact that a satisfiable
\MONE (\MON, resp.) formula has a model whose cardinality is a most $q2^m$
($2^m$, resp.), where $q$ is the quantifier rank and $m$ is the number of
predicates.  The underlying method of \cite{lewis:80} for deciding \MON
verifies a given interpretation by repeatedly constructing an innex form with
respect to a single innermost quantifier occurrence and then replacing the
corresponding obtained quantified subformulas with $\true$ or $\false$
according to the interpretation.  The processing of an existential (universal,
resp.) innermost quantifier proceeds by conversion of its argument formula to
disjunctive (conjunctive, resp.) normal form, followed by distributing the
quantifier over the conjunctive clauses (clauses, resp.) and then, in each
conjunctive clause (clause, resp.), narrowing the quantifier scope to those
literals that contain the quantified variable.  Determination of the upper
bound makes use of the fact that all atoms occurring in the intermediate
formulas are already present in the input formula.

\section{Normal Form with Respect to a Predicate}
\label{sec-nf-predicate}

Behmann notes that for practical application it is not necessary to construct
the \de{Eliminationshauptform} ((\ref{eq-hauptform-informal}) on
p.~\pageref{eq-hauptform-informal}), via the \emph{fully developed}
disjunctive normal form (as done in the methods described in the proofs of
Theorem~\ref{thm-foqe-monadic-eq} and Lemma~\ref{lem-construction-eq} and
their correspondents for the case without equality), but that it 
suffices if just
the predicate to eliminate is separated from other predicates with respect to
the associated variables \cite[p.~201]{beh:22}.  Although Behmann remarks that
the corresponding forms, developed just with respect to a given predicate,
observe a number of strange laws and are of not much less theoretical and
practical importance than normal forms, he does not pursue this issue in
\cite{beh:22},\footnote{So far, there has been no publication or manuscript by
  Behmann identified where he would present the indicated material.} but just
gives an example:
\begin{equation}
\ex p\, \ex x\, ((qx \lor rx) \land px)
\end{equation}
immediately matches the \de{Eliminationshauptform} with $a = b = d = 0$, $c =
1$ and $C_1[x] = qx \lor rx$ such there is no point in distributing the
conjunction with $px$ over $qx \lor rx$.

There seem at least superficial relationships with predicate elimination
techniques that involve concepts like \name{standardized} and \name{in good
  scope} which are parameterized with a predicate to eliminate
\cite{dls:conradie} or the conversion to formulas where conjuncts are
\name{linkless outside a set of predicates}, a property that permits to
distribute existential quantification upon predicates not in the set
(``outside'' the set) over conjunction \cite{cw-tableaux}.

\section{An Application of Elimination: Modeling Syllogistic Reasoning}

As noted in \cite{craig:2008}, the view of syllogisms as instances of
elimination problems where the conclusion is the result of eliminating the
middle term from the conjunction of the premises is due to Boole
\cite{boole:laws}. Also Schröder pursues this representation of syllogisms
\cite[\S~42--44]{schroeder:2.1}.  If the three terms involved in syllogisms
are expressed as unary predicates, the premises and conclusions correspond to
sentences of first-order logic in the
relational monadic fragment. Behmann
\cite[\S~17]{beh:22} exemplarily demonstrates his second-order quantifier
elimination method with modeling some syllogisms and related statements by
Schröder.  Here, these examples are shown with some intermediate steps that
indicate the involved rewritings.\footnote{In addition, in \cite{beh:22} a
  ``singular'' reading of syllogisms is discussed, where the middle term is
  understood as individual instead of a predicate and elimination is not
  involved in modeling.}

\medskip

\noindent
Syllogism \name{Ferio}.
\begin{equation}
\begin{array}{lll}
       & \ex q\, (\all x\, (\lnot qx \lor \lnot px)
                     \land
                     \ex x\, (rx \land qx))\\
\equiv & \ex u\, (ru \land \ex q\,
                     (\all x\, (\lnot qx \lor \lnot px)
                     \land
                      \all x\, (x \neq u \lor qx)))\\
\equiv & \ex u\, (ru \land \lnot pu).
\end{array}
\end{equation}

\noindent
Syllogism \name{Darapti}. Here the implicitly understood non-emptiness of $q$
is added as a third auxiliary premise~$\ex x\, qx$.
\begin{equation}
\begin{array}{lll}
       & \ex q\, (\all x\, (\lnot qx \lor px)
                    \land
                    \all x\, (\lnot qx \lor rx)
                    \land 
                    \ex x\, qx)\\
\equiv & \ex u \ex q\, (\all x\, (\lnot qx \lor px)
                    \land
                         \all x\, (\lnot qx \lor rx)
                    \land 
                       \all x\, (x \neq u \lor qx))\\
\equiv & \ex u\, pu \land ru.
\end{array}
\end{equation}

\noindent
Behmann shows the following example from \cite[p.~361]{schroeder:2.1}, with the
premises ``Some~$p$ are not~$q$. Some~$q$ are not~$r$.''  Since the first
conjuncts of both premises are particular and negative no conclusions would
be expected in the sense of traditional syllogistic reasoning.  Elimination,
however, allows to conclude that there exist two distinct individuals, one
in~$p$ and the other not in~$r$:
\begin{equation}
\begin{array}{lll}
& \ex q\, (\ex x\, (px \land \lnot qx) \land
              \ex x\, (qx \land \lnot rx))\\
\equiv & \ex u \ex v\, (pu \land \lnot rv \land
\ex q\, (\all x\, (x \neq u \lor \lnot qx) \land
              \all (x\, qx \lor x \neq v)))\\
\equiv & \ex u \ex v\, (pu \land \lnot rv \land u \neq v).
\end{array}
\end{equation}

\noindent
The following example given by Behmann is a ``composed syllogism'' from
Schrö\-der \cite[p.~283]{schroeder:2.1}, which involves elimination of two
predicates~$q_1$ and $q_2$.
\begin{equation}
\begin{array}{lll}
& \ex q_1 \ex q_2\,
       (\all x\, (\lnot q_1x \lor px)
\land  \all x\, (\lnot q_2x \lor rx)
\land  \ex x\, (q_1x \land q_2x))\\
\equiv & \ex q_1\,
       (\all x\, (\lnot q_1x \lor px)
\land \ex q_2\,
(\all x\, (\lnot q_2x \lor rx)
\land   \ex x\, (q_1x \land q_2x)))\\
\equiv & \ex q_1\,
       (\all x\, (\lnot q_1x \lor px) \land
\ex u\, (q_1u\; \land\\
& \hspace{2em} \ex q_2\,
(\all x\, (\lnot q_2x \lor rx)
\land  \all x\, (x \neq u \lor q_2x))))\\
\equiv & \ex q_1\,
      (\all x\, (\lnot q_1x \lor px)
\land \ex u\, (q_1u \land ru))\\
\equiv & \ex u\, (ru \land 
      \ex q_1\,
       (\all x\, (\lnot q_1x \lor px)
\land \all x\, (x \neq u \lor q_1x)))\\
\equiv & \ex u\, (ru \land pu).\\
\end{array}
\end{equation}

\sectionmark{Polyadic Formulas Allowing
 Monadic Elimination Techniques}
\section{Polyadic Formulas Allowing
 Monadic Elimination Techniques and Handling of
 Auxiliary Definitions}
\label{sec-essentially-monadic}

The elimination properties of monadic logic also apply in certain cases where
involved predicates have more than one argument, for example, if arguments
except of one are instantiated with a constant or with a variable that is free
in the scope of the respective predicate quantifiers.  Behmann gives in
\cite[\S~19]{beh:22} several examples for applying his elimination
method to such
formulas. His first example is from \cite[p.~491]{schroeder:3}:
\begin{equation}
\begin{array}{rl}
& \ex p\,
  (\all z\, (fxz \lor pzy \lor hzy)
  \land
  \all z\, (gxz \lor \lnot pzy \lor hzy))\\
\equiv & \ex p^{\prime}\,
  (\all z\, (f^{\prime}z \lor p^{\prime}z \lor h^{\prime}z)
  \land
  \all z\, (g^{\prime}z \lor \lnot p^{\prime}z \lor h^{\prime}z))\\
\equiv &
  \all z\, (f^{\prime}z \lor g^{\prime}z \lor h^{\prime}z)\\
\equiv &  
  \all z\, (fxz \lor gxz \lor hzy),
\end{array}
\end{equation}
where the following shorthands are used to hide the free variables from view:
$f^{\prime}z = fxz$, $g^{\prime}z = gxz$, $h^{\prime}z = hzy$,
$p^{\prime}z = pzy$.
The next example is from \cite[p.~308]{schroeder:3}:
\begin{equation}
\begin{array}{rl}
& \all p\, (\ex u\, (pxu \land fuy) 
        \lor \all v\, (\lnot pxv \lor gvy))\\
\equiv & \lnot \ex p^{\prime}\,  
              (\all u\, (\lnot p^{\prime}u \lor \lnot f^{\prime}u) 
        \land \ex v\, (p^{\prime}v \land \lnot g^{\prime}v))\\
\equiv & \lnot \ex v\, (\lnot g^{\prime}v \land \ex p^{\prime}\,  
              (\all u\, (\lnot p^{\prime}u \lor \lnot f^{\prime}u) 
        \land \all u\, (u \neq v \lor p^{\prime}u)))\\
\equiv & \lnot \ex v\, (\lnot g^{\prime}v \land \lnot f^{\prime}v)\\
\equiv & \all v\, (gvy \lor fvy),
\end{array}
\end{equation}
where the following shorthands are used (for arbitrary variables~$z$):
$f^{\prime}z = fzy$, $g^{\prime}z = gzy$, $p^{\prime}z = pxz$.  The
following formula from \cite[p.~510]{schroeder:3} is shown as an example where
elimination with the method of \cite{beh:22} fails, since a proper relation
between $u$ and $v$ is involved.
\[\all p\, (pxy \lor
  \ex v\, (\all u\, (\lnot pxu \lor fuv) \land gvy)).\]
The final example given in \cite[\S~19]{beh:22} is from
\cite[p.~545]{schroeder:3} and involves two instances of the binary
predicate~$p$ to eliminate where the bound variable occurs in different
argument positions:
\begin{equation}
\begin{array}{rl@{\hspace{1.0em}}r}
& \ex p\, (\all z\, (fxz \lor pzy) \land
               \all z\, (\lnot pxz \lor gzy))\\
\equiv & \ex p^{\prime} \ex q^{\prime}\, 
             (\all z\, (f^{\prime}z \lor p^{\prime}z) \land
             \all z\, (\lnot q^{\prime}z \lor g^{\prime}z) \land
            ((p^{\prime}x \land q^{\prime}y) \lor
                  (\lnot p^{\prime}x \land \lnot q^{\prime}y))) & *\\
\equiv &
  \ex p^{\prime} \ex q^{\prime}\, 
           (\all z\, (f^{\prime}z \lor p^{\prime}z) \land
           \all z\, (\lnot q^{\prime}z \lor g^{\prime}z) \land
        p^{\prime}x \land q^{\prime}y) \; \lor \\
        & \ex p^{\prime} \ex q^{\prime}\, 
          (\all z\, (f^{\prime}z \lor p^{\prime}z) \land
           \all z\, (\lnot q^{\prime}z \lor g^{\prime}z) \land
          \lnot p^{\prime}x \land \lnot q^{\prime}y)\\
\equiv & (\ex p^{\prime}\, (\all z\, (f^{\prime}z \lor p^{\prime}z) 
           \land  p^{\prime}x) \land
          \ex q^{\prime}\, (\all z\, (\lnot q^{\prime}z \lor g^{\prime}z) 
           \land q^{\prime}y))\; \lor\\
       & (\ex p^{\prime}\, (\all z\, (f^{\prime}z \lor p^{\prime}z) 
           \land  \lnot p^{\prime}x) \land
          \ex q^{\prime}\, (\all z\, (\lnot q^{\prime}z \lor g^{\prime}z)
          \land \lnot q^{\prime}y))\\
\equiv & \ex q^{\prime}\, (\all z\, (\lnot q^{\prime}z \lor g^{\prime}z) 
         \land q^{\prime}y)\; \lor & **\\
       & \ex p^{\prime}\, (\all z\, (f^{\prime}z \lor p^{\prime}z) 
         \land  \lnot p^{\prime}x)\\
\equiv & \ex q^{\prime} (\all z\, (\lnot q^{\prime}z \lor g^{\prime}z) \land
         \all z\, (z \neq y \lor q^{\prime}z))\; \lor\\
       & \ex p^{\prime} (\all z\, (f^{\prime}z \lor p^{\prime}z) \land  
         \all z\, (z \neq x \lor \lnot p^{\prime}z))\\
\equiv & \all z\, (z \neq y \lor g^{\prime}z) \lor 
         \all z\, (f^{\prime}z \lor z \neq x)\\
\equiv & g^{\prime}y \lor f^{\prime}x\\
\equiv & gyy \lor fxx,
\end{array}
\end{equation}
where the following shorthands are involved: $f^{\prime}z = fxz$,
$g^{\prime}z = gzy$, $p^{\prime}z = pzy$, $q^{\prime}z = pxz$.  The
two instances of $p$ are represented by different shorthands $p^{\prime}$ and
$q^{\prime}$ that are related by the equivalence $p^{\prime}x \equi
q^{\prime}y$, which is added (expressed with disjunction and conjunction) in
step~(*). Step~(**) is obtained by deleting two subformulas for which
elimination yields~$\true$.

So far, the contraction and expansion with the shorthand predicates has been
handled on the meta level.  Second-order quantification would allow to
understand the introduction and elimination of such definitions as equivalence
preserving transformations, with Ackermann's Lemma
\cite{ackermann:35}, a foundation of modern
elimination methods such as
\cite{dls,sqema,schmidt:2012:ackermann,ks:2013:frocos}, as a special case.
The basis for this understanding of introducing and expanding of auxiliary
definitions is the following property, which easily follows from
Lemma~\ref{lem-basic-elim}:
\begin{prop}[Eliminability of Expandable Definitions]
\label{prop-elim-expandable-definition}
Let $p$ be a unary predicate and let $F$ be a first-order formula in which $p$
does not occur.  It then holds that
\[\ex p\, \all x\, (px \equi F)\; \equiv\; \true.\]
\end{prop}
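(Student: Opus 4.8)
The plan is to reduce the claim directly to the Basic Elimination Lemma (Lemma~\ref{lem-basic-elim}). First I would unfold the biconditional: since $px \equi F$ is meta-level shorthand for $(\lnot px \lor F) \land (px \lor \lnot F)$, the matrix $\all x\, (px \equi F)$ becomes $\all x\, ((\lnot px \lor F) \land (px \lor \lnot F))$. Distributing the universal quantifier over the conjunction with \ref{rule-all-out-and} (applied from right to left) and reordering the disjuncts by commutativity \ref{rule-ao-comm} yields $\all x\, (\lnot F \lor px) \land \all x\, (F \lor \lnot px)$.

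This is precisely the shape of the left-hand side of Lemma~\ref{lem-basic-elim}, with the role of its ``$F$'' played by $\lnot F$ and the role of its ``$G$'' played by $F$. Both of these formulas are first-order with equality, may contain free occurrences of $x$ (which the lemma explicitly permits), and do not contain $p$, since $p$ does not occur in $F$ by assumption. Applying the lemma to $\ex p\, (\all x\, (\lnot F \lor px) \land \all x\, (F \lor \lnot px))$ therefore gives the first-order equivalent $\all x\, (\lnot F \lor F)$.

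Finally I would simplify: the disjunction $\lnot F \lor F$ is valid by \ref{rule-complem-or}, so it reduces to $\true$, and $\all x\, \true$ reduces to $\true$ by \ref{rule-tv-q-t}, completing the chain $\ex p\, \all x\, (px \equi F) \equiv \true$.

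I do not expect a genuine obstacle here; the only point requiring care is the bookkeeping of which side of the biconditional becomes the ``$F$'' and which the ``$G$'' of Lemma~\ref{lem-basic-elim}, together with the attendant complementation (note that $\lnot F$ appears on the conjunct paired with $px$, not $F$). Everything else is routine propositional and quantifier rewriting already collected in Prop.~\ref{prop-equi}.
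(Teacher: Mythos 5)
Your proposal is correct and follows exactly the route the paper intends: the paper gives no explicit proof but states that the proposition ``easily follows from Lemma~\ref{lem-basic-elim}'', and your unfolding of the biconditional into $\all x\, (\lnot F \lor px) \land \all x\, (F \lor \lnot px)$ followed by an application of the Basic Elimination Lemma (with its $F$ instantiated to $\lnot F$ and its $G$ to $F$) is precisely that derivation, finishing with $\all x\,(\lnot F \lor F) \equiv \true$.
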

With Prop.~\ref{prop-elim-expandable-definition}, the following proposition
can be derived, which explicates the handling of auxiliary definitions by
means of equivalence preserving formula transformations:

\begin{prop}[Introduction and Elimination of Definitions]
\label{prop-def-elimintro}
Let $p$ be a unary predicate, let $x$ be an variable and let $G[x]$ be a
first-order formula in which~$p$ does not occur.  For a constant or
variable~$t$, let $G[t]$ denote $G[x]$ with all free occurrences of~$x$
replaced by~$t$.  Let $F[G[t_1],\ldots,G[t_n]]$ be a first-order formula in
which $p$ does not occur and which has~$n$ occurrences of subformulas,
instantiated with $G[t_1], \ldots, G[t_n]$, respectively, neither of them in a
context where a variable that occurs free in $G[x]$ is bound.  Let $F[pt_1,
  \ldots, pt_n]$ denote the same formula with the indicated occurrences
$G[t_i]$ replaced by $pt_i$. Then
\[F[G[t_1],\ldots,G[t_n]]\;
\equiv\; 
\ex p\, (\all x\, (px \equi G[x]) \land F[pt_1,\ldots, pt_n]).\]
\end{prop}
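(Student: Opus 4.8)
The plan is to rewrite the right-hand side into the left-hand side by equivalence-preserving steps, culminating in an application of Prop.~\ref{prop-elim-expandable-definition}; this keeps the derivation within the rewriting methodology of the paper and explains the remark preceding the statement. Abbreviate the definitional conjunct by $D \eqdef \all x\, (px \equi G[x])$, so that the right-hand side reads $\ex p\, (D \land F[pt_1,\ldots,pt_n])$. The argument proceeds in three stages: (i) replace, inside the scope of $\ex p$, each occurrence $pt_i$ by $G[t_i]$, converting $F[pt_1,\ldots,pt_n]$ into $F[G[t_1],\ldots,G[t_n]]$ under the hypothesis $D$; (ii) detach the resulting $p$-free formula from the predicate quantifier; and (iii) discharge the remaining $\ex p\, D$.

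The crux is stage (i), which I expect to be the main obstacle. Splitting the biconditional, $D$ is equivalent to $\all x\, (\lnot px \lor G[x]) \land \all x\, (px \lor \lnot G[x])$, hence supplies both universally quantified implications $px \imp G[x]$ and $G[x] \imp px$, and in particular $D \entails (pt_i \equi G[t_i])$ for each $i$ by instantiation at $t_i$. In the presence of such a biconditional the replacement of $pt_i$ by $G[t_i]$ is equivalence-preserving; syntactically it is realized occurrence by occurrence through the substitution entailments \ref{inf-vbar-pos} and \ref{inf-vbar-neg} (their starred variants \ref{inf-vbar-pos-star} and \ref{inf-vbar-neg-star} when $t_i$ is a variable), choosing the implication according to the polarity of the occurrence and applying each of the two directions so as to turn the one-sided entailments into an equivalence. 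Here the proviso that no replaced occurrence lies in a context binding a variable free in $G[x]$ is exactly what legitimizes instantiating the universal $D$ at $t_i$ and writing $G[t_i]$ in place of $pt_i$ without variable capture; this is the delicate point that the remainder of the argument merely book-keeps around. The outcome of stage (i) is
\[\ex p\, (D \land F[pt_1,\ldots,pt_n])\;\equiv\;\ex p\, (D \land F[G[t_1],\ldots,G[t_n]]).\]

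For stage (ii), since neither $F$ nor $G$ contains $p$, the formula $F[G[t_1],\ldots,G[t_n]]$ is free of $p$, so I would pull it out of the quantifier by \ref{rule-q-out-ao}, obtaining $F[G[t_1],\ldots,G[t_n]] \land \ex p\, D$. For stage (iii), Prop.~\ref{prop-elim-expandable-definition} rewrites $\ex p\, D = \ex p\, \all x\, (px \equi G[x])$ to $\true$, whereupon \ref{rule-tv-and-t} collapses the conjunction to $F[G[t_1],\ldots,G[t_n]]$, which is the left-hand side. The degenerate case $n = 0$ is subsumed: stage (i) is then vacuous and stages (ii)--(iii) directly yield $F \land \true \equiv F$.
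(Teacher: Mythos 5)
Your proposal is correct and follows essentially the route the paper intends: the paper derives this proposition from Prop.~\ref{prop-elim-expandable-definition} by replacing all occurrences of $p$ with their definientia under the definitional conjunct and then discharging $\ex p\, \all x\, (px \equi G[x])$ as $\true$, which is exactly your stages (i)--(iii). You merely spell out in more detail the replacement step and its justification via the substitution entailments, which the paper leaves implicit.
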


\noindent
Prop.~\ref{prop-def-elimintro} can be applied from left to right to
introduce auxiliary predicates~$p$ and from right to left to expand them, by
replacing \emph{all} occurrences of~$p$ with their definientia and then
dropping the definition.  

In \cite[\S~19]{beh:22}, the shorthand predicates are not explicitly handled
in this way, although the notation, where they are specified with the
equivalence symbol $\equi$, might suggest that this could be the underlying
intuition. In later works he explicitly applies a variant of
Prop.~\ref{prop-def-elimintro} to introduce definitions of nullary
predicates, presented in the manuscript \mref{man:beh:34:k8} as a second-order
adaption of \ref{rule-pullout-all} and \ref{rule-pullout-ex} (see also
(\ref{rule-pullout-sol}) in Sect.~\ref{sec-corr-elim-1934}).

In the case where in $F[pt_1,\ldots, pt_n]$ all occurrences of atoms with
predicate~$p$ are, say, positive, it holds that
\begin{equation}
\begin{array}{rl}
& \ex p\, (\all x\, (px \equi G[x]) \land
F[pt_1,\ldots, pt_n])\\
\equiv &  \ex p\, (\all x\, (px \imp G[x])
\land F[pt_1,\ldots, pt_n,])
\end{array}
\end{equation}
which leads to Ackermann's Lemma \cite{ackermann:35}, shown here
for unary predicates~$p$:
\begin{prop}[Ackermann's Lemma]
\label{prop-ackermann-lemma}
Assume the setting of Prop.~\ref{prop-def-elimintro} and that the
indicated subformula occurrences in $F[G[t_1],\ldots,G[t_n]]$ (or,
equivalently, in $F[pt_1,\ldots, pt_n]$) are either all positive or 
are all negative.

\smallskip

\sdlab{prop-ackermann-lemma-pos} If the indicated subformula
occurrences are positive, then
\[\ex p\, (\all x\, (px \imp G[x]) \land F[pt_1,\ldots, pt_n])
         \; \equiv \;
         F[G[t_1],\ldots,G[t_n]].\]
\sdlab{prop-ackermann-lemma-neg} If the indicated subformula
occurrences are negative, then
\[\ex p\, (\all x\, (px \revimp G[x]) \land F[pt_1,\ldots, pt_n]) 
         \; \equiv \;
         F[G[t_1],\ldots,G[t_n]].\]
\end{prop}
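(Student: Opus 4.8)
The plan is to reduce both parts to the already-established Proposition~\ref{prop-def-elimintro} together with a single monotonicity observation, proving the positive case in full and obtaining the negative case by a dual argument. I would treat the positive case first.

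First I would recall that Proposition~\ref{prop-def-elimintro}, applied in the present setting, gives
\[F[G[t_1],\ldots,G[t_n]]\;\equiv\;\ex p\, (\all x\, (px \equi G[x]) \land F[pt_1,\ldots, pt_n]),\]
and observe that this equivalence holds \emph{irrespective} of the polarity of the occurrences of $p$. Hence, to establish the positive case it suffices to show
\[\ex p\, (\all x\, (px \equi G[x]) \land F[pt_1,\ldots, pt_n])\;\equiv\;\ex p\, (\all x\, (px \imp G[x]) \land F[pt_1,\ldots, pt_n]),\]
that is, that the biconditional in the definitional conjunct may be weakened to an implication. The entailment from left to right is immediate: since $px \equi G[x]$ entails $px \imp G[x]$ and the second conjunct $F[pt_1,\ldots,pt_n]$ is untouched, any witness for the left-hand side is a witness for the right-hand side.

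The crux is the converse entailment, which I would argue semantically. Fix an interpretation together with an extension for $p$ satisfying both $\all x\, (px \imp G[x])$ and $F[pt_1,\ldots, pt_n]$. The first conjunct says exactly that the extension of $p$ is contained in the set defined by $G[x]$. Now reinterpret $p$ so that $px$ holds precisely when $G[x]$ holds; this reinterpretation trivially satisfies $\all x\, (px \equi G[x])$, and it only \emph{enlarges} the extension of $p$. Because every occurrence of $p$ in $F$ is positive, $F[pt_1,\ldots,pt_n]$ is monotone in the extension of $p$, so its truth is preserved under this enlargement, and the reinterpretation therefore also satisfies the second conjunct. It is thus a witness for the left-hand side, which yields the converse entailment and hence the displayed equivalence. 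Chaining with Proposition~\ref{prop-def-elimintro} proves the positive case.

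For the negative case the argument is entirely dual: there the definitional conjunct is weakened to $px \revimp G[x]$, i.e. $G[x] \imp px$, which forces the extension of $p$ to \emph{contain} the set defined by $G$; since now every occurrence of $p$ is negative, $F$ is antitone in the extension of $p$, so \emph{shrinking} $p$ down to the extension of $G$ preserves truth. Alternatively, the negative case can be obtained from the positive one by replacing $p$ throughout with its complement, or via the duality of Proposition~\ref{prop-equi-dual}. The main obstacle is the monotonicity step in the crux: one must justify that a positive (negative) occurrence of $p$ makes $F$ monotone (antitone) with respect to the extension of $p$. This is the standard polarity lemma of classical logic, proved by structural induction on $F$ using that a positive occurrence lies under an even number of negations; the side conditions inherited from Proposition~\ref{prop-def-elimintro} (no free variable of $G[x]$ is captured at the marked occurrences) ensure that the substitutions and the pointwise monotonicity argument are sound. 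The remainder is routine bookkeeping.
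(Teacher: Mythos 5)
Your proposal is correct and follows essentially the same route as the paper: the paper derives Ackermann's Lemma by chaining Prop.~\ref{prop-def-elimintro} with the observation that, under the polarity assumption, the definitional biconditional $px \equi G[x]$ may be weakened to $px \imp G[x]$ (resp.\ $px \revimp G[x]$). You merely make explicit the standard monotonicity/polarity argument that the paper leaves implicit for that weakening step.
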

The Basic Elimination Lemma (Lemma~\ref{lem-basic-elim}) is obviously an
instance of Ackermann's Lemma. Vice versa, Ackermann's Lemma can be proven
such that the only elimination step is performed according to the Basic
Elimination Lemma, or according to Prop.~\ref{prop-elim-expandable-definition}.

\section{Ackermann's Quantifier Switching}
\label{sec-ackermann-switching}

In \cite{ackermann:35:arity}, a short sequel to \cite{ackermann:35},
Ackermann shows a precondition which allows to move existential predicate
quantification to the right of universal individual quantification, where the
arity of the quantified predicate is reduced:

\begin{lem}[Ackermann's Quantifier Switching]
\label{lem-ackermann-switching}
Let $p$ be a predicate with arity $n+1$, where $n \geq 0$. Let $F =
F[pxt_{11}\ldots t_{1n}\ldots  pxt_{m1}\ldots t_{mn}]$, where $m \geq
1$, be a formula of second-order logic in which $p$  has the
exactly $m$ indicated occurrences.  Assume further that $p$ and $x$ occur only
free in $F$.  Let $q$ be a predicate with arity $n$ that does not occur in $F$
and let $F[qt_{11}\ldots t_{1n},\ldots,$ $qt_{m1}\ldots t_{mn}]$ denote
$F$ with each occurrence $pxt_{ij}\ldots t_{ij}$ of $p$ replaced by
$qt_{ij}\ldots t_{ij}$, for $1 \leq i \leq n$, $1 \leq j \leq m$. Under the
assumption of the axiom of choice it then holds that
\[\begin{array}{rl}
& \ex p \all x\, F[pxt_{11}\ldots t_{1n},\;\ldots,\;
                          pxt_{m1}\ldots t_{mn}]\\
\equiv &
\all x \ex q\, F[qt_{11}\ldots t_{1n},\;\ldots,\;
                        qt_{m1}\ldots t_{mn}].
\end{array}
\]
\end{lem}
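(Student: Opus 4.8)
The plan is to argue semantically, exploiting the one structural feature that drives the equivalence: since $p$ occurs in~$F$ only in atoms whose \emph{first} argument is the variable~$x$, the part of an interpretation of~$p$ that matters for a fixed value of~$x$ is merely the corresponding $n$-ary ``slice''. Fix an interpretation~$\mathcal{I}$ with domain~$D$ interpreting every symbol that occurs free in either side of the claimed equivalence; these are exactly the symbols occurring free in~$F$ apart from $p$, $q$ and~$x$, all of which are bound by the displayed quantifiers (here the hypothesis that $p$ and~$x$ occur only free in~$F$ is used, so that $\ex p$ and $\all x$ genuinely bind all their occurrences). For $R \subseteq D^{n+1}$ and $a \in D$ write $R_a \eqdef \{\bar b \in D^n : (a,\bar b) \in R\}$ for the slice of~$R$ above~$a$, and abbreviate $t_{i1}\ldots t_{in}$ by~$\bar t_i$.

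The key step is a \textbf{Slice Lemma}: for every $R \subseteq D^{n+1}$ and every $a \in D$,
\[\mathcal{I}[p \mapsto R,\, x \mapsto a] \models F[px\bar t_1,\ldots,px\bar t_m]
\quad\text{iff}\quad
\mathcal{I}[q \mapsto R_a,\, x \mapsto a] \models F[q\bar t_1,\ldots,q\bar t_m].\]
I would prove this by a routine induction on~$F$: the two formulas share the same Boolean and quantifier structure and agree on every subformula not mentioning the eliminated predicate, while at each distinguished atom the leading argument of~$p$ is pinned to~$a$, so that $px\bar t_i$ queries~$R$ at $(a,\bar c)$ and $q\bar t_i$ queries~$R_a$ at~$\bar c$, where $\bar c$ is the common tuple currently denoted by~$\bar t_i$; these agree by the very definition of~$R_a$. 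Any additional occurrences of~$x$ inside~$F$ cause no trouble, as they evaluate to the same~$a$ on both sides.

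Via the Slice Lemma the two sides become set-theoretic statements over~$D$. The left side holds iff some $R \subseteq D^{n+1}$ satisfies $\mathcal{I}[q \mapsto R_a,\, x \mapsto a] \models F[q\bar t_1,\ldots,q\bar t_m]$ for \emph{all} $a \in D$, whereas the right side holds iff for \emph{each} $a \in D$ some $S \subseteq D^n$ satisfies $\mathcal{I}[q \mapsto S,\, x \mapsto a] \models F[q\bar t_1,\ldots,q\bar t_m]$. The implication from left to right is immediate and choice-free: given a witnessing~$R$, the slice $S \eqdef R_a$ serves at each~$a$.

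The converse is the only real obstacle, and it is exactly where the axiom of choice enters. For each $a \in D$ the right side supplies a nonempty family of admissible relations~$S$; invoking the axiom of choice I would pick one witness~$S_a$ for every $a \in D$ simultaneously and glue the slices into the single relation $R \eqdef \{(a,\bar b) : a \in D,\ \bar b \in S_a\} \subseteq D^{n+1}$, which by construction has $R_a = S_a$ and hence witnesses the left side. The gluing itself is unproblematic; it is precisely the simultaneous selection of one witness for each of the (possibly uncountably many) values~$a$ that needs choice, matching the stated hypothesis and reflecting that the equivalence is a second-order instance of Skolemization.
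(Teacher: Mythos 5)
Your argument is correct. Note, however, that the paper itself gives no proof of this lemma: it is stated with attribution to Ackermann's 1935 sequel and with pointers to Church (\S 56) and van Benthem, so there is no in-paper proof to compare against. Your semantic route is the standard one and it is sound: the Slice Lemma is exactly the right reduction (it goes through by induction on $F$ precisely because the hypothesis that $p$ and $x$ occur only free in $F$ keeps the first argument of every $p$-atom pinned to the fixed element $a$ throughout the evaluation, and because $q$ does not occur in $F$ so no capture can arise), the left-to-right direction is correctly identified as choice-free, and the right-to-left direction correctly locates the use of the axiom of choice in the simultaneous selection of one witnessing slice $S_a$ per domain element before gluing them into a single $(n+1)$-ary relation. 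This matches the paper's framing of the equivalence as a second-order analogue of Skolemization.
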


\noindent
Van Benthem \cite[p.~211]{benthem:book} mentions this equivalence with
application from right to left to achieve prenex form with respect to
second-order quantifiers.  Church discusses it in \cite[\S~56]{church:book} in
the context of well-ordering of the individuals. On its basis decidability of
the description logic $\mathcal{ALC}$ and related modal logics can be shown by
constructions of equi-satisfiable relational monadic second-order formulas
\cite{cw-relmon}.  Ackermann applies this equivalence in
\cite{ackermann:35:arity} to avoid Skolemization and to convert formulas such
that monadic techniques or Ackermann's Lemma become applicable.  He shows five
examples from \cite{schroeder}. The first of these rewrites the input formula
such that elimination methods for \MON becomes applicable. The formula to
which these are applied contains binary predicates, but as for Behmann's
examples in Sect.~\ref{sec-essentially-monadic}, in all occurrences of binary
predicates one of the arguments is free in the scope of the predicate
quantifier to eliminate. The example proceeds in the following steps:
\begin{equation}
\label{examp-ackermann-26}
\begin{array}{rll}
& \ex f\, (\all x \ex y\, (axy \land fxy) \land \all x
  \ex z\, (bxz \land \lnot fxz))\\
\equiv & \ex f \all x\, (\ex y\,  (axy \land fxy)
           \land \ex z\, (bxz \land \lnot fxz))
       & \ref{rule-all-out-and}\\
\equiv &
   \all x\, \ex g\, (\ex y\,  (axy \land gy)
         \land \ex z\, (bxz \land \lnot gz))
       & \text{Lemma~\ref{lem-ackermann-switching}}\\
\equiv & \all x \ex y \ex z\,  (axy \land bxz \land y \neq z).
& \text{Lemma~\ref{lem-hauptform-noeq-elim}}
\end{array}
\end{equation}
Ackermann notes in \cite{ackermann:35:arity} that he has discussed this
example already in \cite{ackermann:35} to illustrate another method: Applying
the variant of resolution-based elimination from \cite{ackermann:35} (of which
the modern SCAN \cite{scan} is a refinement -- see
\cite{nonnengart:elim:1999}), which involves conversion to a universal formula
by means of Skolem functions and un-Skolemization after elimination.

Indeed, Ackermann motivates his quantifier switching technique by the fact
that the introduction of Skolem functions (\de{Belegungsfunktionen}) leads to
such intricate tasks that one would like to avoid these functions for those
special cases where the results can also be obtained in other ways.  In this
sense, he also writes to Behmann in October 1934 that he does no longer
consider Skolem functions as an advantage.\footnotemark

\footnotetext{\de{Im übrigen halte ich neuerdings
  die Einführung der Belegungsfunktionen für keinen Vorteil mehr; die mit
  Hilfe der Belegungsfunktionen auszudrückenden Probleme, so einfach sie sich
  auch symbolisch ausdrücken lassen, werden so schwierig, dass ich da kein
  Weiterkommen sehe. Andererseits lassen sich in speziellen Fällen, wie bei
  meinem Beispiel~(26), erzielten Resultate ebensogut durch Anwendung der
  Formel
\[\all x \ex y\, gxy\; \equiv\;
  \ex f\, (\all x \ex y\, fxy \land \all x \all y\,
  (\lnot fxy \lor gxy))\] und entsprechender Formeln für mehr Variable
  gewinnen.} Letter from Wilhelm Ackermann to Heinrich Behmann, 29
  October 1934 \cref{corr:ab:1934:10:29}. Formulas are rendered in modern
  notation.  The mentioned \de{Beispiel (26)} is the example reproduced above
  as (\ref{examp-ackermann-26}).  The shown equivalence follows from
  Lemma~\ref{lem-ackermann-switching} and~\ref{lem-hauptform-noeq-elim}, but it
  does not immediately match with the referenced example.}

\part{Towards Elimination for Relations: 
The Correspondence between Behmann and Ackermann 1928--1934}
\label{part-polyadic}

\section{Introduction to Part~\ref{part-polyadic}}
\label{sec-intro-part-polyadic}

In this part Behmann's follow-up works to \cite{beh:22}, his related
correspondence with Ackermann and his presentation of the resolution-based
elimination method by Ackermann \cite{ackermann:35} are summarized. The topic
is the decision problem for relations (\de{Entscheidungsproblem für
  Beziehungen}), that is, the decision problem for formulas in which
predicates of arity two or more occur.  The approach is, as for the monadic
case in \cite{beh:22}, to apply second-order quantifier elimination
techniques.

In \cite[p.~226f]{beh:22},
Behmann 
\label{page-cite-limitations-of-elimination} 
conjectures that for the extension of the decision problem to arbitrary
relations and higher concepts it is questionable whether the elimination
problem can serve further as a suitable basis, justified on the following
considerations: \enq{If two classes $\alpha$ and $\beta$ satisfy a condition
  that can be specified purely logical and involves some variable classes --
  for example, that there is a third class which contains $\alpha$ as subclass
  and is itself contained in $\beta$ as subclass --, then we know indeed that
  we can express such a condition certainly also without mentioning such
  variable classes.  The matter is, however, as it seems, much more intricate
  if a variable \emph{relation}\footnote{As pointed out later by Ackermann \cite[p.~393]{ackermann:35},
  it is, however, not essential that the predicate \emph{to eliminate} has an
  arity larger than one -- the same difficulties arise if the predicate to
  eliminate is unary but other predicates with two or more
  arguments do occur.} is allowed, as, for example, at
  the statement that two classes $\alpha$ and $\beta$ have the same
  cardinality, that is, that by a certain relation the elements of one class
  can be mapped in a one-to-one correspondence to that of the other one. Here
  one does not see a possibility to express the condition that two classes
  have the same cardinality in general and without reference to such a
  variable relation.  Presumable, again a completely new idea is required
  here.}\footnotemark

\footnotetext{\de{Was die Erweiterung des Entscheidungsproblems auf
beliebige Beziehungen und höhere Begriffe angeht, so erscheint es immerhin
fraglich, ob auch hier das Eliminationsproblem weiterhin als geeignete
Grundlage dienen können wird, und zwar auf Grund der folgenden Überlegung:
Genügen etwa zwei Klassen $\alpha$ und $\beta$ einer rein logisch angebbaren
Bedingung, innerhalb deren irgendwelche veränderliche Klassen vorkommen --
sagen wir z.~B. derjenigen, daß es eine dritte Klasse gibt, die $\alpha$ als
Teilklasse enthält und ihrerseits als Teilklasse in $\beta$ enthalten ist --,
so wissen wir allerdings, daß wir eine solche Bedingung gewiß auch ohne
Erwähnung derartiger veränderlicher Klassen auszudrücken vermögen.  Die Sache
liegt indessen, wie es scheint, wesentlich verwickelter, sobald eine
veränderliche \emph{Beziehung} in Frage kommt, wie z.~B. bei der Aussage, daß
die Klassen $\alpha$ und $\beta$ gleichzahlig sind, d. h. daß durch eine
gewisse Beziehung die Elemente der einen denen der anderen umkehrbar eindeutig
zugeordnet werden. Hier sieht man durchaus keine Möglichkeit, die Bedingung
der Gleichzahligkeit zweier Klassen allgemein ohne einen Hinweis auf eine
derartige veränderliche Beziehung auszudrücken. Vermutlich wird es hier
also wiederum eines ganz neuen Gedankens bedürfen.} \cite[p.~226f]{beh:22}. 
}

Behmann gave in September 1926 at the \dename{Jahresversammlung der Deutschen
  Mathematiker-Vereinigung} a talk on the decision problem and the logic of
relations (\dename{Entscheidungsproblem und Logik der Beziehungen}). Its
abstract, published as \cite{beh:26:beziehungen}, aroused the curiosity of
Ackermann, who wrote in August 1928 to Behmann, initiating a correspondence
that lasted to November 1928 and comprises five letters.  Related topics were
discussed later in two letters, the first sent by Behmann upon receiving the
offprint of \cite{ackermann:35}, the second by Ackermann in reply.  Their
correspondence, as far as archived in \cite{beh:nl}, then only continues in
January 1953, with five more letters until December 1955, where different
topics are discussed (a complete register of their correspondence in
\cite{beh:nl} is provided in Sect.~\ref{sec-corr}).

Further sources of this presentation include a manuscript
\mref{man:beh:26:beziehungen} for the abstract \cite{beh:26:beziehungen} and
an unpublished manuscript from December 1934 \mref{man:beh:34:k8} \de{Ein
  wichtiger Fortschritt im Entscheidungsproblem der Mathematischen Logik}
(\name{An Important Progress in the Decision Problem of Mathematical Logic}
with subtitle \name{(Ackermann Math. Annalen 110 S.390)}, referring to
\cite{ackermann:35}.

The addressed topics include the use of Skolemization as well as the early
form of resolution by Ackermann \cite{ackermann:35}, applied to express
results of second-order quantifier elimination on universal formulas by a
possibly infinite set of formulas.  Resolution-based second-order quantifier
elimination has been considered in modern times with the SCAN algorithm
\cite{scan}.  The use of logics extended with a fixpoint operator to express
possibly infinite elimination results has been investigated in
\cite{nonnengart:fixpoint} and is today one of the core techniques for
second-order quantifier elimination (or ``forgetting'') in description logics
\cite{ks:2013:frocos}. So far, however, the fixpoint approach is based not on
Ackermann's resolution method, but on Ackermann's Lemma, another result from
\cite{ackermann:35} (see Prop.~\ref{prop-ackermann-lemma} on
p.~\pageref{prop-ackermann-lemma}).  A further work that explicitly relates to
Ackermann's resolution-based method is \cite{craig:bases}.  In his manuscript
from 1934 and his letter to Ackermann, Behmann suggests a graph representation
for the possibly infinite set of resolvents.

We use the same modern syntax for the presentation as in the other parts, but
keep the Greek letters $\varphi, \chi, \psi$, and $\alpha, \beta, \gamma$ from
the original documents. Ackermann originally writes predicates in upper case,
which are rendered here in lower case.  Detailed information about the
consulted manuscripts and letters, as well as a summary of other topics
discussed in the correspondence is provided in Sect.~\ref{sec-manuscripts} and
Sect.~\ref{sec-corr}.

\section{First Considerations on Elimination for Universal Formulas}
\label{sec-corr-elim-1928a}

In the abstract \cite{beh:26:beziehungen} of his talk
\dename{Entscheidungsproblem und Logik der Beziehungen} (\name{Decision
  Problem and Logic of Relations}) given on 23 September 1926 at the
\de{Jahresversammlung der Deutschen Mathematiker-Vereinigung} in Düsseldorf,
Behmann considers elimination of an existential second-order quantifier upon a
predicate~$\varphi$ with arity two or larger applied to a universal
first-order formula, that is, elimination in
\begin{equation}
\ex \varphi \all x_1 \ldots \all x_n\, F,
\end{equation}
where $\varphi$ is a predicate with arity $\geq 2$ and $F$ is a first-order
formula. 
Behmann states in \cite{beh:26:beziehungen}, that this problem allows
reduction to the following sequence of problems:
\begin{equation}
\label{eq-sequence-sol}
\begin{array}{l}
\ex \varphi \all x\, F[\varphi x, x],\\
\ex \varphi \ex \chi \all x \all y\, F[\varphi x,\chi y, x, y],\\
\ex \varphi \ex \chi \ex \psi \all x \all y \all z\, 
F[\varphi x,\chi y,\psi z, x, y, z],\\
\ldots\footnotemark
\end{array}
\end{equation}
where the quantified predicates 
\footnotetext{In a letter to Alonzo Church, dated 15 April 1937, Behmann
  refers in a broader context to this sequence of formulas. See
  Sect.~\ref{sec-letters-church}.}
are unary and the formulas~$F[\ldots]$ are
first-order and such that all occurrences of the quantified predicates have
the indicated variable as argument. In addition, the individual variables
themselves are listed in the square brackets, indicating that they might
also have further occurrences in the formula.\footnote{\cite{beh:26:beziehungen}
  contains some obvious printing errors that have been quietly corrected here
  -- see discussion of manuscript~\mref{man:beh:26:beziehungen}
in Sect.~\ref{sec-manuscripts}.}
Behmann further claims in \cite{beh:26:beziehungen} that formulas of that form
would allow elimination of the predicate quantifiers.

The draft~\mref{man:beh:26:beziehungen} of \cite{beh:26:beziehungen} gives
some further details: The solution of the first component of the sequence is
\begin{equation}
\label{eq-sol-row-one}
\all x \ex p\, F[p,x],
\end{equation}
or
\begin{equation}
\all x\,  (F[\false,x] \lor F[\true,x]),
\end{equation}
respectively, where $p$ is a fresh nullary predicate and $F[G]$ denotes
$F[\varphi x]$ with all occurrences of $\varphi x$ replaced by $G$.  The
equivalence of the first component to~(\ref{eq-sol-row-one}) can indeed be
obtained by Ackermann's arity reduction, Lemma~\ref{lem-ackermann-switching},
which seems implicitly applied here by Behmann.
The following formula is then shown in~\mref{man:beh:26:beziehungen} as
intended solution of the second component:
\begin{equation}
\label{eq-sol-row-two}
\all x \ex p \all y \ex q\, F[p,q,x,y]
\land
\all y \ex q \all x \ex p\, F[p,q,x,y].
\end{equation}
For the second component, Behmann's claim is false.  On 16 August 1928
Ackermann writes to him \cref{corr:ab:1928:8:16}, requesting clarification
by giving the following example:
\begin{equation}
\begin{array}{lll}
\ex \varphi \all x \all y &
((\varphi x \lor \lnot \varphi y\lor \alpha xy) & \land\\
& \hparen (\varphi x \lor \beta x) & \land\\
& \hparen (\lnot \varphi x \lor \gamma x)),
\end{array}
\end{equation}
which can be written as instance of the second component of
(\ref{eq-sequence-sol}):
\begin{equation}
\begin{array}{lll}
\ex \varphi \ex \chi \all x \all y
&
((\varphi x \lor \lnot \chi y\lor \alpha xy) & \land\\
& \hparen (\varphi x \lor \beta x) & \land\\
& \hparen (\chi y \lor \gamma y) & \land\\
& \hparen (\varphi x \lor \chi y \lor x \neq y) & \land\\
& \hparen (\lnot \varphi x \lor \lnot \chi y \lor x \neq y)).
\end{array}
\end{equation}
Ackermann continues that he believes to have thought through for some simple
expressions of that kind that they can be replaced by no logically equivalent
expression that is constructed only from individual quantifiers, $\alpha$,
$\beta$, $\gamma$ and identity.

Behmann replies on 21 August 1928 \cref{corr:ba:1928:8:21} from the holiday
island Föhr at the German North Sea coast that some years ago he had achieved
a certain point in his investigations of the decision problem for relations
but had to leave it unattended since then. 
\label{page-beh-announce-to-ackermann} He announces to send his results to
Ackermann as soon as he returns to Halle.  Behmann continues that he became
aware only between the talk and the correction that his claim to have settled
the case of elimination for universal individual quantifiers was false. He had
informed \label{bieberbach} Bieberbach\footnote{Ludwig Bieberbach (1886--1982)
  was the editor of the respective number of the \de{Jahresbericht der
    Deutschen Mathematiker-Vereinigung}.} but decided to keep the statement
for historic accuracy in essence in the abstract, with the intention of
addressing the issue later in a publication.  Behmann notes that after many
void attempts to find a satisfying proof, he came to a specific example where
his conjectured solution was in fact weaker, that is, implied but not
equivalent to the given second-order formula.  He remarks that Ackermann's
example would be of great interest, since it might show that the schema of
progressive concept elimination would not be sufficient and a fundamentally
different way has to be searched.  Ackermann gave later in
\cite[Section~3]{ackermann:35} a proof that the considered elimination
problems on formulas with only universal individual quantifiers can not be
solved in general.

On 1 September 1928 \cref{corr:ab:1928:9:1} Ackermann thanks Behmann for his
informative reply. He notes that it took a load from his mind since he had
already struggled a lot with the elimination problem in the case where all
individual quantifiers are universal.\footnote{\de{Mir ist dadurch ein Stein
    vom Herzen gefallen, da ich mich mit dem Eliminationsproblem in dem Falle,
    daß alle Dingoperatoren allgemein sind, schon viel herumgequält hatte.}
\cref{corr:ab:1928:9:1}.}
He continues that the probability that an elimination is not always possible
had emerged for him at first from the following example:
\begin{equation}
\label{ex-ackermann-first}
\ex f \all x \all y\, (axy \lor ((fx \lor fy) \land
(\lnot fx \lor \lnot fy))).
\end{equation}
The following formulas are then obtained as consequences:
\begin{equation}
\label{ex-ackermann-first-sol}
\begin{array}{l}
\all x\, axx,\\
\all x \all y \all z\, (axy \lor ayz \lor axz),\\
\all x \all y \all z \all u \all v\, 
(axy \lor ayz \lor azu \lor auv \lor axv),\\
\ldots
\end{array}
\end{equation}
but one searches in vain for an expression that contains all of these as
consequences. Ackermann writes that he does not yet have an exact proof that the
elimination can not be performed, but hopes to find one within some weeks,
even if he expects it to be quite complicated.

The example~(\ref{ex-ackermann-first}) is also used later by Ackermann, as
Example~(20) on p.~410 in \cite{ackermann:35}, to illustrate his
resolution-based elimination method and discussed in his letter to Behmann from
October 1934 \cref{corr:ab:1934:10:29}.

\section{Involvement of Skolemization and Un-Skolemization}
\label{sec-corr-elim-1928b}
\label{sec-skolemization}
In the abstract \cite{beh:26:beziehungen} of the Düsseldorf talk Behmann goes
beyond the case of universal individual quantifiers to cases where universal
and existential quantifiers alternate, as for example in formulas of the form
\begin{equation}
\ex \varphi \all x \ex y\, F[\varphi x,\varphi y,x,y].
\end{equation}
The problem would be solved, if the order of $\all x$ and $\ex y$ could
be switched, which would then allow to move $\ex y$ further left, in front
of the predicate quantifier $\ex \varphi$.  Behmann remarks that this can
be achieved indeed by what today is called Skolemization, introduced in
\cite{beh:26:beziehungen} as a new view on a trick (\dename{Kunstgriff})
described by Schröder in \cite[p.~512ff]{schroeder:3}.\footnote{Second-order
  Skolemization allows to convert an existential individual quantifier in the
  scope of universal quantifiers to an existential function quantifier that is
  left of the universal quantifiers. The underlying equivalence is
\[\all x_1 \ldots x_n \ex y F[y]
  \; \equiv\; \ex f \all x_1 \ldots x_n F[fx_1\ldots x_n],\] where
  $f$ is a $n$-place function symbol that does not occur in $F[y]$ and
  $F[fx_1\ldots x_n]$ is obtained from $F[y]$ by replacing all free occurrences
  of $y$ with $fx_1\ldots x_n$.}  
Behmann writes that instead of predicate quantifiers, now quantifiers upon the
Skolem functions (\de{Belegungsoperatoren}) do appear, which can not be undone
in general -- at least with known means (what today is called
un-Skolemization).  He mentions that a peculiar extension of his
representation schema succeeds in undoing Skolemization, essentially by
allowing aside of the so far solely known and applied succession of
quantifiers $\all$ and $\ex$ more entangled linkages of them,
associated with specific meanings.\footnote{Such generalized forms of
  quantification play a key role in works of Jakko Hintikka, e.g.,
  \cite{hintikka:pmr}.} This seems to anticipate that un-Skolemization after
predicate elimination can in some cases only obtained with Henkin quantifiers
\cite{scan}. In the manuscript \mref{man:beh:26:beziehungen}, he adds that
these linkages concern in particular cycles, which, for example, do no longer
comply with the transitivity of sooner of later.  He concludes the talk
abstract with the remark that control of the decision problem for the
considered class of formulas then appears as equivalent to control of this
extended representation schema.

After reading the elaborate transcript of the Düsseldorf talk sent to him by
Behmann,\footnote{So far, that transcript could not be located -- see remarks on
  \cref{corr:ba:1928:9:29} in Sect.~\ref{sec-corr} and on
  \mref{man:beh:26:beziehungen:incomplete} in Sect.~\ref{sec-manuscripts}.  In
  1934 Behmann also gave a detailed account of his earlier work on the
  decision problem for relations in letter \cref{corr:ba:1934:10:22} and
  manuscript \mref{man:beh:34:k8}, which is discussed below in
  Sect.~\ref{sec-corr-elim-1934}.}
Ackermann expresses in his letter dated 1 November 1928
\cref{corr:ab:1928:11:1} doubts about Behmann's use of Skolemization and
un-Skolemization, conjecturing that it amounts to a re-expression of the
decision problem with functions, which can be used to encode predicates: That
$fx$ holds can be expressed as $\varphi x = 0$, where $\varphi$ is a function
with range $\{0, 1\}$ associated with $f$.\footnotemark

\footnotetext{\de{Glauben Sie aber nicht, daß das Problem, das sich als
    Schlußproblem Ihrer Untersuchung darstellt, ebenso schwierig ist wie das
    allgemeine Entscheidungsproblem? Die kompliziertesten Verkettungen von
    Operatoren, die dabei auftreten, kommen doch schließlich darauf hinaus,
    daß man statt der Prädikate und Relation Funktionen im Sinne der
    Mathematik enführt, d.h. eindeutige Zuordnungen von Individuen zu
    Individuen. Man kann dann aber gleich jedes Entscheidungsproblem in dieser
    Form geben, indem man ein Prädikat $F(x)$ ersetzt durch eine derartige
    Funktion $\varphi(x)$, wo $\varphi(x)$ entweder den Wert $0$ oder $1$ hat,
    und $\varphi(x) = 0$ mit dem Bestehen von $F(x)$ äquivalent ist.}
    \cref{corr:ab:1928:11:1}.}

In that letter, Ackermann also reports that he got notice that Löwenheim in
Berlin has exact proofs that elimination can not be performed on certain logic
expressions, and that it would be desirable that Löwenheim would publish these
because it would demonstrate that the existing way does not lead further
on.\footnote{Löwenheim's remarks in \cite[p.~336]{loewenheim:35} might refer
  to these results.}  After remarking that the investigation of a more modest
question seems to him most promising, Ackermann sketches an idea for a method
to verify valid universal formulas (see discussion of \cref{corr:ab:1928:11:1}
in Sect.~\ref{sec-corr}).

\section{Behmann's Representation of Ackermann's Resolution-Based Method}
\label{sec-corr-elim-1934}

Today, the two prevailing approaches to second-order quantifier elimination
are the so-called direct methods, based on Ackermann's Lemma
\cite{dls:early,dls}, and the resolution-based approach of the SCAN algorithm
\cite{scan}.  Both can be traced back to \cite{ackermann:35}, where the lemma
underlying the direct approach is defined and a variant of the -- seemingly
rediscovered \cite{nonnengart:elim:1999} -- resolution-based approach is
elaborated.  Further works related to Ackermann's resolution-based method have
been already mentioned in Sect.~\ref{sec-intro-part-polyadic}.

Aside \label{page-ackermann-bernays} of the correspondence with Behmann, a
published selection from the correspondence of Wilhelm Ackermann edited by his
son Hans-Richard Ackermann \cite{ackermann:briefwechsel} gives further hints
on the ``pre-history'' of Ackermann's important paper \cite{ackermann:35}: He
sent the manuscript in 1933 to Bernays (Letter from Bernays to Ackermann, 24
December 1933), who then recommended it to Hilbert for publication in
\dename{Mathematische Annalen}. Bernays sent six large pages with
remarks about the manuscript, regarding content as well as presentation, to
Ackermann, which he considered for the version submitted to
Blumenthal\footnote{Otto Blumenthal (1876--1944) was the editor of the
  \dename{Mathematische Annalen} responsible for Ackermann's paper.}  (Letter
from Ackermann to Bernays, 14 January 1934).  Accordingly, the submission date
given in the publication is 13 January 1934.

After receiving the offprint of Ackermann's paper, Behmann writes on
22~October 1934 to Ackermann \cref{corr:ba:1934:10:22}, congratulating him to
his success, noting that the work shows to him that the proper access was
quite hidden.  Behmann mentions that he perceived the existing research line,
which aimed at a generalization of the method of substitution
(\de{Verallgemeinerung des Einsetzungsverfahrens}) as unsatisfying. He admires
Ackermann's methodical-technical generality by posing only little requirements
on normalization (\de{Normierung}) of the given problem, however, as Behmann
remarks, this entails the disadvantage that it then seems not possible to
clearly overview in general the totality of the resolvents that are free from
the predicate to eliminate (\de{die Gesamtheit der Zählausdrücke II allgemein
  zu übersehen}) and to symbolize the resultant in a suitable transparent way.
Behmann proceeds with a technical presentation which relates Ackermann's
results to his own earlier work and suggests alternative more normalized
representations of the elimination resultants obtained by Ackermann's
resolution-based method.

In his reply of 29 October 1934 \cref{corr:ab:1934:10:29}, Ackermann
expresses his pleasure about Behmann's acknowledgment, in particular because
Behmann's work from 1922 \cite{beh:22} was, at its time, the impetus for his
investigation of the problem.\footnotemark\ He then discusses the issue of
getting a complete overview on the resultants (\name{vollkommene Übersicht
  über die Resultanten}), raised by Behmann.

\footnotetext{\de{Für Ihren Brief und die freundlichen Worte über meine Arbeit
    meinen besten Dank.  Ich freue mich über die Anerkennung um so mehr, als
    Ihre Arbeit in den Math. Ann.~86 seiner Zeit mir den Anstoss gegeben hat,
    mich mit dem Problem näher zu beschäftigen} \cref{corr:ab:1934:10:29}. In
  1922 Ackermann finished his basic studies \name{(Grundstudium)} at Göttingen
  \cite{goettingen:22,ackermann:briefwechsel}.}

Following the exchange with Ackermann, Behmann prepared a manuscript
\mref{man:beh:34:k8}, titled \de{Ein wichtiger Fortschritt im
  Entscheidungsproblem der Mathematischen Logik} (\name{An Important Progress
  in the Decision Problem of Mathematical Logic}, with subtitle
\name{(Ackermann Math. Annalen 110 S.390)}, dated 14~December 1934, where the
technical material in part overlaps with the letter \cref{corr:ba:1934:10:22}.
Here he gives a more detailed presentation of his suggestions to represent the
resultants obtained by Ackermann's method.

Behmann's manuscript \mref{man:beh:34:k8} begins with a comprehensive
introductory part.  After specifying notation and introducing the decision
problem, Behmann sketches two methods to decide propositional validity: the
method of truth-tables (\name{Verfahren der Einsetzungsproben}) and the method
of conversion to a conjunctive normal form (\name{Verfahren der konjunktiven
  Normalform}), which is valid if and only if each of its clauses contains a
literal and its complement. He then summarizes his method for second-order
quantifier elimination in relational monadic formulas with equality,
attributed by him to \de{Löwenheim, Skolem, Behmann}. Quantified predicates
are eliminated successively from the inside of the formula by equivalence
preserving transformations by a system of deterministic computation rules
(\de{durch äquivalente Umformung durch ein System zwangsläufiger
  Rechenvorschriften}).  Behmann speaks there also of ``logic algorithm''
(\de{logischer Algorithmus}).  The result expresses ``the sentence to test is
true (or false) for all domain cardinalities with exception (when appropriate)
of the finite number of cardinalities $m, n, \ldots$''. For sufficiently large
finite and for all infinite domains the value of the sentence is the same.

A second approach to decide a sentence is then sketched: It can be tested for
particular cardinalities of the individual domain, based on the fact that
there are theorems that allow to determine a suitable upper bound from its
symbolic structure.  The sentence is then repeatedly evaluated for each domain
cardinality, up to the previously determined bound. Behmann gives a short
example, but concludes that this method will never be applied in practice.

Behmann now turns to adding predicates with arities larger than one, in his
words, \name{variable relations between individuals} (\de{variable Beziehungen
  zwischen Individuen}). He discusses the possibility to generalize the method
of successive elimination of quantified predicates, which lead to success in
the monadic case. Duality justifies to consider just $\ex \varphi$ as
innermost predicate quantifier. Quantified predicates other than $\varphi$ are
then free in the argument formula and behave in the elimination process just
like unquantified predicates. The formula considered for the elimination
problem thus has the form
\begin{equation}
\ex \varphi\, F,
\end{equation}
where in $F$ is a first-order formula.
The argument formula $F$ can be assumed in prenex form. Behmann now restricts
his considerations to the special case where in the quantifier prefix no
existential quantifier is following a universal quantifier.  Since the
existential individual quantifiers can then be moved in front of the
existential predicate quantifier, this restriction amounts to requiring the
individual quantifier prefix of the argument of predicate quantification to be
just universal. The considered elimination problem can then be brought into
one of the following forms:
\begin{equation}
\label{eq-beh-prop-sequence}
\begin{array}{l}
\ex \varphi \all x\, F[\varphi x,x],\\
\ex \varphi \all x \all y \, F[\varphi x,\varphi y,x,y],\\
\ex \varphi \all x \all y \all z\, 
F[\varphi x,\varphi y,\varphi z,x,y,z],\\
\ldots
\end{array}
\end{equation}
where the $F[\ldots]$ are first-order and all occurrences of $\varphi$ have
the indicated variables as argument. In addition, the individual variables
themselves are listed in the square brackets, indicating that they might also
have further occurrences in the formula.  (This series is like
(\ref{eq-sequence-sol}), p.~\pageref{eq-sequence-sol}, except that there a
different predicate is used for each argument variable.)
It suffices to consider predicate quantification just upon \emph{unary}
predicates, because quantification upon predicates with larger arity can be
modeled by a new domain whose individuals are tuples of the original
individuals.
The solution of the first problem in the
sequence~(\ref{eq-beh-prop-sequence}) is
\begin{equation}
\all x\,  (F[\false,x] \lor F[\true,x]),
\end{equation}
where $F[G,x]$ denotes $F[\varphi x,x]$ with all occurrences of $\varphi x$
replaced by $G$.

Behmann proceeds to discusses just the second problem in the
sequence~(\ref{eq-beh-prop-sequence}), because in his view there is no
principal additional complication at the passage to the third and later
problems.  In analogy to \ref{rule-pullout-all} and \ref{rule-pullout-ex}
(\de{Umschreibungssatz}), the following second-order equivalences hold:
\begin{equation}
\label{rule-pullout-sol-draft}
F[p]\; \equiv\; \all q\, (\lnot (q \equi p) \lor F[q])\;
      \equiv\; \ex q\, ((q \equi p) \land F[q]).
\end{equation}
Behmann remarks that this holds for an \emph{extensional} property of
sentences ((\de{extensionale!) Eigenschaft von Aussagen}).
Equivalence~(\ref{rule-pullout-sol-draft}), given by Behmann, is an instance
of a version of Prop.~\ref{prop-def-elimintro} for nullary predicates.  The
equivalence actually applied by Behmann in the sequel can accordingly be
expressed as follows, for formulas~$G$ under preconditions analogously to
those in Prop.~\ref{prop-def-elimintro}:
\begin{equation}
\label{rule-pullout-sol}
F[G]\; \equiv\; \all p\, (\lnot (p \equi G) \lor F[p])\;
      \equiv\; \ex p\, ((p \equi G) \land F[p]).
\end{equation}
With (\ref{rule-pullout-sol}), the following equivalences can be justified:
\begin{equation}
\label{eq-binary-expand}
\begin{array}{r@{\hspace{1em}}lll}
(1.) && \ex \varphi \all x \all y\, F[\varphi x, \varphi y, x, y]\\
(2.) & \equiv & \ex \varphi \all x \all y \all p \all q\, 
    (\lnot (p \equi \varphi x) \lor
    \lnot (q \equi \varphi y) \lor F[p,q,x,y])\\
(3.) & \equiv & \ex \varphi \all x \all y\, 
      ((F[\false,\false,x,y] \lor \varphi x \lor \varphi y) & \land\\
&&     \indebe \hparen
       (F[\false,\true,x,y] \lor \varphi x \lor \lnot \varphi y) & \land\\
&&     \indebe \hparen
       (F[\true,\false,x,y] \lor \lnot \varphi x \lor \varphi y) & \land\\
&&     \indebe \hparen
       (F[\true,\true,x,y] \lor \lnot \varphi x \lor \lnot \varphi y)),
\end{array}
\end{equation}
where $F[G,H,x,y]$ denotes $F[\varphi x,\varphi y,x,y]$ with all occurrences
of $\varphi x$ replaced by $G$ and all occurrences of $\varphi y$ replaced
by $H$. Step~(3.) is obtained by expanding the Boolean quantifiers upon $p$
and $q$, considering that $\lnot (\false \equi \varphi x) \equiv \varphi x$
and $\lnot (\true \equi \varphi x) \equiv \lnot \varphi x$.
Behmann now introduces the following form as shorthand for formulas of the
form (3.) in (\ref{eq-binary-expand}):
\begin{equation}
\label{eq-ebe-problem}
\begin{array}{ll}
\ex \varphi \all x \all y\,
      ((fxy \lor \varphi x \lor \varphi y) & \land\\
    \indebe \hparen
      (gxy \lor \varphi x \lor \lnot \varphi y) & \land\\
    \indebe \hparen
      (hxy \lor \lnot \varphi x \lor \varphi y) & \land\\
     \indebe \hparen
      (kxy \lor \lnot \varphi x \lor \lnot \varphi y)),
\end{array}
\end{equation}
where $f,g,h,k$ can be arbitrary binary relations on the underlying domain of
individuals. In themselves, $f,g,h,k$ are not subjected to any restrictions,
however, certain constraints on them can be
enforced. Formula~(\ref{eq-ebe-problem}) is equivalent to
\begin{equation}
\label{eq-ebe-symm}
\begin{array}{ll}
 \ex \varphi (\all x \all y\,
      ((fxy \lor \varphi x \lor \varphi y) & \land\\
    \indebex \hparen
      (gxy \lor \varphi x \lor \lnot \varphi y) & \land\\
    \indebex \hparen
      (hxy \lor \lnot \varphi x \lor \varphi y) & \land\\
     \indebex \hparen
      (kxy \lor \lnot \varphi x \lor \lnot \varphi y)) & \land\\
 \hphantom{\ex \varphi} \hparen \all x \all y\,
      ((fyx \lor \varphi y \lor \varphi x) & \land\\
    \indebex \hparen
      (gyx \lor \varphi y \lor \lnot \varphi x) & \land\\
    \indebex \hparen
      (hyx \lor \lnot \varphi y \lor \varphi x) & \land\\
    \indebex \hparen
      (kyx \lor \lnot \varphi y \lor \lnot \varphi x)) & \land\\
 \hphantom{\ex \varphi} \hparen \all x \all y\,
      (x \neq y \lor \varphi x \lor \lnot \varphi y) & \land\\
 \hphantom{\ex \varphi} \hparen \all x \all y\,
      (x \neq y \lor \lnot \varphi x \lor \varphi y)),
\end{array}
\end{equation}
where last two lines are necessarily true, and thus
further equivalent to
\begin{equation}
\label{eq-ebe-symm-final}
\begin{array}{ll}
\ex \varphi \all x \all y\,
      (((fxy \land fyx) \lor \varphi x \lor \varphi y) & \land\\
\indebe \hparen ((gxy \land hyx \land x \neq y) 
                       \lor \varphi x \lor \lnot \varphi y) & \land\\
\indebe \hparen ((hxy \land gyx \land x \neq y) 
                        \lor \lnot \varphi x \lor \varphi y) & \land\\
\indebe \hparen ((kxy \land kyx) \lor \lnot \varphi x \lor \lnot \varphi y)).
\end{array}
\end{equation}
Behmann arguments that the form (\ref{eq-ebe-symm-final}) justifies to assume
that in a problem expression of the form (\ref{eq-ebe-problem})
\begin{itemize}
\item the relations $f$ and $k$ are symmetric, and
\item the relations $g$ and $h$ are irreflexive and inverse to each
  other.\footnote{Although intuitively convincing, it seems still from a
    modern point of view not straightforward how a system that just gets
    (\ref{eq-ebe-symm-final}) as input automatically utilizes these
    properties.}
\end{itemize}
In \label{page-note-drop-h} his reply of 20 October 1934
\cref{corr:ab:1934:10:29}, Ackermann notes that the component $(hxy \lor \lnot
\varphi x \lor \varphi y)$ in (\ref{eq-ebe-problem}) is not necessary, since it
can be united with the component $(gxy \lor \varphi x \lor \lnot \varphi y)$ by
switching variables.  Accordingly, Behmann indicates in his
manuscript~\mref{man:beh:34:k8} with pencil a second way of reading, where the
component $(hxy \lor \lnot \varphi x \lor \varphi y)$ is dropped and $g$ is just
constrained by irreflexivity.

In \mref{man:beh:34:k8}, Behmann concludes the section with remarking that
this was the point where he arrived several years before, but without seeing a
prospect to make decisive progress.  Before discussing Ackermann's result, he
recapitulates the intermittent research on the decision problem for relations.
He remarks that any reduction to the monadic case with equality is precluded,
since a sentence like \name{There exists an infinite number of individuals}
can for sure not be expressed by a sentence without predicates, such as, for
example $\ex x \ex y\, x \neq y$. Hence, Bernays, Ackermann,
Schönfinkel, and Schütte in Göttingen passed on to exploring with the much
more primitive method of testing for particular domain cardinalities.  The
issue was to derive from the structure of the given sentence a suitable bound
for the domain cardinality for which the test has to be made to get a result
that applies to arbitrarily large domain cardinalities. However, this is not
sufficient to manage a sentence that is as simple as \name{There exists an
  infinite number of individuals}, because tests can only be made for finite
domain cardinalities, and even there only in theory and not
practically. Behmann continues to summarize Schütte's result
\cite[p.~603]{schuette:1934}, which specifies for equality free sentences with
quantifier prefix $\all \varphi_1 \ldots \all \varphi_h \all x_1
\ldots \all x_k \ex y_1 \ex y_2 \all z_1 \ldots \all z_l$ a
number such that for all domain cardinalities that are larger or equal the
truth value is the same.  Behmann annotates that the result could have well 
been
extended to consider also the case with equality. He proceeds to note that, in
addition, Schütte has shown that for the case of more the two existential
variables in the prefix such a number is no longer determinable in general.
Behmann concludes this section with some acknowledging words on the effort and
persistence of those working in these research directions, in awareness that
the way would not lead to any practically usable results and that they would
get stuck irrevocably at a rather early point.\footnotemark

\footnotetext{\de{Es handelt sich bei den Bearbeitungen in diesen
    Forschungsrichtungen um sehr schwierige, aber mathematisch schöne und
    tiefe Untersuchungen.  Und man muß die darauf verwandte \emph{Mühe und
      Ausdauer} um so mehr bewundern, wenn man bedenkt, daß ja den Bearbeitern
    selbt nicht unbekannt sein konnte, daß sie auf diesem Wege überhaupt
    \emph{zu keinem irgendwie praktisch auswertbaren Ergebnis} gelangen
    konnten und obendrein schon \emph{an einem ziemlich frühen Punkte
      endgültig stecken bleiben} mußten.} \mref{man:beh:34:k8}, p.13f.}

Behmann now turns in \mref{man:beh:34:k8} to Ackermann's result in
\cite{ackermann:35}, noting that Ackermann's starting point was an example
that was as simple as possible but for which he could show that the known
means of representation failed. The axiom of induction (\de{Satz von der
  vollständigen Induktion}) can be expressed as the following second-order
sentence
\begin{equation}
\all \varphi\, ((\varphi 0 \land 
\all m \all n\, ((\varphi m \land m+1=n) \imp \varphi n)) \imp
\all r\, \varphi r),
\end{equation}
where domain of the individual quantifiers is the set of natural numbers.  The
quantifier $\all r$ can be moved in front of $\all \varphi$.  If the
constant $0$ and free variable $r$ are written as $a$ and $b$, respectively,
and $m+n=1$ is abbreviated as $fmn$, then the elimination problem can be
expressed as the first line in the following equivalences:
\begin{equation}
\label{eq-nat-induction}
\begin{array}{rl}
& \all \varphi\, ((\varphi a \land 
\all x \all y\, ((\varphi x \land fxy) \imp \varphi y)) \imp
\varphi b)\\
\equiv & \all \varphi\, (\lnot \varphi a \lor
\ex x \ex y\, (\varphi x \land fxy \land \lnot \varphi y) \lor
\varphi b)\\
\equiv &
\all \varphi\, (\ex x\, (x=a \land \lnot \varphi x) \lor
\ex x \ex y\, (\varphi x \land fxy \land \lnot \varphi y) \lor
\ex y\, (y=b \land \varphi y))\\
\equiv &
\all \varphi \ex x \ex y\, ((y = a \land \lnot \varphi x) \lor
(\varphi x \land fxy \land \lnot \varphi y) \lor
(y=b \land \varphi y)).
\end{array}
\end{equation}
The last two steps in (\ref{eq-nat-induction}) are obtained by
\ref{rule-pullout-ex} and \ref{rule-ex-out-or}, respectively.  The dual or the
negation of the last step matches the problem
specification~(\ref{eq-ebe-problem}).

As Behmann recapitulates, Ackermann has \emph{proven} in \cite{ackermann:35}
that a predicate free formulation of the induction axiom is not possible, but
Ackermann also says that in a certain new sense the elimination can
nevertheless be performed. The induction axiom (in the form of the first line
of (\ref{eq-nat-induction})) expresses that one of the following sentences
holds:
\begin{equation}
\label{eq-form-unduction-orseq}
\begin{array}{l}
a=b,\\
fab,\\
\ex x_1\, (fax_1 \land fx_1b),\\
\ex x_1 \ex x_2\, (fax_1 \land fx_1x_2 \land fx_2b),\\
\ldots
\end{array}
\end{equation}
That is, the induction axiom is equivalent to the disjunction of the
infinite number of these sentences.  
With the notation $f^0xy \equi x=y$, $f^1xy
\equi fxy$, $f^2xy \equi \ex z_1\, (fxz_1 \land fz_1y)$, etc., the resultant
of the elimination can be written as
\begin{equation}
\ex n\, f^nab,
\end{equation}
where the domain of variable $n$ is the set of natural numbers.  The
originally given sentence has then be brought into a form that is free from
$\varphi$, contains on the one hand the numeric variable $n$ as exponent, but,
on the other hand, is very transparent.

Behmann proceeds that the question is now, whether this is not also possible
in the general case, that is, for 
the formulas in (\ref{eq-beh-prop-sequence}).  He says
that Ackermann really succeeded there, giving a schema that generally
characterizes the totality of the partial resultants from which the total
resultant is disjunctively (or in the case $\ex \varphi \ldots$
conjunctively) composed.  However, as Behmann remarks, on the basis of a
\emph{recursion} that is not quite easy to see through and thus lets the
result appear somewhat intransparent.  Refraining from summarizing Ackermann's
result in its original form, Behmann develops a presentation in a normalized
form, as continuation of his own earlier work, starting from the normalized
form~(\ref{eq-ebe-problem}), considered dually as:
\begin{equation}
\label{eq-ebe-problem-disj}
\begin{array}{ll}
\all \varphi \ex x \ex y\,
      ((fxy \land \varphi x \land \varphi y) & \lor\\
    \indebe \hparen
      (gxy \land \varphi x \land \lnot \varphi y) & \lor\\
    \indebe \hparen
      (hxy \land \lnot \varphi x \land \varphi y) & \lor\\
     \indebe \hparen
      (kxy \land \lnot \varphi x \land \lnot \varphi y)),
\end{array}
\end{equation}
where $fxy$, $gxy$, $hxy$ and $kxy$ abbreviate
$F[\true,\true,x,y]$, $F[\true,\false,x,y]$, $F[\false,\true,x,y]$ and
$F[\false,\false,x,y]$, respectively.
The properties that can be assumed are symmetry of $f$ and $k$ as well as that
$g$ and $h$ are inverse to each other and are reflexive (in the original:
\de{totalreflexiv}), thus containing equality as sub-relation.
As before (see p.~\pageref{page-note-drop-h}), pencil annotations in
\mref{man:beh:34:k8} indicate a second way of reading, where the component $(hxy
\lor \lnot \varphi x \lor \varphi y)$ is dropped.
In \mref{man:beh:34:k8} Behmann claims that his representation has for the
shown case of two individual variables the same (\de{sachliche}) generality as
Ackermann's, who -- Behmann seems to refer here to Ackermann's letter
\cref{corr:ab:1934:10:29} -- intentionally refrains from assuming that the
given formula is normalized in the described way. He views the difference as
only technical, where the preparatory work done by the normalization allows to
express the resultant in a discernible more transparent way.

Following Behmann, the application of Ackermann's general recursive definition
(\de{Rekursionsvorschrift}) to (\ref{eq-ebe-problem-disj}) yields a resultant
of the form:
\begin{equation}
\label{eq-beh-resolvents}
\begin{array}{rrlll}
& \ex w \ex x \ex y \ex z\, 
 & (gwx \land fxy \land hyz \land kzw\; \land\\
&& \hparen \lnot \bigwedge_{x_1 \in \{w,x\},\; x_2 \in \{y,z\}} x_1 \neq x_2)\\
\lor 
& \ex s \ex t \ex u \ex v 
   \ex w \ex x \ex y \ex z\,
 &  (gst \land ftu \land huv \land kvw\; \land\\
&& \hparen gwx \land fxy \land hyz \land kzw\; \land\\
&& \hparen \lnot \bigwedge_{x_1 \in \{s,t,w,x\},\; x_2 \in \{u,v,y,z\}} x_1 \neq
   x_2)\\
\lor & \multicolumn{2}{c}{\ldots}
\end{array}
\end{equation}
In \mref{man:beh:34:k8}, a second variant of~(\ref{eq-beh-resolvents})
without $h$ is indicated in pencil: $h$ is replaced there with $\tilde{g}$,
the inverse of $g$.\footnote{In the manuscript~\mref{man:beh:34:k8}, but not
  in the letter~\cref{corr:ba:1934:10:22}, the variables $z$ are universally
  quantified. Obviously a mistake in writing.}

Before we continue with following Behmann's presentation, we take a closer
look at his representation (\ref{eq-beh-resolvents}) of the resultant.  For
$\bigwedge_{x_1 \in \{w,x\},\; x_2 \in \{y,z\}} x_1 \neq x_2$ Behmann has the
dedicated notation $(wx,yz)$.  Expanding disequalities in the first disjunct
of (\ref{eq-beh-resolvents}) shows that it stands for the disjunction of the
following four (conjunctive) clauses:
\begin{equation}
\begin{array}{l@{\hspace{0.5em}}rll}
& \ex w \ex x \ex z\, & 
  (gwx \land fxw \land hwz \land kzw)\\
\lor & \ex w \ex x \ex y\, & 
 (gwx \land fxy \land hyw \land kww)\\
\lor & \ex w \ex x \ex z\, & 
 (gwx \land fxx \land hxz \land kzw)\\
\lor & \ex w \ex x \ex y\, & 
 (gwx \land fxy \land hyx \land kxw).
\end{array}
\end{equation}
It is easy to see that some (conjunctive) clauses that do not contain
$\varphi$ can be obtained by (dual) resolution from (\ref{eq-ebe-problem-disj}),
but are not subsumed by any disjunct of (\ref{eq-beh-resolvents}). For example
\begin{equation}
\label{eq-behmann-how-represented}
\ex x\, (fxx \land kxx).
\end{equation}
Thus, it seems that reflexivity of $g$ and $h$ needs to be considered such
that also disjuncts obtained by removing one or more atoms with predicates $g$
and $h$ while unifying the left and right argument of each removed atom are
considered as implicitly represented by (\ref{eq-beh-resolvents}).\footnotemark

\footnotetext{Ackermann \cite{ackermann:35} 
\label{footnote-ackermann-21} provides a precise
  characterization of the clauses that can be obtained by resolution for his
  Example~(21), a generalization of his Example~(20), shown above as
  (\ref{ex-ackermann-first}). Example~(21) can be written dually as
  (\ref{eq-ebe-problem-disj}) with the two disjuncts containing $g$ and $h$,
  respectively, dropped:
\begin{equation}
\label{eq-ebe-problem-ackermann-21}
\begin{array}{ll}
\all \varphi \ex x \ex y\,
      ((fxy \land \varphi x \land \varphi y) & \lor\\
     \indebe \hparen
      (kxy \land \lnot \varphi x \land \lnot \varphi y)).
\end{array}
\end{equation}
Thus, all (conjunctive) clauses that can be obtained by (dual) resolution from
(\ref{eq-ebe-problem-ackermann-21}) must also be obtainable from
(\ref{eq-ebe-problem-disj}). That is, all clauses satisfying Ackermann's
characterization for his Example~(21) should also be (in dual form)
represented in Behmann's presentation of the resultant.  The (conjunctive)
clause~(\ref{eq-behmann-how-represented}) is an example.
}

Behmann gives a second characterization of the resultant
(\ref{eq-beh-resolvents}) as graph:
\enq{The resultant now means the following: If we consider the 4 relations
$f,g,h,k$ as arrow schemas (with different colors) inscribed into the same
figure, then, according to the resultant, there is at least once a closed
chain of arrows such that a $g$-arrow, an $f$-arrow, an $h$-arrow and a
$k$-arrow follow once or a finite number of times cyclically in sequence,
where, however, at least once in the chain a (starting or ending) point of a
$g$-arrow coincides with a point of an $h$-arrow. I.e., there is at least one
closed 8-course because of the cycle $g,f,h,k$ with one or several rounds.}

\enq{\emph{If the symmetry conditions are omitted}, that is, $f,g,h,k$ are not
  submitted to any constraints, then this means for the chain of arrows
  stipulated by the resultant that the arrows from $f,g,h,k$ may independently
  from each other be represented by $\tilde{f},\tilde{h},\tilde{g},\tilde{k}$,
  and, moreover, the arrows from $g$ and $h$ may (again independently) be
  represented by identity (that is, by circular arrows), i.e. omitted in the
  chain.}\footnotemark

\footnotetext{%
\de{Die Resultante besagt nun folgendes: Denken wir uns die 4
Beziehungen $f,g,h,k$ als Pfeilschemata (mit verschiedenen
Farben) in dieselbe Figur eingetragen, so gibt es gemäß der
Resultante mindestens einmal eine geschlossene Kette von
Pfeilen derart, daß ein $g$-Pfeil, ein $f$-Pfeil, ein
$h$-Pfeil und ein $k$-Pfeil einmal oder endlich oft
hintereinander zyklisch folgen, wobei aber mindestens einmal
in der Kette ein (Anfangs- oder End-)Punkt eines $g$-Pfeils
mit einem Punkt eines $h$-Pfeils zusammenfällt. D.h. es
gibt mindestens eine geschlossene 8-Bahn auf Grund des
Zyklus $g,f,h,k$ mit einer oder mehreren Runden.}

\de{\emph{Verzichtet man auf die Symmetriebedingungen}, 
  unterwirft man also $f,g,h,k$
  keiner Beschränkung, so besagt dies für die durch die Resultante geforderte
  Pfeilkette, daß die Pfeile aus $f,g,h,k$ unabhängig von einander durch
  $\tilde{f},\tilde{h},\tilde{g},\tilde{k}$ vertreten sein dürfen und
  obendrein die Pfeile aus $g$ und $h$ (wiederum unabhängig) durch die
  Identität (also durch Rückkehrpfeile) vertreten sein, d.h. in der Kette
  wegbleiben, dürfen.}  

From \mref{man:beh:34:k8}, p.~18. Similarly in the letter~\cref{corr:ba:1934:10:22}.
The alternate reading of $h$ as inverse of $g$ is indicated in pencil at two
places in the first paragraph: \de{Beziehungen $f,g,k$} and \de{$g$-Pfeil, ein
  $f$-Pfeil, ein $\tilde{g}$-Pfeil und ein $k$-Pfeil}.  In the second
paragraph, \de{d.h. in der Kette wegbleiben} has been added in pencil.  In the
version from the letter \cref{corr:ba:1934:10:22}, the corresponding text ends
with \de{durch die Identität vertreten sein, also ganz ausfallen dürfen.}}

In the letter~\cref{corr:ba:1934:10:22} to Ackermann, Behmann continues to
explain his description and shows for Example~(15) from
\cite{ackermann:35} the resultant in his representation.\footnotemark\ In our
notation this is:
\begin{equation}
\begin{array}{lll}
\all x
  \all y
  \all z
  \all u
  \all v
  \all p
&  (F[\false,\false,x,y] & \lor\\
&  \hparen F[\false,\true,z,y] & \lor\\
&  \hparen F[\true,\true,z,v)] & \lor\\
&  \hparen F[\false,\true,v,p] & \lor\\
&  \hparen F[\false,\false,p,v] & \lor\\
&  \hparen F[\false,\true,u,v] & \lor\\
&  \hparen F[\true,\true,x,u]),
\end{array}
\end{equation}
where $F[G,H,x_1,x_2]$ denotes $F[\varphi x,\varphi y,x,y]$ with all
occurrences of $\varphi x$ replaced by $G$, of $\varphi y$ by $H$, of $x$ by
$x_1$ and of $y$ by $x_2$.

\footnotetext{%
\de{Schreibe ich für $F_{\behtrue\behtrue xy}$ kurz
$(\behtrue\behtrue)$ usf., so besagt die erste Bedingung,
daß die Pfeile stets so aneinander zu fügen sind, daß für
irgend zwei benachbarte Symbole $(pq)$ und $(rs)$ die
Aussagewerte $p$ $q$ und $r$ entgegengesetzt sind, und die
zweite Bedingung, daß die Kette derart zu einer $8$
zusammengebogen ist, daß jeder der beiden Bogen der $8$ für
sich genommen von zwei gleichartigen Symbolen begrenzt
ist. Man sieht deutlich, wie durch die schärfere Normierung
der Aufgabe die Durchsichtigkeit und anschauliche
Erfaßbarkeit der Resultante wesentlich erhöht wird. Durch
das Mitführen der Aussagewertzeichen $\behtrue$ und
$\behfalse$ wird der Sinn des Hoch- und Tiefstellens der
Argumente aufgeklärt und dieses zugleich entbehrlich
gemacht. So lautet Ihr Beispiel~(15):
\[xyzuvp(\behfalse\behfalse xy)
        (\behfalse\behtrue zy)
        (\behtrue\behtrue zv)
        (\behfalse\behtrue vp)
        (\behfalse\behfalse pv)
        (\behfalse\behtrue uv)
        (\behtrue\behtrue xu),\]
schematisch: 
$(\behfalse\behfalse)
(\behfalse\behtrue)
(\behtrue\behtrue)?(\behfalse\behtrue)
(\behfalse\behfalse)?(\behfalse\behtrue)
(\behtrue\behtrue)$,
wo die Unterstreichung den Ersatz von
$h$ durch $\tilde{g}$, d.h. von $F_{\behtrue\behfalse xy}$
durch $F_{\behfalse\behtrue yx}$ und die Zeichen $?$ die zu
identifizierenden Variablen andeuten.}
From \cref{corr:ba:1934:10:22}, p.~2.}

In the manuscript \mref{man:beh:34:k8}, Behmann suggest that the cycle $g,f,h,k$
can be better understood if it is already considered in the underlying problem
specification, by writing (\ref{eq-ebe-problem-disj}) as:
\begin{equation}
\begin{array}{ll}
\all \varphi \ex x \ex y\,
      ((gxy \land \varphi x \land \lnot \varphi y) & \lor\\
    \indebe \hparen
      (fxy \land \varphi x \land \varphi y) & \lor\\
    \indebe \hparen
      (hxy \land \lnot \varphi x \land \varphi y) & \lor\\
     \indebe \hparen
      (kxy \land \lnot \varphi x \land \lnot \varphi y)),
\end{array}
\end{equation}
It can then be seen that in the cycle
   \[(\varphi x, \lnot \varphi y),
     (\varphi x, \varphi y), (\lnot \varphi x, \varphi y), (\lnot
   \varphi x, \lnot \varphi y),\] 
adjacent components of neighboring pairs have complementary signs.

Behmann recalls in \mref{man:beh:34:k8} that his presentation can not be found
in that form in Ackermann's paper \cite{ackermann:35}, but -- seemingly
referring to letters \cref{corr:ba:1934:10:22} and \cref{corr:ab:1934:10:29}
-- that he was told by Ackermann upon request that he had actually been aware
of it, but did not show it explicitly just for the reason that he did not
succeed in finding an analogous representation for the case of three or more
individual quantifiers.

In his letter \cref{corr:ab:1934:10:29}, Ackermann writes that he was well
aware that for the case of two universal quantifiers a complete overview on
the resultants can be obtained and that this result indeed formed the basis of
his work. The first outcome that he found was the resultant of
\begin{equation}
\label{eq-ackermann-corr-1}
\ex f \all x \all y\, ((axy \lor fx \lor fy) \land (dxy
\lor \lnot fx \lor \lnot fy))\footnotemark
\end{equation}
in the clear form (\de{in der übersichtlichen Form}).  (Actually this is
Example~(21) from \cite{ackermann:35}, shown -- up to different predicate
names -- already as (\ref{eq-ebe-problem-ackermann-21}) in
footnote~\ref{footnote-ackermann-21} on
p.~\pageref{footnote-ackermann-21}.)
Ackermann then found that the resultant of
\begin{equation}
\label{eq-ackermann-corr-2}
\begin{array}{lll}
\ex f \all x \all y
& ((axy \lor fx \lor fy) & \land\\ 
& \hparen (bxy \lor fx \lor \lnot fy) & \land\\
& \hparen (dxy \lor \lnot fx \lor \lnot fy))
\end{array}
\end{equation}
is not much different from the resultant of (\ref{eq-ackermann-corr-1}); the
cycles and chains have just to be extended in a suitable way by inserting $b$
and $\tilde{b}$. The general case with two variables can be brought into form
(\ref{eq-ackermann-corr-2}), which has also been discussed in Behmann's letter
-- as Ackermann remarks, the conjunct $(cxy \lor \lnot fx \lor fy)$ considered
in addition by Behmann can be united with the middle component (see also
p.~\pageref{page-note-drop-h}).  Passing on to more universal quantifiers,
Ackermann found that a corresponding normalization (\de{Normierung}) can be
easily achieved.  For example, for three universal quantifiers it is
\begin{equation}
\label{eq-ackermann-corr-3}
\begin{array}{lll}
\ex f \all x \all y \all z
& ((axyz \lor fx \lor fy \lor fz) & \land\\ 
& \hparen (bxyz \lor fx \lor fy \lor \lnot fz) & \land\\
& \hparen (cxyz \lor fx \lor \lnot fy \lor \lnot fz) & \land\\
& \hparen (dxyz \lor \lnot fx \lor \lnot fy \lor \lnot fz)),
\end{array}
\end{equation}
where certain symmetry conditions hold for $a,b,c,d$.
However, this normalization did not help Ackermann in getting an overview on
the resultants at three and more universal quantifiers, such that he had
dropped it again and had to confine himself in the general case to the
recursion method for forming the resultant.
To prevent that the gained overview for two universal quantifiers would go
completely by the board, Ackermann included Example~(20)
\cite[p.~410]{ackermann:35}, which admits a simple interpretation of
meaning. At the discussion of the example, the clear resultant
for~(\ref{eq-ackermann-corr-1}) is given \cite[p.~411]{ackermann:35}.

It seems that the apparently distinguished features of elimination of an
existential predicate quantifier upon a first-order formula with \emph{at most
  two} universal individual quantifiers have so far not got attention beyond
the mentioned correspondence and examples in \cite{ackermann:35}, and thus
might be of interest for further research.

\footnotetext{We use here the predicate names from Ackermann's letter, which
  differ from that used by Behmann, but the correspondence is easy to see.
  Symbolic notation and capitalization are coherent with the rest of this
  paper, different from that used by Ackermann.}

Manuscript \mref{man:beh:34:k8} concludes with relating
Ackermann's result to the general decision problem: \enq{Actually, we just can
  say that \emph{under favorable conditions}, that is, in so far as the
  assumed preconditions on individual variables are satisfied, \emph{the
    innermost eliminations} can be performed, but not yet the further ones,
  since we do have the first obtained resultants \emph{not in closed form}.
  Thus, there are two issues to solve: 1. the liberation from the
  \emph{condition for the individual quantifiers} and 2. the representation of
    the respective resultant in a \emph{closed symbolic representation}
    that is suitable for further eliminations.}
Behmann suggests that the first issue can be addressed by Skolemization, with
the difficulties already described in Sect.~\ref{sec-skolemization}.  On the
second issue, he remarks that in the case where all predicate quantifiers are
universal (or existential) and at the front of the sentence, the problem can
be considered as question of validity of a sentence that is free of predicate
quantifiers.

\part{Register of Publications by Behmann and Documents in his Bequest that
are Relevant to Second-Order Quantifier Elimination}
\label{part-app-sources}

\section{Introduction to Part~\ref{part-app-sources}}

This part provides commented lists of Behmann's publications and unpublished
material from his bequest \cite{beh:nl}, such as manuscripts and
correspondences, as far as they are immediately relevant to the problem of
second-order quantifier elimination.

Accounts of contributions by Behmann in a historic context have been given in
Church's book \name{Introduction to Mathematical Logic} \cite{church:book},
papers by Mancosu \cite{mancosu:behmann:99} and Zach
\cite{zach:99:completeness}, a more recent presentation by Zach
\cite{zach:2007} and the scholarly edition \cite{mancosu:zach:2015} of
Behmann's 1921 talk on the \de{Entscheidungsproblem}.  In Craig's paper on the
history on elimination problems in logic \cite{craig:2008}, Behmann's work
\cite{beh:22} is briefly mentioned.

A recent English biography of Behmann can be found in
\cite{mancosu:zach:2015}.  The most comprehensive publication of biographic
material is in a dedicated chapter in \cite[pp.~105--170]{schenk:2002} (in
German), which also contains a selection of texts by Behmann.\footnotemark\ 
He is positioned between Cantor and Husserl in this
compilation about philosophical thinking in Halle (Saale), where he was
professor of mathematics from 1925 to 1945.
The investigation \cite{eberle:2002} (in German) of the
\de{Martin-Luther-Universität Halle-Wittenberg} during Nazism includes a short
biography of Behmann, but no further references to him.  Behmann is
peripherally mentioned in the treatise about logics in Nazi Germany in the
context of Gentzen's life \cite{menzler:genzen} (English translation:
\cite{menzler:genzen:english}). Behmann's personal file \cite{beh:auh} at
\de{Martin-Luther-Universität Halle-Wittenberg} is preserved in the
university's archive in Halle. Excerpts from the personal file, which include
an 8-page typescript dated 20 October 1945 by Behmann about his activities in
the NSDAP (\dename{Zusammenstellung der für die Beurteilung wesentlichen
  Tatsachen}), are published in \cite{schenk:2002}.

Behmann's scientific bequest \cite{beh:nl} is located in the department for
autographs of \dename{Staatsbibliothek zu Berlin}.  It has been registered by
Peter Bernhard and Christian Thiel \cite{nlv}, after a first registration by
Gerrit Haas and Elke Stemmler \cite{nlv:haas:stemmler}.  So far, it seems that
there are only three publications of material from the bequest: The
aforementioned talk on the \de{Entscheidungsproblem} from 1921
\cite{mancosu:zach:2015}, the correspondence with Gödel (from 1931) with
English translation and an introduction by Charles Parsons
\cite[pp.~12--39]{beh:goedel:correspondence}, and, from \cite{beh:auh}, a
report about Behmann's participation at the 1937 Congress for the Unity of
Science in Paris \cite[p.~105--108]{schenk:2002}.

\footnotetext{There are some errata in \cite{schenk:2002}: P.~109: The birth
  name of Behmann's mother is \name{Knübel} (not \name{Kübel}). P.~111 and
  123: The quoted letter by Runge is dated 28 February 1926 (not November).}

\section{Publications by Behmann Related to Second-Order Quantifier Elimination}
\label{sec-publications}

This compilation lists the publications by Behmann with immediate relevance to
second-order quantifier elimination. These are the Habilitation thesis
\cite{beh:22} with some related documentary publications and abstracts
surrounding it, as well as the abstract \cite{beh:26:beziehungen} of Behmann's
1926 talk on the decision problem for relations. In addition, a later work
\cite{beh:50:aufloesungs:phil:1,beh:51:aufloesungs:phil:2} on the solution
problem (\dename{Auflösungsproblem}), where techniques from \cite{beh:22} are
applied, is listed.  Also two further major publications
\cite{schoenfinkel:24:bausteine,beh:27} by Behmann, or with involvement of
Behmann, respectively, are included, because they fall in the time span
between 1921 and 1927, although they are not directly concerned with
elimination or the decision problem.

As already noted on p.~\pageref{page-paradoxes}, Behmann's extensive
investigations of the paradoxes are not considered here, although one may
speculate whether his underlying idea that paradoxes emerge from unjustified
elimination of shorthands is somehow related to elimination of second-order
quantifiers.

\begin{enumerate}[leftmargin=1.8cm,itemsep=5pt,listparindent=\parindent]

\item[\cite{beh:jahresbericht:30:2:47}] (1921) The annual report of the
  \dename{Deutsche Mathematiker-Vereinigung} shows in a listing of the
  activities of the \dename{Mathematische Gesellschaft in Göttingen} that
  Behmann gave on 10 May 1921 at the \dename{Mathematische Gesellschaft} a
  talk \de{Das Entscheidungsproblem der mathematischen Logik}, which can be
  translated as \name{The Decision Problem of Mathematical Logic}. As noted in
  \cite[p.~363]{zach:99:completeness}, this seems the first documented use of
  \de{Entscheidungsproblem}.  The manuscript~\mref{man:beh:21:ms:k9:37}
  \de{Entscheidungsproblem und Algebra der Logik}, published recently in
  \cite{mancosu:zach:2015}, bears the date of the talk and thus seems to
  underlie it.

\item[\cite{beh:22}] (1922, received by the journal on 16 July 1921) Heinrich
  Behmann: \de{Beiträge zur Algebra der Logik, insbesondere zum
    Entscheidungsproblem}. This is the published version of the thesis for
  Behmann's Habilitation at Göttingen on 9 July 1921.  Some corrections of
  printing errors have been published as \cite{beh:22:corrections}.  A carbon
  copy of the thesis typescript with handwritten corrections that have been
  considered for the printed version is preserved as \mref{man:beh:21:carbon}.
  Document~\mref{man:beh:22:offprint} is an author's offprint of the published
  version with later handwritten corrections.

\item[\cite{beh:22:corrections}] (1922) \de{Druckfehlerberichtigung zu dem
  Aufsatz von H. Behmann \glqq Beiträge zur Algebra der Logik, insbesondere
  zum Entscheidungsproblem\grqq\ in Band 96, S.163--239}. 1922.  Corrections
  of nine wrongly printed symbols in \cite{beh:22}.

\item[\cite{beh:jahresbericht:32:2:66f}] (1923) Heinrich Behmann: \de{Algebra
  der Logik und Entscheidungsproblem}.  Abstract of a talk given on 21
  September 1923 at the \de{Jahresversammlung der Deutschen
    Mathematiker-Vereinigung} in Marburg a. d. Lahn.  A manuscript for the
  abstract is preserved as \mref{man:beh:jahresbericht:32:2:66f}.  The talk
  seems in essence a summary of the results published in \cite{beh:22}.  In
  the subsequent discussion, L.~E.~J.\ Brouwer (who gave a talk in the same
  session) expressed concerns about the phrase \de{usw.} (\name{and so on})
  and the concept of finite number within the presented theory. These concerns
  have been rebutted by remarks that the debated concepts do not play a role
  on their own in the considered statements but can there be reduced
  unobjectionable to basic concepts and that the theory about these statements
  is just a theory within mathematics, not intended as foundation of
  mathematics.

\item[\cite{schoenfinkel:24:bausteine}] (1924, received by the journal on 15
  March 1924) Moses Schönfinkel: \de{Über die Bausteine der mathematischen
    Logik}. In this paper logic combinators have been introduced. As noted in
  the paper, it is based on a talk by Schönfinkel given in 1920 at the
  \de{Mathematische Gesellschaft in Göttingen} and has been prepared for
  publication and supplemented by Behmann.  William Craig asks Behmann in a
  letter dated 16~March 1952, upon advice of Haskell Curry, about an error in
  the supplementary part by Behmann. In his reply dated 7~April 1952, Behmann
  remarks that he did the preparation for publication on behalf of Hilbert,
  without a particularly strong interest and that he considers the new
  direction that emerged from that work as too formalistic. He was aware of
  the error, which had been pointed out to him in 1928 by Alfred Boskovitz
  \cite[Kasten~1, I~13 and I~14]{beh:nl}.\footnotemark\ See also \cite[p.~8,
    p.~184]{curry:combinatoric:1}.

  \footnotetext{Alfred Boskovitz \label{footnote-boskovitz} was a student in
    Göttingen when Behmann was lecturer. In the mid 1920s Boskovitz moved back
    to Budapest, his home town. He carefully reviewed and extended the
    \name{Prinicpia Mathematica}, where he is mentioned in the second edition
    in an acknowledging footnote together with Behmann \cite[p.~xiii]{pm}.
    Respective manuscripts by Boskovitz are in \cite[Kasten~1,
      I~08]{beh:nl}. Behmann uses the opportunity of Curry's request for
    information about Boskovitz \cite[Kasten~1, I~08]{beh:nl} on 12~July 1957
    to draw attention to Boskovitz's work and sends to Curry on 13~August 1957
    \cite[Kasten~1, I~08]{beh:nl} a characterization of Boskovitz as well as
    typed transcripts of a selection of his letters until 1937. Those letters
    and transcripts are now in \cite[Kasten~1, I~08]{beh:nl}.  On 21 June 1936
    Boskovitz writes to Behmann that he expects danger of life and asks
    Behmann to store his mathematical works.  In his letter to Curry, Behmann
    states that he did not have heard from Boskovitz since 1937 and gives to
    Curry the address in Budapest from that time.  However, in \cite[Kasten~1,
      I~08]{beh:nl} there is also a short letter from 1939 by Boskovitz as
    well as a postcard dated 11 November 1942 with a different address in
    Budapest.}
   
\item[\cite{beh:26:beziehungen}] (1927). Heinrich Behmann:
  \de{Entscheidungsproblem und Logik der Beziehungen}.  Abstract of a talk
  given on 23 September 1926 at the \de{Jahresversammlung der Deutschen
    Mathematiker-Vereinigung} in Düsseldorf.  This abstract motivated
  Ackermann to write to Behmann in 1928 \cref{corr:ab:1928:8:16}, which
  initiated their correspondence on elimination for relations.  A draft
  manuscript is preserved as \mref{man:beh:26:beziehungen}.

\item[\cite{beh:27}] (1927) Heinrich Behmann: \dename{Mathematik und Logik}.
  A small introductory textbook on mathematical logic, showing Gottlob Frege
  on the cover. The twofold connection between mathematics and logics is
  emphasized: Mathematical representation and notation allows to make logic
  possible as an ``exact'' science, like mathematics, and, on the other hand,
  the insight that, conversely, pure mathematics is nothing else than logic in
  disguise.

\item[\mbox{\begin{minipage}[b]{1.1cm}
\hfill\cite{beh:50:aufloesungs:phil:1}\\
\hspace*{\fill}\cite{beh:51:aufloesungs:phil:2}\\[-5.6ex]
\end{minipage}}] 
(1950/51). Heinrich Behmann: \dename{Das Auflösungsproblem in der
  Klassenlogik}.  Behmann develops an approach to solve the ``logic solution
  problem'' (\dename{logisches Auflösungsproblem}).  See
  \cite{cw:boolean:frocos,cw:boolean:report} for a modern technical account on
  the solution problem on the basis of predicate logic.  A comprehensive
  presentation in the context of modern Boolean algebra that includes material
  by Schröder and Löwenheim as well as historical notes is provided in
  \cite{rudeanu:74}.  In a somewhat different phrasing but similar in spirit
  than as presented by Behmann, the problem can be described as follows: Given
  is a formula where the set of predicates occurring in it is partitioned into
  two disjoint subsets. The objective of the solution problem is to find a
  representation of the relation of predicate valuations from the first subset
  to those valuations from the second subset for which the formula is true. In
  addition, conditions on the predicates from the first subset are sought,
  under which solution valuations for those from the second subset exist at
  all.

  \lskip
  Behmann relates the solution problem to the elimination problem, discusses
  earlier works by Schröder, Jevons and Boole and extends Boole's approach to
  \MONE.  Behmann's method is based on a normalization: Assume that the
  predicates under consideration are partitioned into $\{p_1, \ldots, p_n\}$
  and $\{p\}$.  A \MON formula without individual constants over these
  predicates can be converted into a disjunction of conjunctions in which each
  conjunct has one of the following basic forms:
\begin{equation}
\label{eq-basic-aufloesungsproblem}
\begin{array}{l@{\hspace{0.5em}}l}
 \text{(a)} &
   \lnot \ex x\, (L_1[x] \land \ldots \land L_n[x] \land px),\\
 \text{(b)} &
   \lnot \ex x\, (L_1[x] \land \ldots \land L_n[x] \land \lnot px),\\
 \text{(c)} &
   \ex x\, (L_1[x] \land \ldots \land L_n[x] \land px),\\
 \text{(d)} &
   \ex x\, (L_1[x] \land \ldots \land L_n[x] \land \lnot px),
\end{array}
\end{equation}
where each $L_i[x]$ is either $p_ix$ or $\lnot p_ix$.  This form can be
obtained with the techniques used in \cite{beh:22} for elimination, followed
by conversion to the ``fully developed'' form, where each basic formula
contains exactly one literal with each predicate in $p_1, \ldots, p_n, p$. The
fully developed form can be achieved by rewriting with
\begin{equation}
\lnot \ex x\, F[x]
\; \equiv\; \lnot \ex x\, (F[x] \land qx)
        \land
        \lnot \ex x\, (F[x] \land \lnot qx)
\end{equation}
and
\begin{equation}
\ex x\, F[x] 
\; \equiv\; \ex x\, (F[x] \land qx)
        \lor
        \ex x\, (F[x] \land \lnot qx).
\end{equation}
Without loss of generality, it can be assumed that the conjunctions of
formulas of forms in (\ref{eq-basic-aufloesungsproblem}) do not contain
contradicting conjuncts. Such a conjunction then corresponds to a solution
set, represented by the mapping from the the $2^n$ formulas $L_1[x] \land
\ldots \land L_n[x]$, where each $L_i[x]$ is either $p_ix$ or $\lnot p_ix$ to
one of nine values depending on which of the forms (a)--(d) containing $L_1[x]
\land \ldots \land L_n[x]$ are present. For given $L_1[x] \land \ldots \land
L_n[x]$, there are nine such combinations of (a)--(d) whose conjunction is not
contradicting. The total solution is a set of such mappings, one for each
disjunct of the normalized formula.

\end{enumerate}

\sectionmark{Manuscripts and Other Archival Documents}
\section{Manuscripts and Other Archival Documents by Behmann Related to Second-Order Quantifier Elimination}
\label{sec-manuscripts}
\sectionmark{Manuscripts and Other Archival Documents}

This section provides a commented listing of the manuscripts and other
archival documents in Behmann's scientific bequest \cite{beh:nl} that are
related to second-order quantifier elimination, in chronological order.  Of
these, the manuscripts \mref{man:beh:21:ms:k9:37}, \mref{man:beh:23:ms:k9:45},
and \mref{man:beh:26:beziehungen:incomplete} were in the bequest originally
in the so-called ``brown box'' (\de{braune Box}), a cuboid cardboard folding
box covered with brown-black marbled paper with a handwritten note by Behmann
\name{``Important records for own lectures in Göttingen and Halle!
  H.~Beh\-mann''} (\deq{Wichtige Aufzeichnungen für eigene Vorlesungen in
  Göttingen und Halle!  H.~Behmann}) \cite[p.2, p.~78]{nlv} which included
what is now registered as Kasten~9, Einheit~1--45 and Kasten~10,
Einheit~46--91 of \cite{beh:nl}.

Aside of the individual manuscripts listed below, further documents that
concern the elimination and decision problem can be found in
\cite[Kasten~11]{beh:nl} which contains among other documents about 200 pages
of notes that are not registered in \cite{nlv}.  Most of these notes are
inscribed almost purely with formulas, occasionally with graphical
visualizations.  Some of them evidently concern the variant of Ackermann's
resolution-based elimination method as presented in 1934 by Behmann in
manuscript \mref{man:beh:34:k8} and letter \cref{corr:ba:1934:10:22}, which is
summarized in Sect.~\ref{sec-corr-elim-1934}. For others it may be conjectured
that they relate to the elimination and decision problem for relations and to
Skolemization.  A particular ordering of the notes is not immediate.  Some of
them bear explicit dates, including particular days in February, March and
April 1924, July 1926, March 1928 and August 1934 (some are just dated
\name{August 1934}, without specification of a day). The respective folders in
\cite[Kasten~11]{beh:nl} where they are contained are
\name{hsl. Aufzeichnungen mathematischer u. physikalischer Art} and
\name{Logik~I, Logik~II hsl. Aufzeichnungen}.

\begin{enumerate}[label=M\arabic*,ref=M\arabic*,leftmargin=0.81cm,itemsep=5pt,listparindent=\parindent]

\item \label{man:beh:21:ms:k9:37} \dename{Entscheidungsproblem und Algebra der
  Logik}. 1921.  In \cite[Kasten~9, Einheit 37]{beh:nl}, see
  \cite[p.~88]{nlv}.  Handwritten manuscript. 17~numbered pages.  Dated 10~May
  1921. Written in ink with red underlines and side notes, corrections and
  addenda in pencil, some in shorthand.  Starts after the title with \deq{Von
    Kronecker stammt, soviel ich weiß, ein Distichon, das ungefähr so
    lautet:}.

  \lskip
  Summarizes the material in \cite{beh:22}.  Seems to be the manuscript for
  the talk on 10 May 1921 at \de{Mathematische
    Gesellschaft in Göttingen} listed in \cite{beh:jahresbericht:30:2:47}
  as \de{Das Entscheidungsproblem der mathematischen Logik}.
  A transcript has been published along with an introduction and English
  translation as \cite{mancosu:zach:2015}.

\item \label{man:beh:22:axiomatik} \dename{Das Problem der Axiomatik vom
  Standpunkt der Algebra der Logik}.  1922. In \cite[Kasten~9, Einheit
  33]{beh:nl}, see \cite[p.~87]{nlv}.  Handwritten manu\-script. Dated 28
  November 1922. 4 pages on 1 folded double sheet.  Starts after the title
  with \deq{\emph{Bemerkungen} zu dem Vortrag von H. \emph{Neues und
      bemerkenswertes Problem} aufgeworfen. Allerdings recht speziell.}

\item \label{man:beh:21:carbon} \dename{Beiträge zur Algebra der Logik,
  insbesondere zum Entscheidungsproblem}. 1921. In \cite[Kasten~7,
  1922]{beh:nl}, see \cite[p.~61]{nlv}. Typescript (carbon copy) of the thesis
  of Behmann's Habilitation at Universität Göttingen with handwritten
  mathematical symbols and corrections that have been considered in the
  published version \cite{beh:22}. 79 numbered pages plus two initial pages
  for title and table of contents.

\item \label{man:beh:22:offprint} Author's offprint of \cite{beh:22}. 1922.  In
  \cite[Kasten~6, 1921.1]{beh:nl}, see \cite[p.~110]{nlv}.  With handwritten
  notes and corrections.  On top of the title page a handwritten dedication in
  English \textit{``With the author's compliments.''} and fragments of a
  postmark.

\item \label{man:beh:21:ms:k6:v23} \dename{Algebra der Logik und
  Entscheidungsproblem}. 1923?. In \cite[Kasten~6, V~23]{beh:nl}, see
  \cite[p.~53, also p.~44]{nlv}.  Handwritten manuscript. 39~pages in a
  checkered notebook containing also other texts on other topics.

  \lskip
  The notebook also contains various other long and short texts, some
  seemingly related to lectures given by Behmann, such as \de{Darstellende
    Geometrie}. Further texts include drafts for the letter dated 10 May 1923
  to Bertrand Russell \cite[Kasten~2, I~60]{beh:nl} and an excerpt from the
  article on \name{Calculating Machines} in \name{Encyclopedia
    Britannica}.\footnotemark

\urldef\urlbrittcm\url{http://en.wikisource.org/wiki/1911_Encyclop\%C3\%A6dia_Britannica/Calculating_Machines}

\footnotetext{The excerpt corresponds in the online edition of the 1911
  Encyclopedia Britannica to the span from \name{``Machines of far greater
    powers''} to \name{``published by his son, General Babbage.''}. See
  \urlbrittcm\ (accessed 17 August 2015).
  Behmann also annotates the author of the article as \name{``Henrici''}, with
  \name{``1792--1871''} added in pencil. In fact, the author was Olaus
  Magnus Friedrich Erdmann Henrici (1840--1918).}

\lskip
The manuscript \dename{Algebra der Logik und Entscheidungsproblem} starts
 after the title with \deq{Vortragstext (stellenweise weiter ausgeführt) bis
   [bei?] S.~4}.  A paragraph header \deq{Aufgabe der Vorlesung} on the first
 page suggest that this text is a draft manuscript for a lecture.

\item \label{man:beh:23:ms:k9:45} \dename{Algebra der Logik und
  Entscheidungsproblem}. 1923/1924.  In \cite[Kasten~9, Einheit 45]{beh:nl},
  see \cite[p.~91]{nlv}.  Handwritten manuscript. 60 numbered sheets, with
  p.~6, p.~32--35, p.~47, p.~50, p.~57 inscribed also recto. One additional
  sheet inserted after the page~1. Ink with pencil additions.  Dated
  \deq{W. S. 1923--24}. Starts after the title with \deq{Kronecker
    (Mathematiker), Distichon:}

  \lskip
  Seems to be the manuscript for Behmann's lecture with the same title listed
  in the register of lectures at Göttingen university for winter semester
  1923/24 (\de{Verzeichnis der Vorlesungen der Universität Göttingen
    Winterhalbjahr 1923/24}).\footnotemark

\footnotetext{Behmann regularly gave lectures in mathematics, 1921--26 as
  doctor in Göttingen and 1926--45 as professor in Halle. Only very few of
  these were on logic or related topics: WS 1921/22 Set Theory, SS 1922
  Mathematical Logic (Also for Non-Mathematicians), WS 1923/24 Algebra of
  Logic and Decision Problem, SS 1927 Mathematics and Logic.}

  \lskip
  In \cite{nlv} it is conjectured that the untitled 30-page document
  \cite[Kasten~10, Einheit 90]{beh:nl} might be a continuation of
  \mref{man:beh:23:ms:k9:45}. However, consideration of three-valued logic and
  notation of existential quantifiers as $\ex$ in \cite[Kasten~10, Einheit
    90]{beh:nl} suggests that it stems from a later period.


\item \label{man:beh:jahresbericht:32:2:66f}
  \dename{Entscheidungsproblem und Algebra der Logik}. 1923.  In
  \cite[Kasten~7, 1923]{beh:nl}, see \cite[p.~61]{nlv}. Typescript (carbon
  copy) with handwritten corrections. 3 pages. Manuscript for the talk
  abstract \cite{beh:jahresbericht:32:2:66f}.

\item \label{man:beh:26:beziehungen:draft:1} \dename{Entscheidungsproblem für
  Beziehungen}. 1926?.  In \cite[Kasten~8, 1925/ 26.1]{beh:nl}, see
  \cite[p.~58]{nlv}.  Handwritten draft. 4~pages. Starts after the title with
  \deq{Wir haben als Bestandteil einer Aussage den folgenden:}.

\item \label{man:beh:26:beziehungen:draft:2} \de{Entscheidungsproblem und Logik
  der Beziehungen}.  1926?. In \cite[Kasten~8, 1925/26.2]{beh:nl}, see
  \cite[p.~58]{nlv}. Handwritten draft. 4~pages.  Starts after the title with
  \deq{Wer über ein Problem der mathematischen Logik vorzutragen gedenkt}.

  \lskip
  Seems to be a draft for the talk on 23 September 1926 in Düsseldorf, whose
  abstract has been published as \cite{beh:26:beziehungen}.  The
  document~\mref{man:beh:26:beziehungen:incomplete} seems to be a later but
  incomplete version.

\item \label{man:beh:26:beziehungen:incomplete} \dename{Entscheidungsproblem und
  Logik der Beziehungen}.  1926?. In \cite[Kasten~10, Einheit 46]{beh:nl}, see
  \cite[p.~91, also p.~57]{nlv}. Handwritten manuscript. 11~pages (9~numbered
  sheets with text on verso only, 1 sheet with formulas inscribed verso and
  recto).  Filed together with an additional seemingly unrelated double sheet
  that is inscribed on 3 pages.  Starts after the title with \deq{Wer die
    Absicht hat, über ein Thema aus dem Gebiet der mathematischen Logik
    vorzutragen}.

  \lskip
  Seems to be a later but incomplete version of
  \mref{man:beh:26:beziehungen:draft:2}. The text on page~9 ends abruptly,
  where \mref{man:beh:26:beziehungen:draft:2} continues after the matching
  position.  As noted in Sect.~\ref{sec-corr} about \cref{corr:ba:1928:9:29},
  Behmann has sent in 1928 the technical second part of an elaborate
  transcript of his talk on 23~September 1926 in Düsseldorf to Ackermann,
  which suggests that this incomplete manuscript might be the first part of
  this transcript. The later manuscript~\mref{man:beh:34:k8} also includes a
  presentation of Behmann's work on elimination for relations around 1926.

\item \label{man:beh:26:beziehungen} \dename{Entscheidungsproblem und
  Logik der Beziehungen}. 1926. In \cite[Kasten~8, VII~01--VII~06]{beh:nl}.
  Not registered in \cite{nlv} (see \cite[p.~61]{nlv}). The document is the
  last one in the aforementioned folder.  Handwritten draft manuscript of the
  talk abstract published as \cite{beh:26:beziehungen}.  2~pages on a 4~page
  double sheet, along with a draft of a short letter from Rome, dated 23
  October 1926, to accompany the submission of the manuscript and parts of an
  unrelated manuscript. The draft letter is addressed to a professor, probably
  Bieberbach as editor, but does not contain any discussion regarding content
  (compare p.~\pageref{bieberbach}).

  \lskip The following two portions of the draft manuscript do not appear in
  the published version: (1.) \deq{[\ldots Elimination in der Tat vollziehen
      läßt.]  So hat das erste die bekannte Lösung $x(\overline{p})F_{xp}$
    bzw. $x\,F_{\behfalse x}\, F_{\behtrue x}$, während z.B. für das zweite
    [XXX] die Lösung
   \[x(\overline{p})\, y(\overline{q})\, F_{pq\,xy}\, .\,
   y(\overline{q})\, x(\overline{p})\, F_{pq\,xy}\] ermittelt wurde. [So lange
     \ldots]}.  (Transcript with the symbolic notation in the manuscript.
  \name{[XXX]} marks a word that could not be identified.)  (2.) \deq{[\ldots
      auch verwickeltere Verkettungen] dieser Operationen, insbesondere
    Zyklen, die etwa nicht mehr der Transitivität des Früher oder Später
    genügen, [zuzulassen\ldots].} Behmann uses in this text \deq{Dingoperator}
  for \name{quantified variable}, \deq{Begriffsoperator} for \name{quantified
    predicate} and \deq{Operation} for \name{quantifier}. In the published
  version, the overline indicating existential quantification has been
  erroneously omitted at some occurrences of $\varphi$: two occurrences in the
  displayed row of formulas in the center of p.~17 and in the first line of
  the last paragraph of p.~17. This manuscript is further discussed in
  Sect.~\ref{sec-corr-elim-1928a} and \ref{sec-corr-elim-1928b}.

\item \label{man:beh:34:k8} \dename{Ein wichtiger Fortschritt im
  Entscheidungsproblem der Mathematischen Logik (Ackermann Math. Annalen 110
  S.390)}.  1934.  In \cite[Kasten~8, 1934]{beh:nl}, see \cite[p.~59]{nlv}.
  Handwritten manuscript. Dated 14 December 1934. 21 numbered pages.  Written
  in ink with some variants and corrections added in pencil.

  \lskip
  For a summary, see Sect.~\ref{sec-corr-elim-1934}.  The technical part
  concerning the elimination problem for relations overlaps with
  \cref{corr:ba:1934:10:22}. A few phrases could be interpreted as suggesting
  that the manuscript was intended as basis for a talk (on p.~17: \deq{Ich
    möchte darauf verzichten, [\ldots] in der von ihm gegebenen Allgemeinheit
    vorzutragen}, on p.~19: \deq{Sie werden nun genauer wissen wollen}).

\end{enumerate}

\section{The Correspondence between Heinrich Behmann and Wilhelm Ackermann}
\label{sec-corr}

The complete correspondence between Heinrich Behmann and Wilhelm Ackermann, as
far as preserved in \cite[Kasten~1, I~01]{beh:nl}, see \cite[p.~3]{nlv}, is
listed here with English abstracts. Some technical content beyond second-order
quantifier elimination is also summarized, however, the discussion of
Behmann's ideas for the resolution of the paradoxes and related issues such as
restricted (\dename{limitierte}) variables and ultrafinite logic is only
briefly indicated here, since it would form a major topic on its own.

The letters by Behmann archived in \cite{beh:nl} are handwritten copies or
carbon copies of typescripts, respectively.  Seemingly, there is an erratum in
\cite{nlv}: Instead of the two letters from Ackermann to Behmann dated 29
October 1934 \cref{corr:ab:1934:10:29} and 9 January 1953
\cref{corr:ab:1953:1:9}, a single letter dated \name{29 October 1953}, not
present in the bequest, is listed there.

Upon request by Christian Thiel, Ackermann's son Hans-Richard Ackermann writes
on 23 August 1981 \cite[Kasten~12]{beh:nl} that, as far as Behmann is
concerned, in a first sighting of his father's correspondence, he could only
find a carbon copy of the letter to Behmann from 29 October 1934
\cref{corr:ab:1934:10:29}, whose original is already present in Behmann's
bequest.  In the selection from the correspondence of Wilhelm Ackermann
\cite{ackermann:briefwechsel} published in 1983 by Hans-Richard Ackermann, the
correspondence with Paul Bernays concerns the elimination problem (see
p.~\pageref{page-ackermann-bernays}).

For a general introduction to the correspondence between Behmann and Ackermann
as well as technical summaries of the content regarding second-order
quantifier elimination see Part~\ref{part-polyadic}.

\begin{enumerate}[label=L\arabic*,ref=L\arabic*,leftmargin=0.81cm,itemsep=5pt,listparindent=\parindent]

\item \label{corr:ab:1928:8:16}
Ackermann to Behmann, Lüdenscheid, 16 August 1928, handwritten, 2~pa\-ges.

\lskip
  See Sect.~\ref{sec-corr-elim-1928a}.

\item \label{corr:ba:1928:8:21} Behmann to Ackermann, Nieblum auf Föhr, 21
  August 1928, handwritten, 2~pages.  In addition to answering
  \cref{corr:ab:1928:8:16}, this letter refers to an unidentified preceding
  mail from Ackermann with the copy of the book \cite{hilbert:ackermann:28}.

\lskip
See Sect.~\ref{sec-corr-elim-1928a} for the content regarding elimination.
Behmann wishes Ackermann success on this way, emphasizing the importance of
the decision problem for strengthening the reputation of symbolic logic in
wider circles, mentioning Heinrich Scholz, who had to experience combating
against symbolic logic with unpleasant means. In addition, Behmann thanks
Ackermann for transmitting a copy of his book \cite{hilbert:ackermann:28} and
indicates that he does not agree with all details, in particular regarding the
Hilbert-Bernays symbolism, but admires it as a whole.

\item \label{corr:ab:1928:9:1}
Ackermann to Behmann, Münster i. W., 1 September 1928, handwritten, 2~pages.

\lskip
See Sect.~\ref{sec-corr-elim-1928a} for the content regarding elimination.
Ackermann further reports that he just sent off the corrections for his paper
\cite{ackermann:28:zaehlausdruecke} (decidability of the class called today
\name{Ackermann class}), motivated by the work of Bernays and Schönfinkel
\cite{bernays:schoenfinkel:28}, who, in Ackermann's view, take too much effort
for simple special cases. Ackermann requests more information about the
details in his book \cite{hilbert:ackermann:28} with which Behmann disagrees
(see \cref{corr:ba:1928:8:21}). In a possible second edition some changes
would be made. Among other things, Ackermann intends to include a presentation
of the results of Behmann's Habilitation thesis \cite{beh:22}.

\item \label{corr:ba:1928:9:29} Behmann to Ackermann, Halle (Saale), 29
  September 1928, typescript, 2~pages. The letter originally enclosed two
  manuscripts by Behmann.

\lskip
Behmann writes that he had not found the time to compile his results on the
decision problem in a well arranged way (as he had announced in
\cref{corr:ab:1928:9:1}, see also
p.~\pageref{page-beh-announce-to-ackermann}).  Thus he encloses the transcript
of his talk in Düsseldorf, which, he writes, is much more elaborated than the
talk itself could be. He sends only the part that would be interesting to
Ackermann, omitting the introductory sections.  What turned out incorrect is
with red pencil put into parentheses or struck out. Behmann writes that there
is no rush to return the transcript, if he would need it in foreseeable
time, he would give notice.

\lskip
So far, the part of the transcript sent to Ackermann could not be located in
Behmann's bequest.  Possibly \mref{man:beh:26:beziehungen:incomplete} is the
first part that retained with Behmann and
\mref{man:beh:26:beziehungen:draft:2} is an early version.

\lskip
Another topic discussed at length in the letter is Behmann's work on the
resolution of the paradoxes.  He encloses a carbon copy of a manuscript about
this, which he had sent to Hilbert a few days before.\footnotemark\
Behmann concludes the letter with announcing that he will sent his comments on
the details in the book \cite{hilbert:ackermann:28} (see
\cref{corr:ab:1928:9:1}) as soon as he has looked through it for these.

\footnotetext{There are just two letters from Behmann to Hilbert in Behmann's
  bequest \cite{beh:nl}, carbon copies of typescripts dated 18~September 1928
  and 25~September 1928 \cite[Kasten I~32]{beh:nl}, both of them with
  technical discussions of paradoxes, the latter with corrections to the
  first.  Behmann also sent these on 29~September 1928 to Frank P. Ramsey,
  who gave a detailed reply on 4~October~1928, leading to a further letter by
  Behmann dated 9~October and by Ramsey dated 16~October \cite[I~56]{beh:nl}.
  Behmann refers briefly to that correspondence in
  \cite[p.~220]{beh:37:paradoxes}.}

\item \label{corr:ab:1928:11:1}
 Ackermann to Behmann, Münster i. W., 1 November 1928, handwritten, 4~pages.

\lskip
See Sect.~\ref{sec-corr-elim-1928b} for the content regarding elimination.
Ackermann further considers the question of what formulas can be proven and
what can not be proven from the axioms on p.~53 in
\cite{hilbert:ackermann:28}.  He sketches an idea for a method to verify
formulas that are valid for all domains which operates by successively
strengthening formulas by melting quantifiers until their matrix is
propositionally valid. He gives two examples, presented here in tabular form
and modern notation:
 
\begin{equation}
\begin{array}{r@{\hspace{0.5em}}ll}
(1.) & & \all x \ex y\, (fxx \lor \lnot fxy)\\
(2.) & \entailedby & \all x\, (fxx \lor \lnot fxx)\\
(3.) & \equiv & \true.
\end{array}
\end{equation}

\smallskip

\begin{equation}
\label{eq-ack-instance-method}
\begin{array}{r@{\hspace{0.5em}}ll}
(1.) & & \ex x \all y\, (fxx \lor \lnot fyy \lor fxy)\\
(2.) & \equiv & 
       \ex x \all y\, (fxx \lor \lnot fyy \lor fxy) \lor
    \ex u \all v\, (fuu \lor \lnot fvv \lor fuv)\\
(3.) & \equiv &  \ex x \all y \ex u \all v\, 
(fxx \lor \lnot fyy \lor fxy \lor
fuu \lor \lnot fvv \lor fuv)\\
(4.) & \entailedby & \ex x \all y \all v\, 
(fxx \lor \lnot fyy \lor fxy \lor
   fyy \lor \lnot fvv \lor fyv)\\
(5.) & \equiv & \true.
\end{array}
\end{equation}
Step~(2.) of (\ref{eq-ack-instance-method}) is formed by disjoining (called
\deq{multiplizieren} by Ackermann) the given expression with itself. Step~(4.)
is obtained by melting $\all y$ and $\ex u$, that is, the $\ex u$
is ``sucked up'' by the preceding universal quantifier.  The difficulty is,
according to Ackermann, that an expression has to be first disjoined with
itself a finite number of times before the finite number of possible variable
meltings can be investigated, and the number of disjoined copies has to be set
in advance.  Ackermann notes that the method is certainly insufficient for
formulas with quantifier prefixes \[\all x_1 \ldots \all x_n \ex y_1
\ldots \ex y_m\] and
\[\all x_1 \ldots \all x_n \ex y \ex z_1 \ldots \ex z_m,\] but
that so-far he could not provide a general proof.

\lskip
The letter further discusses Behmann's idea about the resolution of the
paradoxes (see \cref{corr:ba:1928:9:29}). Ackermann writes that some issues
did not become fully clear to him and gives an example to clarify.

\item \label{corr:ba:1934:10:22} Behmann to Ackermann, Halle (Saale), 22
  October 1934, typescript with handwritten insertions at formulas, 3~pages.
  Behmann sends at the same time some offprints.  Reference to an
  unidentified previous mail in which Ackermann had sent an offprint of the
  paper \cite{ackermann:35} to Behmann.

  \lskip
  See Sect.~\ref{sec-corr-elim-1934} for the content regarding elimination.
  Behmann gives some remarks on notation and presentation: He suggests to
  replace the confusing terms \deq{Summe} and \deq{Produkt} by
  \deq{Konjunktion} and \deq{Disjunktion} and recommends to call $f, g, h, k$
  constants instead of variables, referring to \cite[footnote~25,
    p.~196]{beh:22} (see also Sect.~\ref{sec-vars-constants}).  In a
  postscript he proposes to proceed with simplifying Hilbert's symbolism by
  removing \emph{everything} that is dispensable until the differences to
  Behmann's notation are no longer worth mentioning.  In particular, the large
  extent of the formulas and the clutter of the many parentheses (\deq{das
    Gestrüpp der vielen Klammern}) are considered by Behmann as quite
  perturbing at practical work.

  \lskip
  Behmann writes that references to sources, wherever appropriate, would be
  desirable, giving \cite[512--516]{schroeder:3} for p.~412 (which actually
  concerns Skolemization) as an example. In particular, he does not see to
  what extent Ackermann has made explicit use of Behmann's earlier
  communications (in relation to his talk in Düsseldorf), or came to the same
  findings already independently.

  \lskip
  Behmann concludes with asking Ackermann to send an offprint of
  \cref{corr:ba:1934:10:22} to Boskovitz.\footnotemark

  \footnotetext{Alfred Boskovitz, see footnote~\ref{footnote-boskovitz} on
   p.~\pageref{footnote-boskovitz}.}

  \lskip
  This letter is the only one in the Behmann-Ackermann correspondence in
  \cite{beh:nl} that ends with a Nazi salutation, \de{mit deutschem Gruß},
  common in Germany at that time for official letters \cite{ehlers:nazisalut}.

\item \label{corr:ab:1934:10:29} Ackermann to Behmann, Burgsteinfurt, 29
  October 1934, typescript with handwritten formulas, 5~pages.

  \lskip  
  See Sect.~\ref{sec-corr-elim-1934} and also \ref{sec-corr-elim-1928b} for
  the content regarding elimination.  Ackermann replies to Behmann's points of
  criticism, kindly remarking that, as can already be seen at the formulas in
  the letter, he has adopted Behmann's notation for the universal and
  existential quantifiers for his own practical work, and referring to
  Behmann's small book \cref{corr:ab:1934:10:29} as a rich source in this
  respect.\footnotemark\ However, he considered himself as bound to the
  notation he and Hilbert used in their introduction to logic
  \cite{hilbert:ackermann:28}.

\footnotetext{\de{Was die Symbolik anbetrifft, so sehen Sie schon an den
    hingeschriebenen Formeln, dass ich mir für meine praktischen Arbeiten auch
    Ihre Schreibweise für das All- und das Seinszeichen angeeignet habe, wie
    überhaupt Ihr Büchlein über ``Mathematik und Logik'' eine wahre Fundgrube in
    dieser Beziehung ist.}}

  \lskip
  Schröder's idea underlying the \deq{Belegungsfunktionen} (known today as
  Skolem functions) has, according to Ackermann, become such common knowledge
  of logicians that in his view there would be no need to cite Schröder.  As
  an example he refers to Skolem's use and simplified notation in 1920. He
  writes that he had investigated the decision problem and elimination problem
  in the case where Skolem functions do occur already in 1925: he found an
  excerpt of \cite{schroeder:3} from that time where he has proven for nine of
  ten resultants that have in part just been conjectured by Schröder and
  brought somehow into connection with Peirce (Ackermann is only in possession
  of this excerpt of Schröder's book) the correctness of Schröder's
  conjectures.  The method is the same as Example~(26) in \cite{ackermann:35},
  which actually belongs to the mentioned ones by Schröder. Thus, Ackermann
  believes to be independent of Behmann's communications in his remarks in
  Section~6 of \cite{ackermann:35} (where existential quantifiers are
  discussed), the only section to which Behmann's comment in
  \cref{corr:ba:1934:10:22} might refer. Ackermann notes that he obviously
  could have mentioned Behmann's talk in Düsseldorf and their correspondence
  that followed. He writes that the large temporal distance might be an excuse
  and announces to make up for it at the next occasion.

  \lskip
  Skolem functions are no longer considered as an advantage by Ackermann (see
  Sect.~\ref{sec-ackermann-switching}).  His alternative approach is
  elaborated later in \cite{ackermann:35:arity} (also see
  Sect.~\ref{sec-ackermann-switching}), where, in a footnote he explains --
  related to the suggestions by Behmann in \cref{corr:ba:1934:10:22} -- the
  history of Skolemization and mentions that Behmann has brought the
  advantages of the introduction of Skolem functions for the elimination
  problem to attention already in his talk at the
  \dename{Mathematikerversammlung} in Düsseldorf 1926.

  \lskip
  Concerning Behmann's ultrafinite logic (seemingly the topic on an offprint
  or manuscript sent by Behmann with \cref{corr:ba:1934:10:22}), Ackermann
  urges him to elaborate a precise formulation and brings the evidently
  similar theory by Church \cite{church:postulates:1,church:postulates:2} to
  attention.
  
  \lskip
  Ackermann confirms that he had sent an offprint of \cite{ackermann:35} to
  Boskovitz (see \cref{corr:ba:1934:10:22}).  In a postcard to Behmann dated
  17 June 1935 \cite[Kasten~1, I~08]{beh:nl}, Boskovitz writes that he
  received the offprint in fall and thanks Behmann, conjecturing that the
  address and probably the idea are due to him.  However, technical aspects of
  \cite{ackermann:35} are not discussed by Boskovitz.

  \lskip
   The letter concludes with Ackermann remarking that he has heard from Arnold
   Schmidt\footnotemark\ that Behmann had been on the congress in Pyrmont
   (\dename{Jahrestagung der Deutschen Mathematiker-Vereinigung 1934}) and
   thus regrets to have dropped his original intention to also go there.

  \footnotetext{Hermann Arnold Schmidt (1902--1967), German mathematician.}

\item \label{corr:ab:1953:1:9} Ackermann to Behmann, Lüdenscheid, 9 January
  1953, typescript, 1~page.  Reference to an unidentified previous postcard
  and mail by Behmann, where Behmann sent a copy of his treatise
  \deq{Deskription und limitierte Variable} \cite{beh:52:deskription} and
  requested the address of Sören Halldén\footnotemark.  
  
  \lskip
  Ackermann thanks Behmann for his treatise, and writes that it had interested
  him very much -- as well as a continuation (\de{Weiterführung}) would do.
  He cites it later in \cite{ackermann:58:typenfrei} with just \name{1944 and
    extended 1952} as bibliographic details. He does not know the full address
  of Sören Halldén but says that mails with just \dename{Universität Uppsala}
  as address arrived. As American experts that would be interested in
  \cite{beh:52:deskription}, he mentions Church and Quine.

  \footnotetext{Sören Halldén (1923--2010), Swedish philosopher and logician.}

\item \label{corr:ab:1954:10:22} Ackermann to Behmann, Lüdenscheid, 22 October
  1954, handwritten, 2~pa\-ges.  Reference to a previous mail by Behmann with
  copies of the treatises \dename{Zur Technik des Schließens und Beweisens}
  \cite{beh:54:schliessen} and \dename{Ein neuer Vorschlag für eine
    einheitliche logische Symbolik} \cite{beh:54:symbolik}.

  \lskip Ackermann discusses the standardization of logic symbolism,
  referencing also to Behmann's earlier works on the subject, one from 1935
  and \cite{beh:37:symbolik}. He suggests that Behmann should submit
  \cite{beh:54:symbolik} to the Journal of Symbolic Logic.

\item Behmann to Ackermann, Bremen-Aumund, 6 October 1955, typescript,
  2~pages. Carbon copies of a draft manuscript \name{Die Besetzungskette und
    der widerspruchsfreie Prädikatenkalkül} are originally enclosed.

 \lskip Planning to submit a revision of his manuscript to the Journal of
 Symbolic Logic (which was published there indeed as \cite{beh:59:limitierte}),
 Behmann asks Ackermann for his judgment regarding content as well as form,
 and to forward copies to Hermes, Scholz and probably
 Hasenjaeger.\footnotemark\ Behmann intends to visit Münster for a week and
 give a talk.  He stresses the importance of the problem of a
 contradiction-free predicate calculus.  He would have also sent copies of his
 talk \cite{beh:55:paradox}, which he had submitted but not given, but the
 proceedings did not yet appear. Finally, he asks Ackermann again about his
 judgment on the previously sent treatise \cite{beh:54:schliessen}.

\footnotetext{In 1955 Heinrich Scholz (1884--1956) was professor emeritus,
  Hans Hermes (1912--2003) and Gisbert Hasenjaeger (1919--2006) were
  professors, and Ackermann was honorary professor at \dename{Westfälische
    Wilhelms-Universität Münster} with its renowned \dename{Institut für
    mathematische Logik und Grundlagenforschung} founded in 1950 by Scholz and
  lead after his retirement 1953--66 by Hermes.}

\item \label{corr:ab:1955:10:27}
Ackermann to Behmann, Lüdenscheid, 27 October 1955, typescript, 3~pages.

\lskip
Ackermann discusses Behmann's manuscript \name{Die Besetzungskette und der
  widerspruchsfreie Prädikatenkalkül} sent with \cref{corr:ab:1954:10:22}.  He
has not yet talked with the \de{Herren} in Münster about it, since he will be
there again only in about 10 days, when the lecture period begins.  Behmann's
ideas in \cite{beh:31:widersprueche} have always interested him very much, but
they did not have received greater attention because so far they have just
been program and it did not came to a precise calculus based on them. Although
in the recent manuscript the ideas are explicated further in some respects,
the situation has not much changed.  Ackermann continues to describe the
conditions that must be satisfied to allow an exact discussion about the
ideas: (1) An overview on all used symbols -- which would be easy to satisfy;
(2) An exact specification of the combinations of symbols that should be
considered as meaningful; (3) The formal rules and base forms that should be
used.  He refers to his publication~\cite{ackermann:41:typenfrei}, where he
pointed out the difficulties that arise if one wants to pass from Behmann's
ideas to a precise calculus and mentions that Behmann has never commented on
them.  It does not suffice to show that the paradoxes in the known way are
avoided, but has to be ensured that they do not appear in modified
form. Ackermann brings the related work by Church
\cite{church:postulates:1,church:postulates:2} again (see
\cref{corr:ab:1934:10:29} ) to attention, which later showed to be
contradictory by the paradox discovered by Kleene and Rosser. Ackermann
proceeds with an example intended to be helpful for pointing out the
difficulties that can arise if there is no precise version of all inference
rules available.  He concludes with noting that in the referenced work by
Church there is already a restriction operator (\de{Limitator}), at least in
connection with the universal quantifier, asking for clarification of the
exact correspondence of Behmann's notation for this case with Church's and
Ackermann's own in the mentioned work, which follows Church's.

\item \label{corr:ba:1955:12:6} Behmann to Ackermann, Bremen-Aumund, 6
  December 1955, typescript, 6~pages.  Carbon copies of two revised manuscript
  sections were originally enclosed.

\lskip
Thanking Ackermann for his letter \cref{corr:ab:1955:10:27}, Behmann writes
that he has incorporated some of the addressed points essentially in form of
explanatory footnotes. He then discusses the three particular points (1)--(3).
About (2) he mentions that he succeeded to bring substitution algorithm
(\de{Einsetzungsalorithmus}) and at the same time and in immediate
relationship the criterion for meaningfulness (\de{Sinnhaftigkeitskriterium})
into a much simpler and now completely determined form (\de{zwangsläufige
  Gestalt}). Accordingly, he changed the subject of his work to \de{Der
  Algorithmus der Einsetzung und der widerspruchsfreie Prädikatenkalkül}.
Concerning~(3) he is reluctant to fix a thorough axiomatization at that point,
because he does not consider axiomatic representation as helpful for getting a
first understanding of a subject. For him, a strong axiomatic pre-load
(\de{Vorbelastung}) was always a obstacle to get deeper into the meaning and
essence of a subject.  He had studied \cite{ackermann:41:typenfrei} at its
time, but the difficult outer and personal circumstances as well as the wish
to not just criticize but also give a positive image had prevented that he
commented it.  Ackermann's use of restriction differs from his, in particular
because the former is through the addition of provability loaded with
modality.  Behmann conjectures a circularity, if the concept of provability is
already presupposed at the introduction of the junction. He further expresses
doubts that simple applications such as syllogisms based on empirical facts
can be expressed with Ackermann's provability-based notion.  Behmann explains
the differences between Ackermann's and his standpoint: His own proceeding is
crucially determined by starting from plain propositional calculus, with
implication in the Stoic, or Fregean, resp., sense, without modal or proof
technical pre-load. Only after its fixation quantification is added, then
modality in form of modal operators, and finally meta logic with inclusion of
proof theory.  For a presentation of the principles of this building-up, he
refers to \cite{beh:55:paradox}, still waiting for the proceedings to appear.
He emphasizes that his notion of meaningfulness (\de{Sinnhaftigkeit}) is
unrelated to decidability as provability or refutability (\de{Entscheidbarkeit
  (als Beweisbarkeit oder Widerlegbarkeit)}).  In \cite[p.~80]{halden} he
found an expression that corresponds to the translation of the restriction
operator suggested by Ackermann for his and Church's notation.  Further
discussions concern the inclusion of the third truth value \name{meaningless}
(\de{sinnlos}) into the calculus. Behmann mentions that he intends to include
in his paper also a form of Russell's paradox that was earlier communicated by
Ackermann. He can now present its resolution much shorter and clearer than in
\cite{beh:53:ausschaltung}.  Enclosing carbon copies of new Sections 6 and 7
of his paper, he asks Ackermann for judgment and transferal to his associates
in Münster. He intends to send copies of the two sections as well as a copy of
the present letter directly to Scholz.

\end{enumerate}

\sectionmark{Discussions in Behmann's Further Correspondence}

\section{Discussions Related to Second-Order Quantifier Elimination
in Behmann's Further Correspondence}
\label{sec-corr-further}

\sectionmark{Discussions in Behmann's Further Correspondence}

In this section, further discussions in Behmann's correspondence -- beyond
that with Ackermann -- with immediate relationship to his work on the decision
and elimination problem are summarized. They are presented chronologically,
headed by the respective correspondence partners, Bertrand Russell, Rudolf
Carnap, Heinrich Scholz and Alonzo Church.

\subsection{Bertrand Russell}

Behmann writes on 8 August 1922 in English to Bertrand Russell \cite[Kasten~2,
  I~60]{beh:nl}, sending him an offprint of \cite{beh:22}.  He sketches the
role of Russell's work with respect to his dissertation, whose theme was
proposed by Hilbert. Leaving this subject, Behmann continues: \enq{A I already
  remarked, my article in the Mathematische Annalen follows another way. Not
  withstanding my statement of its purpose and character in $\S~1$ I beg leave
  for a few additional words. It was what I call the Problem of Decision,
  formulated in the said paragraph, that induced me to study the logical work
  of Schröder.  And I soon recognized that in order to solve my particular
  problem, it was necessary first to settle the main Problem of Schröder's
  Calculus of Regions, his so-called Problem of Elimination. And -- here I
  quote a sentence from a lecture of mine held before the Göttinger
  Mathematische Gesellschaft -- ``I believe it to be a very lucky circumstance
  that now an opportunity presents itself to embrace the earlier investigation
  relevant to that topic, questionable, especially as regards the form of
  presentation, and of difficult access as they are, under a new, uniform, and
  valuable point of view, thus saving from oblivion a great deal of profound
  and hard work of thought.'' -- Indeed, the chief merit of the said problem
  is, I daresay, due to the fact that it is a problem of fundamental
  importance on its own account, and, unlike the application of earlier
  Algebra of Logic, not at all imagined for the purpose of symbolic treatment,
  whereas, on the other hand, the only means of any account from its solution
  are exactly those of Symbolic Logic. -- But I wish to avoid anticipation.}
(Parts of this letter are also quoted in \cite{mancosu:behmann:99}).

On 16 September 1922 Bertrand Russell \cite[Kasten~2, I~60]{beh:nl} replies
that, so far, he not have had time to read Behmann's paper carefully, but
notes \enq{I see that you have a symbolism very admirably adapted to your
  subject}. He brings Sheffer's article \cite{sheffer:13} to attention and
explains how \enq{the logic of propositions can be developed with only
  \emph{incompatibility} (not-p or not-q), instead of both negation +
  disjunction.} (that is, Sheffer's stroke \name{NAND}) or equally well
\enq{(NOT-p and NOT-q)} (\name{NOR}).

\subsection{Rudolf Carnap}

In his postcard dated 20 November 1922 \cite[Kasten~1, I~10]{beh:nl} Carnap
thanks Behmann for sending him an offprint of \cite{beh:22}. He writes that it
did interest him very much and asks Behmann for sending a copy to
Gerhards\footnote{Karl Gerhards (1888--1957), German philosopher,
  mathematician and physicist.}.

In a letter dated 19 February 1924 \cite[Kasten~1, I~10]{beh:nl}, Carnap
relates Behmann's work on the decision problem for relations to Behmann's
refusal to contribute to the development of a symbolism that is based on
Russell's. Carnap writes (in translation): \enq{If you really succeed to solve
  the decision problem so far that it also includes the of theory of relations
  (\de{Beziehungslehre}), and in particular also that which uses constant
  non-logical relations, this would be a very lucky and valuable progress and
  extraordinarily useful for my works.  However, I believe that the
  development until practical applicability will still need a lot of
  time. Thus, \emph{for the time being} I still want to stick to
  Russell's symbolism.} Carnap can understand that Behmann, being now busy
with the extension of his own symbolism, is not inclined to contribute to the
appearance of the other symbolism.

\subsection{Heinrich Scholz}
\label{sec-letter-scholz}

As discussed in Sect.~\ref{sec-contrib-decision} and
\ref{sec-contrib-problem-of-soqe}, Behmann explains in his letter from 27
December 1927 \cite[Kasten 3, I~63]{beh:nl} to Heinrich Scholz his
contribution to the decision and elimination problem.  Here is an excerpt of
the respective parts of the original German letter:

\label{page-scholz-lengthy}
\de{Bezüglich des Entscheidungsproblems ist zu unterscheiden zwischen der
  Entscheidung innerhalb der rein aussagenlogischen -- die nur Begriffe der
  linken Seite meiner Tabelle enthalten -- und im Gesamtbereich der
  Begriffslogik. Wo die beiden Lösungen des elementaren Problems, die
  Verfahrend der Einsetzung und der konjunktiven Normalform, zuerst erwähnt
  und systematisch dargestellt worden sind, ist mir nicht bekannt. Vielleicht
  kommt hier Whiteheads ``Universal Algebra'' in Frage.  Nach meiner Meinung
  führt die elementare Aussagenlogik so zwangsläufig auf dieses Problem, dass
  seine Aufweisung und Erledigung wohl mit der ersten strengen Darstellung
  jener überhaupt zeitlich zusammenfallen wird. So weit ich mich erinnere,
  habe ich das Verfahren der Normalform -- das andere steht ja in den PM --
  durch Hilbert, der sich ja schon vor dem Kriege selbständig mit dem Problem
  der symbolischen Darstellung der Logik, und zwar insbesondere der
  Aussagenlogik, beschäftigte, kennen gelernt und weiss gerade aus diesem
  Grunde hinsichtlich der Literatur dieses Punktes keine sichere Auskunft zu
  geben.}

\de{Was nun das allgemeinere Entscheidungsproblem betrifft -- nicht zu
  verwechseln mit dem z.B. von Hessenberg in einer seiner Schriften in den
  Abh. d. Friesschen Sch. besprochenen Entscheidbarkeitsproblem (im
  Zusammenhang mit dem Paradoxon von Richard) --, so ist dieses
  \emph{ausdrücklich} als solches meinen Wissens vor mir von niemandem
  behandelt worden. Wohl aber ist unabhängig von dieser Fragestellung oder
  doch zum mindesten ohne ihre ausdrückliche Erwähnung eine gewisse
  Teilaufgabe, das sogenannte Eliminationsproblem, das übrigens zugleich einem
  passend eingeschränkten Entscheidungsproblem äquivalent ist, behandelt
  worden, und zwar zunächst von den Amerikanern, insbesondere Peirce, und
  hierauf mit besonderer Liebe und Ausdauer von Schröder, und schliesslich hat
  es einen Spezialisten in Löwenheim gefunden, der in den Math. Annalen
  darüber mehrere Abhandlungen schrieb. Die wichtigste von ihnen ist ``Ueber
  Möglichkeiten im Relativkalkül''; sie erschien, glaube ich, einige Jahre vor
  dem Kriege. L. ist dort schon zu wesentlichen Teilergebnissen meiner
  Abhandlung über das Entscheidungsproblem gelangt, allerdings -- in einer
  weder mathematisch strengen noch hinreichend verständlichen Darstellung, so
  dass ich selber diese erst nach dem Erscheinen meiner eigenen Schrift
  richtig herausgelesen habe. Dazu gebraucht er eine ziemlich sonderbare
  Terminologie; was ich ``Aussage des Bereiches A'' nenne, heisst bei ihm
  ``Zählgleichung'', usw. Wie ich es mir erkläre, dass man zu dieser
  Fragestellung gelangen und dabei an dem Entscheidungsproblem so völlig
  vorbeigehen konnte, habe ich in meiner Schrift über das Entscheidungsproblem
  S. 218 -- 19 auseinandergesetzt.}

\subsection{Alonzo Church}
\label{sec-letters-church}

In his letter dated 15 April 1937 to Alonzo Church \cite[Kasten~1,
  I~11]{beh:nl}, Behmann comments seemingly an offprint sent earlier by
Church, where a system by Quine is discussed.  Behmann states that the idea as
such to combine (\de{zusammenfassen}) several argument to complexes that can
then be handled like uniform (\de{einheitliche}) arguments is rather obvious.
He had applied it himself in \cite{beh:26:beziehungen} (see
Sect.~\ref{sec-corr-elim-1928a}) to reduce the general elimination problem --
as preliminary step (\de{Vorsutufe}) to the decision problem -- for the case
of universal individual quantifiers onto the sequence (\ref{eq-sequence-sol}),
p.~\pageref{eq-sequence-sol}. However, without making the principles of the
reduction explicit. He considers the idea to use this possiblity already for
building-up propositional and predicate logic as new and valuable.  But, he
continues, it needs to be done in way such that natural associativity is
preserved; Quine's scruples on this should be countered by a reasonable
formulation of the logical rules. (Behmann also remarks that the statement in
\cite{beh:26:beziehungen} that elimination can be performed in general for that
case does not hold.)

Church answers on 20 May 1937 \cite[Kasten~1, I~11]{beh:nl}: \enq{I believe
  that I am in substantial agreement with what you say about Quine. There is
  only one remark which it occurs to me to make. He does not introduce a law
  of associativity such as you suggest. With him the ordered pair $(x,y)$ is
  introduced as an undefined concept; $(x,(y,z))$ is not the same as
  $((x,y),z)$; the ordered triad $(x,y,z)$ is defined to be $((x,y),z)$, the
  ordered tetrad $(x,y,z,t)$ to be $(((x,y),z),t)$, and so on. This seems to
  me satisfactory; from certain points of view one could even go as far as to
  say that an associative law would be undesirable.}

In a much later letter from Behmann to Church, dated 30 January 1959, where
the main topics are the relationship of \cite{beh:59:limitierte} to lambda
conversion, as well as, possibilities to simplify propositional formulas,
Behmann sketches his method in \cite{beh:61:vereinfachungsproblem}: As first
step there, by means of simple transformation rules -- of the kind of
Gentzen's \de{Mischungsregel}, which, Behmann writes, he as already stated in
1922 (i.e. in \cite{beh:22}) -- the totality of prime implicants is obtained
in a deterministic way (\de{zwangsläufig gewonnen}). Also the name
\name{innex} form is suggested in this letter for the result form of his
method in \cite{beh:22}, where quantifiers are propagated inwards (see
p.~\pageref{foot-innex}, footnote~\ref{foot-innex}).

\section{Documents that could not be Located}

The following documents would be relevant for the consideration of
second-order elimination and are referenced in the Behmann's bequest, but, so
far could not be located:

\begin{enumerate}
\item The technical part of Behmann's transcript of his talk
  \de{Entscheidungsproblem und Logik der Beziehungen} on 23 September 1926 in
  Düsseldorf, which he sent in 1928 to Ackermann.  See notes on
  \mref{man:beh:26:beziehungen:incomplete} and \cref{corr:ba:1928:9:29}.

\item A contribution by Behmann on the solution and elimination problem
  (\de{zum Problemkreis Auflösungs- und Eliminationsproblem}) in a memorial
  publication (\de{Festschrift}) for Ernst Schröder, mentioned in 1942 in the
  correspondence with Scholz.
\end{enumerate}

\noindent
Two issues where, so far, no further relevant documents could be found in the
bequest are normalization with respect to given predicates, suggested in
\cite[p.~201]{beh:22} (see Sect.~\ref{sec-nf-predicate}) and the actual
influence of Löwenheim's work \cite{loewenheim:15} and possibly Skolem's
papers \cite{skolem:19,skolem:20} (with exception of Behmann's 1927 letter to
Scholz see Sect.~\ref{sec-contrib-decision}, \ref{sec-contrib-problem-of-soqe}
and \ref{sec-letter-scholz}).

\part{Conclusion}
\label{part-conclusion}

\section{Concluding Remarks}

The early results on the decision problem by Löwenheim, Skolem and Behmann
1915--22 included the identification of relational monadic formulas as
decidable fragment of first-order logic, the development of a decision
procedure for this fragment, as well as the first explicit statement of the
decision problem itself by Behmann 1921.  
As became evident from our detailed inspection of Behmann's Habilitation
thesis from 1922, where he made precise and extended earlier work by Schröder,
these early investigations of the decision problem were closely tied to the
problem of second-order quantifier elimination. The basic connection there is
that a relational monadic formula can be decided by eliminating all its
predicates one by one.

The decision problem has been much researched since then, with several
monographs
\cite{dc:54:ackermann,dc:59:suranyi,dreben:goldfarb,classical:decision},
where in particular the last cited one gives a comprehensive survey, also on
the history and literature. Despite many ``theoretical'' results, the
development of calculi that decide specific classes is still an active issue
of research in automated deduction.

The elimination problem was brought to larger attention in computational logic
in the 1990s with the development of two algorithms, the resolution-based SCAN
\cite{scan} and DLS \cite{dls:early}. The latter initiated the so-called
\name{Ackermann approach}, which is, like Behmann's method, based on
equivalence preserving formula rewriting. While DLS explicitly involves an
essential idea from Ackermann's 1935 paper \cite{ackermann:35}, SCAN is in
part a re-discovery of another technique from that paper
\cite{nonnengart:elim:1999}.  Early considered applications of second-order
quantifier elimination in computational logic were the computation of
first-order correspondence properties of modal formulas
\cite{scan,sqema,schmidt:2012:ackermann} and non-monotonic reasoning by
computing circumscription \cite{dls}, which can be extended to model various
semantics for logic programming \cite{cw-logprog-short} and to the computation
of abductive explanations \cite{kakas-kowalski-toni} with respect to these
semantics \cite{cw-abduction}.  Since the mid 2000s, variants of second-order
quantifier elimination like uniform interpolation, forgetting and projection
receive great interest as operations for the processing of description logic
knowledge bases, e.g. \cite{dl-conservative,lutz:ijcai11,ludwig:dl14}.  A
number of specialized elimination methods for particular description logics
have been developed, where advanced ones explicitly relate to SCAN and the
Ackermann approach \cite{ks:2013:frocos}.  Another current activity related to
second-order quantifier elimination is in SAT solving the investigation of
formula simplifications that perform elimination of Boolean variables in
restricted ways \cite{biere:resolve2004}.

In contrast to the decision problem, there is only a single monograph
\cite{soqe} on the elimination problem, with focus on modern
developments. Some early and fundamental results such as the success of
elimination on relational monadic formulas and Behmann's rewriting-based
method as a blueprint of the more advanced DLS apparently came to surface
again only recently \cite{cw-relmon}. As shown there, one can even today learn
from Behmann's method possible improvements of modern methods such as DLS, the
success on relational monadic formulas emerges as a useful completeness
property of elimination methods, and on a closer look the seemingly not very
expressive class of relational monadic formulas (\name{Klassenlogik}, as it
was called at Behmann's time), when considered together with second-order
quantification, shows interesting relationships to description logics, the
logics regarded today as adequate to represent concept (or \name{class})
ontologies.

Behmann's primary concern was the decision problem.  In his correspondence
with Ackermann it can be observed that he talks about the decision problem,
even if the discussed method actually performs elimination, whereas Ackermann
often speaks about the elimination problem.  The decision problem was first
stated explicitly by Behmann, the elimination problem, in contrast, was also
investigated earlier by others, in particular Schröder and Löwenheim.
Nevertheless, the methods developed by Behmann decide formulas by performing
elimination.  He was dissatisfied with later decidability results by Bernays,
Ackermann, Schönfinkel and Schütte that reside on satisfiability for domains
with finite cardinalities determined from syntactic structure, since these
results only lead to primitive methods, not suited for practical application.
Thus, Behmann's main concern was actually not just the decision problem: it
was the problem of finding \emph{practically applicable decision methods}.

This places his work into the context of computer science.  He used a
relatively small set of essentially syntactic tools, which let his methods
smoothly fit into modern computational logic: Rewriting formulas such that
equivalence is preserved or to entailed formulas.  Moving arguments inward and
outward of terms. Adding auxiliary definitions.  Distribution among
connectives.  Propagating quantifiers inward and outward.  Various ways of
normalization, including generalizations of disjunctive and conjunctive normal
form and Boolean combinations of specific basic forms.  Skolemization and
un-Skolemization.  And, as outlined in a manuscript and the correspondence
with Ackermann, representing infinite sets of formulas by schematic formulas
with superimposed graph structure. In view of his later works, quantifier
restriction and variants of lambda conversion have to be added.

Some of the material in this report should be useful for further
investigations in the technical-historical realm, such as the clarification of
the exact relationship between the works of Löwenheim \cite{loewenheim:15},
Skolem \cite{skolem:19,skolem:20} and Behmann \cite{beh:22}, or an in-depth
investigation of Ackermann's resolution-based method for second-order
quantifier elimination in \cite{ackermann:35} in presence of the related
unpublished works by Behmann summarized here, related work by Craig
\cite{craig:bases}, modern re\-solu\-tion-based elimination methods
\cite{scan,scan:complete} and fixpoint techniques as used to specify logic
programming semantics \cite{emden:kowalski} and in elimination
\cite{nonnengart:fixpoint}.

\bigskip
\noindent
\textbf{Acknowledgements. }
This work was supported by DFG grant~WE~5641/1-1.

\newpage

\clearpage
\addtocontents{toc}{\protect{\vspace*{5.4ex}}}
\phantomsection
\addcontentsline{toc}{section}{References}

\begingroup
\setlength{\emergencystretch}{2em}
\printbibliography[heading=mybibheading]
\endgroup

\end{document}